\documentclass[12pt]{article}

\usepackage{natbib}            % \citet, \citep, etc.
\usepackage{breakcites}        % allow line breaks in long citations
\usepackage[T1]{fontenc} 
\usepackage{lmodern, cmap}

\usepackage{mathtools}         % loads amsmath
\usepackage{amssymb,amsfonts,amsthm}
\allowdisplaybreaks

\usepackage[bb=libus]{mathalpha}

\usepackage{graphicx}
\usepackage{epstopdf}
\DeclareGraphicsExtensions{.pdf,.png,.jpg,.jpeg}

\usepackage{float}
\usepackage{caption}
\usepackage[caption=false]{subfig} % keep if you use \subfloat

\usepackage{tikz}
\usetikzlibrary{arrows,petri,topaths}
\usetikzlibrary{patterns}

\usepackage{pgfplots}
\pgfplotsset{compat=1.18}
\usepgfplotslibrary{fillbetween}

\usepackage{tkz-graph}
\usepackage{tkz-berge}

\usepackage{setspace}
\usepackage{indentfirst}

\usepackage{versionPO}
\excludeversion{notes}   % Include notes?
\includeversion{links}   % Turn hyperlinks on?

\ifnotes{%
  \usepackage[paperwidth=10in,top=1in,bottom=1in,left=1in,right=2.5in]{geometry}%
  \usepackage[textwidth=1.4in,shadow,colorinlistoftodos]{todonotes}%
}{%
  \usepackage[paperwidth=9.75in,top=1in,bottom=1in,left=1.25in,right=1.25in]{geometry}%
}

\usepackage{enumitem}
\usepackage{rotating}
\usepackage{fancyvrb}
\usepackage{pdfpages}
\usepackage{datetime}
\newdateformat{mydate}{\monthname[\THEMONTH] \THEYEAR}

\usepackage{marginnote}
\usepackage{url}
\usepackage{accents}
\usepackage{xparse}
\usepackage{letltxmacro}
\usepackage{nicefrac}
\usepackage{faktor}
\usepackage{wasysym}
\usepackage{etoolbox}   % needed for \patchcmd
\usepackage{chngcntr}
\usepackage{apptools}

\usepackage[hidelinks]{hyperref}
\iflinks{}{\hypersetup{draft=true}}% <-- no blank lines inside \iflinks args

\let\oldmarginpar\marginpar
\ifnotes{%
  \makeatletter\let\chapter\@undefined\makeatother%
  \newcommand{\smalltodo}[2][]{\todo[caption={#2}, size=\scriptsize, fancyline, #1]{\begin{spacing}{.5}#2\end{spacing}}}%
  \newcommand{\rhs}[2][]{\smalltodo[color=green!30,#1]{{\bf RS:} #2}}%
  \newcommand{\rhsnolist}[2][]{\smalltodo[nolist,color=green!30,#1]{{\bf RS:} #2}}%
  \newcommand{\rhsfn}[2][]{\renewcommand{\marginpar}{\marginnote}\smalltodo[color=green!30,#1]{{\bf RS:} #2}\renewcommand{\marginpar}{\oldmarginpar}}%
}{%
  \newcommand{\smalltodo}[2][]{}%
  \newcommand{\rhs}[2][]{}%
  \newcommand{\rhsnolist}[2][]{}%
  \newcommand{\rhsfn}[2][]{}%
}

\newcounter{footequation}
\renewcommand{\thefootequation}{F.\arabic{footequation}}

\makeatletter
\LetLtxMacro\latex@@footnote\footnote
\RenewDocumentCommand{\footnote}{om}{%
  \begingroup
  \let\c@equation\c@footequation
  \let\theequation\thefootequation
  \IfValueTF{#1}{\latex@@footnote[#1]{#2}}{\latex@@footnote{#2}}%
  \endgroup
}
\makeatother

\newtheorem{theorem}{Theorem}[section]

\newtheorem{corollary}[theorem]{Corollary}

\newtheorem{definition}{Definition}
\newtheorem{example}[theorem]{Example}

\newtheorem{lemma}{Lemma}[section]

\newtheorem{proposition}[theorem]{Proposition}
\newtheorem{remark}[theorem]{Remark}

\newtheorem{assumption}{Assumption}

\AtAppendix{\counterwithin{lemma}{section}}
\AtAppendix{\counterwithin{assumption}{section}}

\makeatletter
\patchcmd{\env@cases}{1.2}{0.72}{}{}  % adjust cases spacing
\makeatother

\newcommand{\R}{\mathbb{R}}

\newcommand{\E}{\mathbb{E}}

\DeclareMathOperator*{\argmax}{argmax}

\DeclareMathOperator*{\linspan}{span}

\DeclareMathOperator{\ran}{ran}
\DeclareMathOperator{\Var}{Var}
\DeclareMathOperator{\Cov}{Cov}

\newcommand{\abs}[1]{\left|#1\right|}
\newcommand{\norm}[1]{\left\lVert#1\right\rVert}

\mathchardef\mhyphen="2D

\makeatletter
\def\overUnderArrow{\@ifnextchar[\overUnderArrow@i{\overUnderArrow@i[]}}
\def\overUnderArrow@i[#1]#2#3{%
\ifx\relax#1\relax\array[b]{c}\overset{\text{#2}}{\uparrow}\\#3\endarray
\else\ifx\relax#2\relax
\array[t]{c}#3\\\underset{\text{#1}}{\downarrow}\endarray
\else
\array{c}\overset{\text{#2}}{\uparrow}\\#3\\\underset{\text{#1}}{\downarrow}\endarray
\fi\fi}
\makeatother

\setlist[description]{topsep=5pt,itemsep=5pt,font={\slshape\normalfont}}

\begin{document}

\setlist{noitemsep} % Reduce space between list items (itemize, enumerate, etc.)
\onehalfspacing % Use 1.5 spacing
% Use endnotes instead of footnotes - redefine \footnote command

%\doublespacing % Double space

\title{An Infinite-Dimensional Insider Trading Game} 

%\iffalse

\author{Christian Keller \and Michael C. Tseng\thanks{Christian Keller is from the Department of Mathematics, University of Central Florida.
Michael Tseng is from the Department of Economics, University of Central Florida.
The research of Christian Keller was supported in part by NSF grant DMS-2106077.
Corresponding author: Michael Tseng, E-mail: michael.tseng@ucf.edu. Address: 
Department of Economics,
College of Business,
University of Central Florida,
12744 Pegasus Dr., Orlando, Florida 32816-1400.
}
}

% Create title page with no page number

%\fi

\date{
%{\footnotesize {\sl Latest Version Available \Googlesite}} 
\mydate\today
}

\renewcommand{\thefootnote}{\fnsymbol{footnote}}

\singlespacing

\maketitle

\vspace{0.5cm}

\rule{\linewidth}{1pt} %\noindent \rule{6.5in}{1pt}

%\vspace{-.2in}
{\flushleft \textbf{Abstract}}

We generalize the seminal framework of \cite{kyle1985continuous} to a many-asset setting, bridging the gap between informed-trading theory and modern trading practices.
Specifically, we formulate an infinite-dimensional Bayesian trading game in which the informed trader's private information may concern arbitrary aspects of the cross-sectional payoff structure across a continuum of traded assets.
In this general setting, we obtain a parsimonious equilibrium characterized by a single scalar fixed point, which yields closed-form characterizations of the equilibrium trading strategy, price impact within and across markets, and the information efficiency of equilibrium prices.

\vspace{0.5cm}

\noindent 

\medskip
\vspace{0.5cm}
\noindent \textit{MSC Classification}: 60L20; 62C10; 91A10; 91G20.\\
%\noindent \textit{JEL Classification}: C61; G13; G14.

\medskip
\noindent \textit{Keywords}: Infinite-Dimensional Kyle; Bayesian Game; Arrow--Debreu; Options Trading; Straddle.

\thispagestyle{empty}

\clearpage

\onehalfspacing
\setcounter{footnote}{0}
\renewcommand{\thefootnote}{\arabic{footnote}}
\setcounter{page}{1}

\section{Introduction}

\cite{kyle1985continuous} introduced the canonical model of strategic informed trading and remains a benchmark in market microstructure theory.
A large subsequent literature has refined this paradigm along many mathematically and economically important dimensions---including risk aversion and inventory considerations (e.g., \cite{cho2003continuous,ccetin2016markovian,ekren2025stochastic,back2020optimaltransport}), 
alternative specifications of liquidity demand and noise trading (\cite{biagini2012insider}), departures from Gaussian fundamentals and linear pricing (e.g., \cite{cho2000insider}), dependence between liquidity demand and fundamentals (\cite{aase2012partially}), as well as
features such as limit orders and stochastic liquidity demand (e.g., \cite{rochet1994insider,collin2016insider}).
Despite this breadth, a common restriction persists across much of the theory: private information is modeled as low-dimensional, and the informed trader trades a single security (or, at most, a small finite collection under strong parametric assumptions), 
even in multi-insider and multi-asset environments (e.g., \cite{caballe1994imperfect,foster1996strategic,back2000imperfect}).
By contrast, modern trading routinely spans a large universe of correlated assets and derivatives, where payoff heterogeneity is intrinsically high-dimensional---indeed, effectively infinite-dimensional in the presence of option surfaces---and cross-asset inference is central.

In this paper, we address this gap between theory and practice by developing a general many-asset Kyle game with essentially \emph{arbitrary} payoff heterogeneity across assets and signals (subject only to technical conditions).
This generality makes our model applicable across a wide range of trading environments and asset classes.
Despite the model's infinite-dimensional primitives, we show that it admits a parsimonious equilibrium pinned down by a single scalar fixed point.
Within this equilibrium, we characterize the informed trader's strategy, price impact both within and across markets, and the information efficiency of equilibrium prices.

For concreteness, we interpret the traded assets primarily as state-contingent claims on future states---Arrow--Debreu securities in the sense of \cite{arrow1954existence}. 
This interpretation is purely expositional: our analysis applies equally under any alternative reading of the model’s assets. 
Viewed through the Arrow--Debreu lens, our contingent claims subsume, in the usual economic sense, the familiar contingent-claim payoffs of standard derivative and asset-pricing frameworks (e.g., \cite{black1973pricing}).
In particular, the classical \cite{breeden1978prices} formula yields an option-market specialization, in which Arrow--Debreu securities are recast as a complete menu of European options.

In mathematical finance terminology, the Arrow--Debreu state-price density is the (discounted) risk-neutral density (see, e.g., \cite{duffie2001dynamic}).
The \cite{breeden1978prices} formula recovers this density from option prices.
Actual option strike grids may therefore be regarded as practical discretizations of this stylized continuum.
Indeed, widely traded synthetic derivatives such as the Cboe VIX, Cboe SKEW, and OTC variance swaps are explicitly constructed via such discretizations 
(see \url{https://www.cboe.com/tradable_products/vix/}). 
Accordingly, the Arrow--Debreu reading aligns the model with the options---more generally, derivatives---viewpoint.

Under this option-market translation, our results consolidate and sharpen several themes from the empirical options literature and yield new data-facing implications.
Moreover, the equilibrium informed trading strategy recovers option-trading strategies that are widely used in practice but lie outside the scope of existing informed-trading models.
We discuss these finance implications---ranging from cross-strike price discovery to mechanisms generating the volatility smile---in the companion paper \cite{kellertseng}, which foregrounds the economic results in a discrete formulation designed to speak directly to the large empirical options literature.
In this paper, we undertake the task of placing this general trading environment---formulated in the mathematically natural continuum setting---and its equilibrium outcome on a rigorous footing.

At the level of generality we consider, the trading game is inherently infinite-dimensional.
The informed trader faces an infinite-dimensional portfolio choice problem: he chooses an unrestricted portfolio of traded assets---mathematically, an arbitrary (H\"older) function on the state space.
The market maker, in turn, faces an infinite-dimensional Bayesian inference problem: given his conjecture about the informed trader's strategy, he forms a posterior over the trader's private signal from the realized order-flow sample path across assets.
The resulting posterior mean induces his (zero-profit, under the conjecture) pricing kernel, which maps each realized order-flow sample path to the associated schedule of prices across assets---a generally nonlinear mapping between the relevant function spaces.

Equilibrium is therefore \emph{a priori} an infinite-dimensional fixed point that couples these two problems through a mutual consistency requirement.
Such a fixed-point problem is not amenable to the standard methods used in finite-dimensional Kyle-type settings (and, more generally, in finite-dimensional Bayesian games). A tractable equilibrium analysis therefore requires new techniques.

Our analysis addresses several attendant issues that arise in this infinite-dimensional Bayesian game. For orientation, we touch upon two here.
The first is unavoidable at the outset and must be resolved before the remaining analysis can proceed cleanly.
The market maker's Bayesian inference requires---under his conjecture about the insider's strategy---an exponential likelihood involving the stochastic integral of the conjectured drift against the realized order-flow sample path.
The classical It\^{o} integral, however, is not defined pathwise, and therefore cannot be used to define the likelihood (and hence the posterior).
We address this by using a pathwise stochastic integral from rough paths theory (see, e.g., \cite{lyons2007differential}, \cite{Hairer2014}, and \cite{friz2020course}), which yields a well-defined likelihood for each realized 
order-flow path.
This pathwise formulation, in turn, allows us to identify a (function-valued) sufficient statistic for the market maker's inference.

The second issue is conceptually central. Although the primitives of the game are infinite-dimensional, we show that its equilibrium problem admits a sharp reduction. Using the (noise-adjusted) asset-payoff covariance kernel and the 
induced reproducing kernel Hilbert space, we ``whiten'' the sufficient statistic above and pass to an isomorphic \emph{canonical game} that is invariant to the payoff specification across assets and signals, as well as to cross-asset heterogeneity in noise-trading intensities. 
In this canonical game, payoffs reduce to evaluation at the realized signal. This canonical reduction yields a particularly tractable equilibrium, pinned down by a scalar fixed point.

The rest of the paper is organized as follows.
Section~\ref{sec: model outline} presents the model.
Section~\ref{sec: Market Maker's Pricing Kernel} develops the market maker's inference problem and the induced pricing kernel.
Section~\ref{sec: Informed Trader's Problem} analyzes the informed trader's portfolio choice.
Section~\ref{sec: finite dim reduction} carries out the reduction to an isomorphic canonical game.
Section~\ref{sec: symmetric equilibrium} constructs equilibrium.
Section~\ref{sec: discussion} discusses price discovery implications for contingent claims, including informed demand, price impact, and the information efficiency of prices.
Section~\ref{sec: conclusion} concludes.

\section{Model}
\label{sec: model outline}

For clarity of exposition while maintaining full precision, we present the model in two steps.
In Section~\ref{sec: trading game}, we first describe the economic structure of the trading game: the players, their information, actions, and timing.
Section~\ref{sec: assumptions} then states the precise mathematical assumptions on the economic primitives---the underlying probabilistic setup and the regularity assumptions on the payoffs, order flow, and strategies.
Section~\ref{SS:AnOptionsFormulation} describes the options specialization of the model.

\subsection{The Trading Game}
\label{sec: trading game}

There are two risk-neutral agents---the insider and the market maker.
There is a risk-free asset in perfectly elastic supply at a risk-free rate of zero, i.e., agents can borrow and lend without restrictions at this normalized rate---a standard frictionless benchmark; see \cite[Ch.~6]{bodie2021investments}.

At $t=0$, the insider observes a signal $s$ that informs him of the probability distribution over the possible $t=1$ states of the world.
The set of $t=1$ states is indexed by a compact interval $[\underaccent{\bar}{x}, \bar{x}]$ (compactness is imposed only for expositional convenience; our results extend to unbounded index sets under appropriate integrability conditions). 
The market maker has a Bayesian prior over the possible signals. At $t=0$, there is a complete market of Arrow--Debreu (AD) securities for $t=1$ states (extensions to incomplete markets are immediate, for instance by restricting trade to claims that are measurable 
with respect to a given partition of the state space; we focus on the complete-market case here).

After observing his private signal at $t=0$, the insider submits his demand for AD securities to maximize his expected utility at $t=1$.
The market maker receives the combined order flow of the insider and noise traders across AD markets and executes the orders at his zero-profit prices.

The possible signals lie in a Borel probability space $(S, \pi_0)$, where the probability measure $\pi_0(ds)$ is the market maker's prior.       
Conditional on a signal $s \in S$, the probability distribution over the $t=1$ states is specified by a density $\eta(\,\cdot\, ,s ) \colon [\underaccent{\bar}{x}, \bar{x}] \rightarrow \mathbb{R}$.
After observing $s$, the insider 
chooses a portfolio $W(\,\cdot\, , s) \colon [\underaccent{\bar}{x}, \bar{x}] \rightarrow \mathbb{R}$, where $W(x , s)$ is the insider's order
for the state $x$ security.

Noise trader orders are assumed to be normally distributed with mean zero in each market and uncorrelated across markets. 
Over the continuum, this means that the noise trades across an (infinitesimal) increment of states $[x, x \! + \! dx]$ follow $\sigma(x)dB_x$, where $(B_x)$ is a standard Brownian motion over state $x$, and $\sigma(x)$ is the noise trading intensity at $x$. 
This is analogous to assuming the noise trades follow $dB_t$ across the time interval $ [t, t \! + \! dt]$ in a dynamic setting; see \cite{kyle1985continuous} and \cite{back1993asymmetric}.

The cumulative combined order flow received by the market maker is then a sample path $\omega$ of the stochastic process $(Y_x)$ over state $x$ specified by
\begin{equation}
\label{eqn: dY_x}
\underbrace{dY_x}_{\substack{\text{combined order}}} = \;\; \underbrace{W(x,s)dx}_{\substack{\text{insider order}}} \;\; + \; \underbrace{\sigma(x) dB_x}_{\substack{\text{noise order}}}\!\!.
\end{equation}
For each $x$, $Y_x =\int_{\underaccent{\bar}{x}}^x dY_{x'}$ is the cumulative combined order flow over the AD markets $[\underaccent{\bar}{x}, x]$. 
Equivalently, the market maker receives the combined orders for each market $x \in [\underaccent{\bar}{x}, \bar{x}]$.

The market maker has a belief $\widetilde{W}(\,\cdot\, , \,\cdot\, ) \colon [\underaccent{\bar}{x}, \bar{x}] \times S \rightarrow \mathbb{R}$ regarding the insider's trading strategy.
After receiving order flow $\omega$, the market maker updates his prior $\pi_0(ds)$ based on his belief $\widetilde{W}$ to the posterior $\pi_1(ds, \omega \kern 0.045em ; \widetilde{W})$ regarding 
the insider's signal. 
His zero-profit prices for the AD securities are their expected payoffs conditional on $\omega$ per his belief,
\[
\underbrace{P(x, \omega \kern 0.045em ; \widetilde{W})}_\text{security $x$ price} = \int_S \! \eta(x, s) \pi_1(ds, \omega \kern 0.045em ; \widetilde{W}), \,\, x \in [\underaccent{\bar}{x}, \bar{x}].
\]
In other words, these are the market maker's competitive break-even prices under his belief.

Conditional on observing $s$ and given market maker belief $\widetilde{W}(\,\cdot\, , \,\cdot\,)$, the insider's AD portfolio choice problem is
\begin{equation}
\label{eqn: informed trader's problem informal}
\max_{ W(\,\cdot\,)} \mathbb{E}^{\mathbb{P}_{\scriptscriptstyle W}} [ \int_{\underaccent{\bar}{x}}^{\bar{x}} \!\! ( \eta(x,s) -  P(x, \omega \kern 0.045em ; \widetilde{W}) ) \cdot W(x) dx ] \equiv  \max\limits_{ {\scriptscriptstyle W(\cdot) } } J(W; \widetilde{W}, s) 
\end{equation}
where the expectation $\mathbb{E}^{\mathbb{P}_{\scriptscriptstyle W}}[\,\cdot\,]$ is taken over order flow $\omega$  
under its probability law $\mathbb{P}_{\!\scriptscriptstyle W}$ of \eqref{eqn: dY_x} induced by insider portfolio choice $W(\,\cdot\,)$.
The functional $J(\, \cdot \, ; \widetilde{W}, s)$ defined in \eqref{eqn: informed trader's problem informal} is the insider's expected utility functional conditional on $s$ and given market maker belief $\widetilde{W}$.

\begin{remark}
\label{rmk: AD interpretation not nec}
As noted in the introduction, the model is, at its core, a general many-asset framework for strategic informed trading, allowing essentially arbitrary specification $\eta(\,\cdot\,, \,\cdot\,)$ of payoff heterogeneity across assets and signals, together 
with heterogeneous noise-trading intensities $\sigma(\,\cdot\,)$, subject only to the technical assumptions in Section~\ref{sec: assumptions}. 
Although we focus on the empirically salient Arrow--Debreu interpretation, our results do not depend on this interpretation. 
They apply under any alternative view of the assets in the model.
\end{remark}

\subsection{Mathematical Assumptions}
\label{sec: assumptions}

Here we state the standing mathematical assumptions for the model.
The possible signals lie in a Borel probability space $(S,\pi_0)$, where $\pi_0(ds)$ is the market maker's prior.
The possible order-flow realizations are elements of the measurable space $(\Omega,\mathcal{F})$, where
\[
\Omega \coloneqq C([\underaccent{\bar}{x},\bar{x}],\R),
\qquad
\mathcal{F} \text{ is the Borel $\sigma$-field induced by } \|\cdot\|_\infty.
\]
We write $\omega_x$ for the canonical coordinate process on $\Omega$.
Let $\mathbb{P}_{\!\scriptscriptstyle 0}$ denote the probability measure on $(\Omega,\mathcal{F})$ under which the canonical process
$x\mapsto \omega_x$ has the same law as the Gaussian process $(\sigma(x)B_x)_{x\in[\underaccent{\bar}{x},\bar{x}]}$
(cf.\ the order-flow specification~\eqref{eqn: dY_x}).

For $\alpha\in(0,1]$, we recall that the H\"older space $C^{\alpha}([\underaccent{\bar}{x},\bar{x}],\R)$ consists of those
$f\in C([\underaccent{\bar}{x},\bar{x}],\R)$ with finite H\"older seminorm
\[
[f]_\alpha \coloneqq \sup_{x\neq y}\frac{|f(x)-f(y)|}{|x-y|^\alpha},
\]
equipped with the usual norm $\|f\|_\alpha \coloneqq \|f\|_\infty + [f]_\alpha$.

Fix $\gamma\in(\tfrac13,\tfrac12)$ and $\delta\in(0,1]$ with $\delta+\gamma>1$.
Define
\[
\Omega_\gamma
\coloneqq
\bigl\{\omega\in\Omega:\ \omega(0)=0 \ \text{and}\ \omega\in C^\gamma([\underaccent{\bar}{x},\bar{x}],\R)\bigr\},
\qquad
\mathcal{F}_\gamma \coloneqq \{A\cap \Omega_\gamma:\ A\in\mathcal{F}\},
\]
and let $\mathbb{P}_{\!\scriptscriptstyle 0,\gamma}$ be the restriction of $\mathbb{P}_{\!\scriptscriptstyle 0}$ to
$(\Omega_\gamma,\mathcal{F}_\gamma)$.

\begin{assumption}
\label{assumption: MM's posterior}
$\,$

\nopagebreak
(i)
For every $s\in S$, the insider's AD portfolio $W(\,\cdot\,,s)\colon[\underaccent{\bar}{x},\bar{x}]\to\R$
lies in $C^{\delta}([\underaccent{\bar}{x},\bar{x}],\R)$.

(ii)
We restrict attention to order-flow realizations $\omega\in\Omega_\gamma$. Equivalently, we work on the
measurable space $(\Omega_\gamma,\mathcal{F}_\gamma)$.

(iii)
The map $s\mapsto W(\,\cdot\,,s)$ is continuous as a map $S\to C^{\delta}([\underaccent{\bar}{x},\bar{x}],\R)$, where the topology on $S$ is the one generating its Borel $\sigma$-field on $S$.

(iv)
The squared noise-trading intensity $\sigma^2(\,\cdot\,)$ across AD markets lies in
$C^{\delta}([\underaccent{\bar}{x},\bar{x}],\R)$ and satisfies $\sigma^2(x)>0$ for all
$x\in[\underaccent{\bar}{x},\bar{x}]$.
\end{assumption}

Assumption~\ref{assumption: MM's posterior}(i), together with $\delta+\gamma>1$, ensures that the integrand $W(\,\cdot\,,s)$ and the order-flow path $\omega$ have sufficient pathwise H\"older regularity for the 
integral $\int \cdot\, d\omega_x$ appearing in the likelihood and posterior (and hence in the pricing kernel) to be well-defined (see Theorem~\ref{thm: Market Maker's Posterior}).
Assumption~\ref{assumption: MM's posterior}(ii) restricts attention to $\gamma$-H\"older sample paths. This is without loss of generality under the reference measure $\mathbb{P}_{\!\scriptscriptstyle 0}$: since $\Omega_\gamma\in\mathcal{F}$ and Brownian sample paths 
are almost surely H\"older continuous of every exponent $<\tfrac12$, we have $\mathbb{P}_{\!\scriptscriptstyle 0}(\Omega_\gamma)=1$.
Assumption~\ref{assumption: MM's posterior}(iii) is a mild regularity condition on how the insider’s strategy depends on the signal, ensuring in particular that the relevant objects in the market maker’s updating vary measurably (indeed, continuously) in $s$.
Finally, Assumption~\ref{assumption: MM's posterior}(iv) imposes regularity and nondegeneracy of the noise variance across Arrow--Debreu markets---if noise were degenerate in some markets, then order flow there would be perfectly revealing and trading would collapse to a zero-sum game, 
leading to a degenerate equilibrium.

\paragraph{Equilibrium Definition}

In equilibrium, the optimal trading strategy of the insider, given the market maker's pricing kernel $P(\,\cdot\, , \,\cdot\, ; W^*)$ based on the latter's 
belief $W^*(\,\cdot\, , \,\cdot\,)$, coincides with $W^*(\,\cdot\, , \,\cdot\,)$.
In other words, conditional on observing each $s$, the insider's optimal portfolio is $W^*(\,\cdot\,,s)$, thereby confirming the market maker's belief, 
in the sense of the standard \emph{perfect Bayesian equilibrium} for incomplete-information games; see \cite[Ch.~28]{watson2013strategy}.

\begin{definition}
\label{def: equilibrium}
A \textbf{(perfect Bayesian) equilibrium} in our model is an admissible trading strategy
$W^* \colon [\underaccent{\bar}{x}, \bar{x}] \times S \to \mathbb{R}$
such that, for every signal $s \in S$,
\[
W^*(\cdot,s)\in \operatorname*{arg\,max}_{W(\cdot)}\, J\bigl(W;W^*,s\bigr),
\]
where $J(\,\cdot\,; W^*, s)$ is the insider's expected utility functional defined in \eqref{eqn: informed trader's problem informal}.

\end{definition}

%%%%%%%%%%%%%%%%%%%

\paragraph{Single-Asset Case (\cite{kyle1985continuous})}
If the traded-asset index set $[\underaccent{\bar}{x},\bar{x}]$ collapses to a singleton $\{x_0\}$, then the payoff specification across assets $\eta(\,\cdot\,,s)$ reduces to the scalar payoff
$v \coloneqq \eta(x_0,s)$
of the one traded asset. The market maker’s prior on $s$ therefore induces a prior distribution for $v$, and an admissible insider strategy is simply a measurable map $W\colon \mathbb{R}\to\mathbb{R}$, where $W(v)$ is the insider’s order when he observes that the asset payoff is $v$.
The market maker receives total order flow
$
\omega = W(v) + \varepsilon,
$
where $\varepsilon$ denotes noise order, and quotes the zero-profit price
\[
P(\omega; \widetilde{W})=\mathbb E\!\left[v \mid \omega \,;\, \widetilde{W}\right],
\]
given his belief $\widetilde{W}(\,\cdot\,)$ about the insider’s strategy.

In this case, Definition~\ref{def: equilibrium} reduces to the following: for each $v\in\mathbb R$,
\[
W^*(v)\in \arg\max_{w\in\mathbb R}\; \mathbb E\!\left[(v-P(w+\varepsilon; W^*))\,w \,\big|\, v \right],
\]
which is the familiar single-asset equilibrium condition of \cite[Def.~1]{kyle1985continuous}. %(in particular, see his Eq.~(2.2)).

Under the Gaussian assumption $v\stackrel{d}{\sim}\mathcal N(v_0,\sigma_v^2)$ and
$\varepsilon\stackrel{d}{\sim}\mathcal N(0,\sigma_\varepsilon^2)$, we have Kyle's linear equilibrium 
\begin{equation}
\label{eqn: Kyle equil}
W^*(v) = \beta\,(v-v_0),
\qquad
P(\omega) = v_0+\lambda\,\omega,
\end{equation}
where $\beta = \sigma_{\varepsilon}/\sigma_v$ and $\lambda = \sigma_v/(2\,\sigma_{\varepsilon})$ is the (linear) price-impact coefficient (often called \emph{Kyle's lambda} in the finance literature).

\subsection{An Options Specialization}
\label{SS:AnOptionsFormulation}

Our model admits an options specialization when the state is the terminal price $X$ of an underlying asset, taking values in $[\underaccent{\bar}{x},\bar x]$.
Assume that, at $t=0$, there are markets in the risk-free asset, the underlying, and a menu of European puts and calls maturing at $t=1$, with strikes $K\in[\underaccent{\bar}{x},\bar x]$ and payoffs $(X-K)_-$ and $(X-K)_+$, respectively.
Let $K_0$ denote the market maker’s prior mean of $X$.

Under the usual regularity (ensuring existence of a distributional second derivative), we have the classical~\cite{breeden1978prices} formula
\begin{align}
\label{eqn: Breeden-Litzenberger}
W(x, s)
&= W(K_0, s) + W'(K_0, s)(x - K_0)
 + \int_{\underaccent{\bar}{x}}^{K_0}  W''(K, s)\,(x-K)_-\,dK
 + \int_{K_0}^{\bar{x}} W''(K, s)\,(x-K)_+\,dK .
\end{align}
Thus, the Arrow--Debreu position $W(\,\cdot\,,s)$ can be replicated in the standard way by a portfolio consisting of a bond position $W(K_0,s)-K_0W'(K_0,s)$, $W'(K_0,s)$ units of the underlying, and a signed option-holding density $K\mapsto W''(K,s)$ (puts for $K<K_0$, calls for $K>K_0$).

In this specialization, the insider's private signal $s$ can encode essentially arbitrary information about the distribution of the terminal underlying price $X$---for instance, information about volatility, skewness, or tail risk.
He then submits an order $W'(K_0,s)$ in the underlying and an option-strip order $K \mapsto W''(K,s)$ across strike $K$.

The market maker receives aggregate (insider plus noise) order flow in the underlying and in the option strip. The underlying order flow is
\[
\omega_{\scriptscriptstyle a}=W'(K_0,s)+\varepsilon_{\scriptscriptstyle a},
\qquad 
\varepsilon_{\scriptscriptstyle a}\stackrel{d}{\sim}\mathcal N(0,\sigma_{\scriptscriptstyle a}^2).
\]
For the option strip, noise order flow follows a Gaussian process with independent increments across strikes. Accordingly, the option order flow admits the infinitesimal representation
\[
dY_{K}=W''(K,s)\,dK+\sigma(K)\,dB_{K},
\]
where $(B_{K})_{K\in[\underaccent{\bar}{x},\bar x]}$ is a Brownian motion indexed by the strike parameter $K$, and $\sigma(K)$ denotes the noise trading intensity at strike $K$.

The remainder of the trading game proceeds exactly as in the Arrow--Debreu formulation. 
Conversely, the results we obtain in the Arrow--Debreu formulation specialize directly to this setting, in which information about the underlying is traded through options. 
We comment on a few illustrative empirical and practical implications of this specialization in Section~\ref{sec: discussion}.

%%%%%%%%%%%%%%

\section{The Market Maker's Inference}
\label{sec: Market Maker's Pricing Kernel}

\subsection{Bayes' Rule}
\label{sec: Bayes' Rule}

We begin with a formal derivation of the market maker's Bayesian update; in the next subsection we make the argument rigorous by replacing the stochastic integral with a pathwise construction.

The realized order flow $\omega$ is an element of the path space $\Omega := C([\underaccent{\bar}{x}, \bar{x}], \R)$. 
Given the market maker's conjecture $\widetilde{W}(\,\cdot\,,\cdot)$ for the insider's trading strategy, conditional on a signal $s$ the received order flow is a sample path $\omega\in\Omega$ of the process
\begin{equation}
\label{eqn: dY s x}
dY_x = \widetilde{W}(x,s)\,dx + \sigma(x)\,dB_x .
\end{equation}

The market maker then applies Bayes' rule. The latent parameter is the signal $s$ (equivalently, the drift profile $\widetilde{W}(\,\cdot\,,s)$), with prior $\pi_0(ds)$, and the observation is the realized order-flow path $\omega$.
Let $\mathbb{P}_{\scriptscriptstyle \widetilde{W}(\,\cdot\,,s)}$ and $\mathbb{P}_{\! \scriptscriptstyle 0}$ denote the probability measures on $(\Omega,\mathcal{F})$ corresponding to the dynamics \eqref{eqn: dY s x} and to the zero-drift case $dY_x=\sigma(x)\,dB_x$, respectively.
Thus, $\mathbb{P}_{\scriptscriptstyle \widetilde{W}(\,\cdot\,,s)}$ is the law of order flow under signal $s$, while $\mathbb{P}_{\! \scriptscriptstyle 0}$ serves as a reference (noise-only) law.

Formally, the conditional likelihood of $\omega$ under $s$ (relative to $\mathbb{P}_{\! \scriptscriptstyle 0}$) is the likelihood ratio
$\mathrm{d}\mathbb{P}_{\scriptscriptstyle \widetilde{W}(\,\cdot\,,s)}/\mathrm{d}\mathbb{P}_{\! \scriptscriptstyle 0}$.
The Girsanov theorem suggests the representation
\begin{equation}
\label{eqn: Radon-Nikodym derivative}
\frac{ \mathrm{d} \mathbb{P}_{\scriptscriptstyle \widetilde{W}(\,\cdot\,,s)} }{ \mathrm{d} \mathbb{P}_{\! \scriptscriptstyle 0}}
=
\exp\!\left(
\int_{\underaccent{\bar}{x}}^{\bar{x}} \frac{\widetilde{W}(x,s)}{\sigma(x)}\,dB_x
-\frac12\int_{\underaccent{\bar}{x}}^{\bar{x}} \frac{\widetilde{W}(x,s)^2}{\sigma(x)^2}\,dx
\right)
\end{equation}
(see \citet[Section~3.5]{karatzas2012brownian}). 

Consequently, Bayes' rule suggests the posterior on $S$ given $\omega$ to be
\begin{equation}
\label{eqn: candidate expression for posterior}
\pi_1(ds \mid \omega\,;\widetilde{W})
=
\frac{1}{C(\omega\,;\widetilde{W})}\,
\frac{ \mathrm{d} \mathbb{P}_{\scriptscriptstyle \widetilde{W}(\,\cdot\,,s)} }{ \mathrm{d} \mathbb{P}_{\! \scriptscriptstyle 0}}(\omega)\,\pi_0(ds),
\qquad
C(\omega\,;\widetilde{W})
:=
\int_S 
\frac{ \mathrm{d} \mathbb{P}_{\scriptscriptstyle \widetilde{W}(\,\cdot\,,u)} }{ \mathrm{d} \mathbb{P}_{\! \scriptscriptstyle 0}}(\omega)\,\pi_0(du),
\end{equation}
so that $\pi_1(\cdot \mid \omega\,;\widetilde{W})$ integrates to one over $S$.

\subsection{The Posterior and Pricing Kernel}
\label{sec: Posterior and Pricing Kernel}

The market maker's posterior is meant to condition on a realized path $\omega$, whereas the right-hand side of the candidate expression~\eqref{eqn: candidate expression for posterior} is not defined pointwise in $\omega$. 
In particular, the likelihood ratio suggested by~\eqref{eqn: Radon-Nikodym derivative} is specified only as an $\mathbb{P}_{\! \scriptscriptstyle 0}$-a.s.\ equivalence class, since the It\^{o} integral $\int_{\underaccent{\bar}{x}}^{\bar{x}} \frac{\widetilde{W}(x,s)}{\sigma(x)}\,dB_x$ 
is not defined pathwise.

We now make Bayes' rule in our setting rigorous by replacing the stochastic integral with a pathwise $\omega$-by-$\omega$ integral. 
Concretely, on the (full $\mathbb{P}_{\! \scriptscriptstyle 0}$-measure) H\"older path space we consider a deterministic integral of the form 
\[
\int_{\underaccent{\bar}{x}}^{\bar{x}} \frac{\widetilde{W}(x,s)}{\sigma^2(x)}\,d\omega_x
\]
as in \cite{friz2020course}.
This provides an $\omega$-by-$\omega$ version of the stochastic integral tailored to our order-flow setting. It agrees with the corresponding It\^{o} integral $\mathbb{P}_{\! \scriptscriptstyle 0}$-a.s.\ and thereby produces a genuine likelihood functional
$\omega\mapsto \frac{d\mathbb{P}_{\scriptscriptstyle \widetilde{W}(\,\cdot\,,s)}}{d\mathbb{P}_{\! \scriptscriptstyle 0}}(\omega)$, hence a well-defined posterior conditioned on the observed order-flow path.

\begin{lemma}
\label{lemma: lemma 1 for example}
$\;$
\nopagebreak
Under Assumption~\ref{assumption: MM's posterior}, the following holds for the market maker's Bayesian inference problem.

(i) ($\int \cdot\, d \omega_x$ Integral) For all $\omega\in\Omega_\gamma$, $ W \in C^{\delta}([\underaccent{\bar}{x}, \bar{x}],\R)$, 
and $x\in [\underaccent{\bar}{x}, \bar{x}]$, the limit of Riemann sums
\begin{align*}
\int_{\underaccent{\bar}{x}}^x \frac{ W_y}{\sigma^2(y)} \,d\omega_y \, \equiv \,
\lim_{\substack{\max\limits_k\abs{x_{k+1}-x_k}\to 0\\ \underaccent{\bar}{x}=x_0<\cdots<x_n=x}} \,
\sum_{k=0}^{n-1} \frac{ W_{x_k} }{\sigma^2(x_k)} \cdot[\omega_{x_{k+1}\wedge x}-\omega_{x_k\wedge x}] %\quad \;\;\; \mbox{(\textbf{pathwise integral})}
\end{align*}
exists and therefore defines an $\omega$-by-$\omega$
Young integral.
%(\cite{friz2020course}, Chapter~4).

(ii) (Joint Measurability of Data and Parameter) The map
\begin{align*}
\underbrace{ (\omega,W) }_\text{(data, parameter)} \mapsto \int_{\underaccent{\bar}{x}}^{\bar{x}} \frac{W_x}{\sigma^2(x)}\,d\omega_x,\quad
(\Omega_\gamma,[\,\cdot\,]_\gamma) \times C^{\delta}([\underaccent{\bar}{x}, \bar{x}],\R) \to\R, 
\end{align*}
is continuous---in particular, measurable.
%(\cite{friz2020course}, Theorem 4.17).

(iii) (Conditional Likelihood of Data) For all $x\in [\underaccent{\bar}{x}, \bar{x}]$ and $W \in C^{\delta}([\underaccent{\bar}{x}, \bar{x}],\R)$, there exists a $\mathbb{P}_{\! \scriptscriptstyle 0, \gamma}$-null set
$N$, which may depend on $x$ and $W$, such that, for all $\omega\in\Omega_\gamma \! \setminus \! N$,
\begin{align*}
\int_{\underaccent{\bar}{x}}^x \frac{W_y}{\sigma^2(y)} \,d\omega_y=\left[\int_{\underaccent{\bar}{x}}^x \frac{W_y}{\sigma^2(y)} \,dB_y\right](\omega),
\end{align*}
where the integral on the right-hand side is a version of the It\^{o} integral.
%(\cite{friz2020course}, Proposition~5.1).

\end{lemma}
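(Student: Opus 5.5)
The plan is to reduce everything to standard Young integration theory (\cite{friz2020course}, Ch.~4) applied to the integrand $f \coloneqq W/\sigma^2$. The first step is a regularity observation. By Assumption~\ref{assumption: MM's posterior}(iv), $\sigma^2\in C^\delta$ and it is bounded below by $m\coloneqq\min\sigma^2>0$, by continuity on the compact interval. Then $1/\sigma^2\in C^\delta$, since
\[
\bigl|1/\sigma^2(x)-1/\sigma^2(y)\bigr|\le m^{-2}|\sigma^2(x)-\sigma^2(y)|.
\]
Because $C^\delta$ is a Banach algebra under $\|\cdot\|_\delta$, we get $f\in C^\delta$ with $\|f\|_\delta\le c\|W\|_\delta$, where $c$ depends only on $\sigma^2$. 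The map $W\mapsto f$ is therefore a bounded linear operator on $C^\delta$.

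For (i), I will invoke the Young--Lo\`eve estimate. For $f\in C^\delta$ and $\omega\in C^\gamma$ with $\delta+\gamma>1$, the sewing lemma applied to the germ $\Xi_{u,v}=f_u(\omega_v-\omega_u)$ shows that Riemann sums along partitions of mesh tending to zero converge. The key inequality is
\[
|\delta\Xi_{u,r,v}|=|(f_r-f_u)(\omega_v-\omega_r)|\le [f]_\delta[\omega]_\gamma|v-u|^{\delta+\gamma},
\]
and the exponent $\delta+\gamma$ exceeds $1$. The sewing lemma also gives the bound
\[
\Bigl|\int_u^v f\,d\omega - f_u(\omega_v-\omega_u)\Bigr|\le C_{\delta,\gamma}[f]_\delta[\omega]_\gamma|v-u|^{\delta+\gamma}.
\]
The truncation $\wedge x$ in the statement just restricts the sums to $[\underaccent{\bar}{x},x]$, so the same argument covers every $x$.

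For (ii), the integral is bilinear in $(f,\omega)$. The bound above, combined with $\omega_{\underaccent{\bar}{x}}=0$, gives
\[
\Bigl|\int f\,d\omega\Bigr|\le C\|f\|_\delta[\omega]_\gamma .
\]
This uses $|f_{\underaccent{\bar}{x}}(\omega_{\bar x}-\omega_{\underaccent{\bar}{x}})|\le \|f\|_\infty[\omega]_\gamma(\bar x-\underaccent{\bar}{x})^\gamma$. Bilinearity then gives
\[
|I(f,\omega)-I(f',\omega')|\le C\bigl(\|f-f'\|_\delta[\omega]_\gamma+\|f'\|_\delta[\omega-\omega']_\gamma\bigr),
\]
so $I$ is jointly continuous. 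Composing with the bounded linear map $W\mapsto W/\sigma^2$ yields continuity in $(\omega,W)$, and Borel measurability follows. One subtlety needs a remark: $[\cdot]_\gamma$ is a norm on $\Omega_\gamma$ because paths start at $0$, though the statement writes $\omega(0)=0$ and I will read this as the left endpoint. Joint measurability with respect to $\mathcal F_\gamma$ also needs a remark, since the $\|\cdot\|_\infty$ Borel field and the $\gamma$-H\"older Borel field differ. I plan to handle this by showing that $\omega\mapsto I(f,\omega)$ is a pointwise limit of Riemann sums. Each Riemann sum is $\|\cdot\|_\infty$-continuous, so the limit is $\mathcal F_\gamma$-measurable.

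For (iii), $f$ is deterministic and continuous, so under $\mathbb P_{0}$ the It\^o integral $\int f\,d\omega$ (with $\omega=\sigma B$ a continuous martingale) is the $L^2$, hence in-probability, limit of left-point Riemann sums along any sequence of partitions with mesh tending to zero. Along a fixed dyadic sequence, pass to a subsequence converging almost surely. On $\Omega_\gamma$ the same Riemann sums converge deterministically to the Young integral by (i). The two limits therefore coincide off a null set $N$, which depends on $x$ and $W$ through the chosen subsequence. One point to check is that under $\mathbb P_0$ the canonical process is $\int\sigma\,dB$ (independent increments with variance $\sigma^2dx$). In that case the statement's $dB_y$ should be read as integration against the canonical process, which is how the paper intends it.

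I expect the main obstacle to be the measurability bookkeeping in (ii) between the two topologies. The analytic estimates themselves are routine consequences of the sewing lemma.
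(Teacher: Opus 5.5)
Your proposal is correct and follows essentially the same route as the paper. You reduce to $f=W/\sigma^2\in C^\delta$ via quotient regularity, apply the sewing lemma to the germ $f_u(\omega_v-\omega_u)$ for (i), and obtain continuity for (ii) from sewing-type estimates; your direct bilinear bound replaces the paper's continuity of the germ map into $C^{\gamma,\gamma+\delta}_2$. For (iii) you use an almost surely convergent subsequence of Riemann sums identified with the Young limit, as the paper does. Your extra remark, that $\mathcal F_\gamma$-measurability should come from the Riemann-sum approximation rather than from continuity in the non-separable $[\cdot]_\gamma$ topology (whose Borel field is strictly larger than $\mathcal F_\gamma$), genuinely tightens the paper's ``continuous, in particular measurable'' step.
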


For the reader’s convenience, we include a proof, following the approach in \cite[especially Theorem~4.17 and Proposition~5.1]{friz2020course}, which treats a more general setting.

\begin{proof}

(i)
We note first that H\"older continuity is preserved by taking quotients when the denominator is bounded away from zero.
Let $W$, $\sigma \in C^\delta([\underaccent{\bar}{x},\bar{x}], \R)$ with $\min\limits_{ x\in [\underaccent{\bar}{x},\bar{x}]} \sigma(x) > \alpha > 0$ for some $\alpha > 0$. Then
\begin{align*}
\abs{\frac{W(y)}{\sigma(y)}-\frac{W(x)}{\sigma(x)}}& \le \frac{\abs{\sigma(x)}\,\abs{W(y)-W(x)}+\abs{W(x)}\,\abs{\sigma(y)-\sigma(x)}}{\abs{\sigma(y)}\,\abs{\sigma(x)}}\\
& \le \frac{\norm{W}_\infty\,[W]_\delta+\norm{\sigma}_\infty\,[\sigma]_\delta}{\alpha^2}\cdot\abs{y-x}^\delta,
\end{align*}
which implies $\frac{W}{\sigma} \in C^\delta([\underaccent{\bar}{x},\bar{x}], \R)$. Therefore, it suffices to prove the claim with $W$ in place of $\frac{W}{\sigma^2}$.

Given $\alpha$, $\beta>0$, we shall consider the space
$C^{\alpha,\beta}_2([\underaccent{\bar}{x},\bar{x}],\R)$ of all functions $\Xi$ from $\{(y,x):\, \underaccent{\bar}{x} \le y\le x\le \bar{x}\}$ to $\R$ with seminorm
\begin{align*}
[\Xi]_{\alpha,\beta} \, \equiv \, \underbrace{[\Xi]_\alpha}_{\equiv \, \sup\limits_{y<x} \frac{\abs{\Xi_{y,x}}}{(x-y)^\alpha}}
+\sup_{y<r<x}\frac{\abs{\Xi_{y,x}-\Xi_{y,r}-\Xi_{r,x}}}{\abs{x-y}^\beta}<\infty.
\end{align*}

For $\omega\in\Omega_\gamma$ and $W\in C^\delta([\underaccent{\bar}{x},\bar{x}], \R)$,
define $\Xi^{\omega,W}$ by 
\begin{align*}
\Xi^{\omega,W}_{y,x} \, \equiv \, W_y\cdot(\omega_x-\omega_y).
\end{align*}
Since $\abs{\Xi_{y,x}^{\omega,W}}\le [\omega]_\gamma\,\abs{x-y}^\gamma$ and, for $\underaccent{\bar}{x} \le y\le r\le x\le \bar{x}$,
\begin{align*}
\abs{\Xi^{\omega,W}_{y,x}-\Xi^{\omega,W}_{y,r}-\Xi^{\omega,W}_{r,x}}
=\abs{(W_y-W_r)\cdot (\omega_x-\omega_r)}\le [W]_\delta\, [\omega]_\gamma\,\abs{x-y}^{\gamma+\delta},
\end{align*}
we have $\Xi^{\omega,W}\in C^{\gamma,\gamma+\delta}_2([\underaccent{\bar}{x},\bar{x}],\R)$. 
Thus, because $\gamma\le 1<\gamma+\delta$, it follows from the Sewing Lemma % of
(\cite{friz2020course}, Lemma 4.2) that the limit of Riemann sums
\begin{align*}
\lim_{\substack{\max\limits_k\abs{x_{k+1}-x_k}\to 0\\ \underaccent{\bar}{x} =x_0<\cdots<x_n=x}} \,
\sum_{k=0}^{n-1} W_{x_k} \cdot[\omega_{x_{k+1}\wedge x}-\omega_{x_k\wedge x}]
\end{align*}
exists. This proves the claim.

(ii)
We will prove the continuity of the map (this is a stronger property than that stated in Lemma~\ref{lemma: lemma 1 for example}(ii))
\begin{align*}
(\omega,W)\mapsto \int_0^{\bullet} W_t\,d\omega_t,\quad (\Omega_\gamma,[\,\cdot\,]_\gamma)\times C^\delta([\underaccent{\bar}{x},\bar{x}], \R) \to C^\gamma([\underaccent{\bar}{x}, \bar{x}],\R).
\end{align*}
By the Sewing Lemma quoted in (i), it suffices to check the continuity of
\begin{align*}
(\omega,W)\mapsto \Xi^{\omega,W},\quad (\Omega_\gamma,[\,\cdot\,]_\gamma)\times C^\delta([\underaccent{\bar}{x} ,\bar{x}], \R) \to C^{\gamma,\gamma+\delta}_2([\underaccent{\bar}{x} , \bar{x}],\R).
\end{align*}
Given $\omega$, $\tilde{\omega}\in\Omega_\gamma$, $W$, $\tilde{W}\in C^\delta([\underaccent{\bar}{x} ,\bar{x}], \R)$,
\begin{align*}
&\abs{(\Xi^{\omega,W}_{y,x}-\Xi^{\omega,W}_{y,r}-\Xi^{\omega,W}_{r,x})-
(\Xi^{\tilde{\omega},\tilde{W}}_{y,x}-\Xi^{\tilde{\omega},\tilde{W}}_{y,r}-
\Xi^{\tilde{\omega},\tilde{W}}_{r,x})}\\
&\qquad =\abs{(W_y-W_r)\cdot(\omega_x-\omega_r)-(\tilde{W}_y-\tilde{W}_r)\cdot(\tilde{\omega}_x-
\tilde{\omega}_r)}\\
&\qquad \le [W-\tilde{W}]_\delta\,[\omega]_\gamma\,\abs{x-y}^{\gamma+\delta}+
[\tilde{W}]_\delta\,[\omega-\tilde{\omega}]_\gamma\,\abs{x-y}^{\gamma+\delta}.
\end{align*}
Therefore (below we use the H\"older norm $\norm{\, \cdot \,}_W$ instead of the seminorm $[\, \cdot \,]_{W}$)
\begin{align*}
\abs{\Xi^{\omega,W}_{y,x}-\Xi^{\tilde{\omega},\tilde{W}}_{y,x}} &=\abs{W_y \cdot(\omega_x-\omega_y)-\tilde{W}_y \cdot(\tilde{\omega}_x-\tilde{\omega}_y)}\\
&\le \norm{W}_\infty\,[\omega]_\gamma\,\abs{x-y}^\gamma+
\norm{W-\tilde{W}}_\delta\cdot(\bar{x}^\delta \vee 1)\,[\tilde{\omega}]_{\gamma}\,\abs{x-y}^\gamma.
\end{align*}
This proves the claim.

(iii)
Let $\int_{\underaccent{\bar}{x}}^x W_y\,dB_y$ be a given version of the It\^{o} integral.
By the continuity of $W$, the chosen version is a probability limit (see \cite{revuz2013continuous})
\begin{align*}
\int_{\underaccent{\bar}{x}}^x W_y\,dB_y = 
\lim_{\substack{\max\limits_k\abs{x_{k+1}-x_k}\to 0\\ \underaccent{\bar}{x} = x_0 <\cdots<x_n=x}} \,
\sum_{k=0}^{n-1} W_{x_k}\cdot[B_{x_{k+1}\wedge x}-B_{x_k\wedge x}]
\quad\text{in $\mathbb{P}_{\! \scriptscriptstyle 0, \gamma}$-probability.}
\end{align*}
Then one can pass to a subsequence of the (implicitly given) sequence of partitions such that,
for $\mathbb{P}_{\! \scriptscriptstyle 0, \gamma}$-a.e.~$\omega\in\Omega_\gamma$ (where we use the same notation for the subsequence),
\begin{align*}
\left[\int_{\underaccent{\bar}{x}}^x W_y\,dB_y\right](\omega) \equiv
\lim_{\substack{\max\limits_k\abs{x_{k+1}-x_k}\to 0\\ \underaccent{\bar}{x}=x_0<\cdots<x_n=x}} \, 
\sum_{k=0}^{n-1} W_{x_k}\cdot[\omega_{x_{k+1}\wedge x}-\omega_{x_k\wedge x}]
=\int_{\underaccent{\bar}{x}}^x W_y\,d\omega_y.
\end{align*}
This proves the claim.
\end{proof}

\begin{lemma}
\label{lemma: lemma 2 for example}

Under Assumption~\ref{assumption: MM's posterior}, let $\omega$ denote the canonical process on $(\Omega_\gamma,\mathcal{F}_\gamma,\mathbb{P}_{\!\scriptscriptstyle 0,\gamma})$.
Fix $W\in C^\delta([\underaccent{\bar}{x},\bar{x}],\R)$ and define a probability measure $\mathbb{P}_{\!\scriptscriptstyle W}$ on $(\Omega_\gamma,\mathcal{F}_\gamma)$ by
\begin{align}
\label{eqn: Radon-Nikodym density, omega by omega}
\frac{\mathrm{d}\mathbb{P}_{\!\scriptscriptstyle W}}{\mathrm{d}\mathbb{P}_{\!\scriptscriptstyle 0,\gamma}}(\omega)
=
\exp\!\left\{
\int_{\underaccent{\bar}{x}}^{\bar{x}} \frac{W_x}{\sigma^2(x)}\,\mathrm{d}\omega_x
-\frac12\int_{\underaccent{\bar}{x}}^{\bar{x}} \frac{W_x^2}{\sigma^2(x)}\,\mathrm{d}x
\right\},
\end{align}
where $\int_{\underaccent{\bar}{x}}^{\bar{x}} \frac{W_x}{\sigma^2(x)}\,\mathrm{d}\omega_x$ is the pathwise integral from Lemma~\ref{lemma: lemma 1 for example}(i).
Then, under $\mathbb{P}_{\!\scriptscriptstyle W}$, there exists a standard Brownian motion $(B_x)_{x\in[\underaccent{\bar}{x},\bar{x}]}$ such that the canonical process satisfies
\[
d \omega_x = W_x\,dx + \sigma(x)\,dB_x .
\]

\end{lemma}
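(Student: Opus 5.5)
The plan is to reduce the statement to the classical Girsanov theorem, using Lemma~\ref{lemma: lemma 1 for example}(iii) to identify the pathwise density with the usual stochastic exponential.

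\textbf{Step 1: the canonical process as a martingale.} Under $\mathbb{P}_{\!\scriptscriptstyle 0,\gamma}$ the canonical process $\omega$ is a continuous Gaussian martingale with quadratic variation $\langle\omega\rangle_x=\int_{\underaccent{\bar}{x}}^x\sigma^2(y)\,dy$, in its own filtration $\mathcal{F}_x$. Since $\sigma^2$ is continuous and bounded away from zero by Assumption~\ref{assumption: MM's posterior}(iv), L\'evy's characterization shows that
\[
B_x\coloneqq\int_{\underaccent{\bar}{x}}^x\sigma(y)^{-1}\,d\omega_y
\]
is a $\mathbb{P}_{\!\scriptscriptstyle 0,\gamma}$-Brownian motion, and $d\omega_x=\sigma(x)\,dB_x$.

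\textbf{Step 2: identify the density.} Set $h\coloneqq W/\sigma^2$. This function is deterministic and bounded; by the quotient argument in the proof of Lemma~\ref{lemma: lemma 1 for example}(i), it is also H\"older. By Lemma~\ref{lemma: lemma 1 for example}(iii), the pathwise integral $\int h\,d\omega$ agrees $\mathbb{P}_{\!\scriptscriptstyle 0,\gamma}$-a.s.\ with the It\^o integral $\int h\,d\omega=\int (W/\sigma)\,dB$. Hence the right-hand side of \eqref{eqn: Radon-Nikodym density, omega by omega} equals, almost surely, the stochastic exponential
\[
\mathcal{E}_{\bar{x}}=\exp\Bigl\{\int_{\underaccent{\bar}{x}}^{\bar{x}} \tfrac{W}{\sigma}\,dB-\tfrac12\int_{\underaccent{\bar}{x}}^{\bar{x}} \tfrac{W^2}{\sigma^2}\,dx\Bigr\}.
\]
The integrand $W/\sigma$ is deterministic and bounded, so Novikov's condition holds trivially. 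Equivalently, $\int (W/\sigma)\,dB$ is Gaussian with variance $\int W^2/\sigma^2\,dx$, which gives $\mathbb{E}[\mathcal{E}_{\bar{x}}]=1$ directly. Thus $\mathbb{P}_{\!\scriptscriptstyle W}$ is a genuine probability measure equivalent to $\mathbb{P}_{\!\scriptscriptstyle 0,\gamma}$, and its density process is the true martingale $\mathcal{E}_x$.

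\textbf{Step 3: apply Girsanov.} By Girsanov's theorem \citep[Section~3.5]{karatzas2012brownian}, the process
\[
\tilde B_x\coloneqq B_x-\int_{\underaccent{\bar}{x}}^x \tfrac{W_y}{\sigma(y)}\,dy
\]
is a $\mathbb{P}_{\!\scriptscriptstyle W}$-Brownian motion. Substituting into $d\omega_x=\sigma(x)\,dB_x$ gives $d\omega_x=W_x\,dx+\sigma(x)\,d\tilde B_x$, which is the claim with $\tilde B$ in the role of $B$.

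\textbf{Main obstacle.} The only delicate point is Step 2. The density is defined pathwise, while Girsanov's theorem is stated for the It\^o exponential, so we must ensure the two agree on the full interval. This follows from Lemma~\ref{lemma: lemma 1 for example}(iii) at $x=\bar{x}$: a single null set suffices, and changing the density on a null set does not change the measure $\mathbb{P}_{\!\scriptscriptstyle W}$. One further check is needed: the restriction to $\Omega_\gamma$ loses nothing, because $\mathbb{P}_{\!\scriptscriptstyle 0}(\Omega_\gamma)=1$.
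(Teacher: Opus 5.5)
Your proposal is correct and follows the paper's route. The paper's one-line proof cites Lemma~\ref{lemma: lemma 1 for example}(iii) to identify the pathwise density with the It\^o stochastic exponential, leaving the Novikov/Girsanov step implicit, and you spell that step out. You also correctly read the It\^o integral in Lemma~\ref{lemma: lemma 1 for example}(iii) as taken against the canonical process $\omega$, so that it equals $\int (W/\sigma)\,dB$; this is the reading needed for the density to be the stochastic exponential.
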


\begin{proof}
This follows immediately from Lemma~\ref{lemma: lemma 1 for example}(iii).
\end{proof}

\begin{theorem} 
\label{thm: Market Maker's Posterior}
\nopagebreak

Fix a market maker belief $\widetilde{W}(\,\cdot\, , \,\cdot\,)$ and let $\omega\in\Omega_\gamma$ denote the realized cumulative order-flow path (equivalently, the collection of infinitesimal order flows $(d\omega_x)_{x\in[\underaccent{\bar}{x},\bar{x}]}$).
Then the market maker's posterior on $S$ is the probability measure $\pi_1(\,\cdot\,, \omega \kern 0.045em ; \widetilde{W})$ given by
\begin{equation}
\label{eqn: rigorous expression for posterior}
\pi_1(ds, \omega \kern 0.045em ; \widetilde{W})
=
\frac{1}{C(\omega \kern 0.045em ; \widetilde{W})}\,
\exp\!\left\{
\int_{\underaccent{\bar}{x}}^{\bar{x}} \! \frac{\widetilde{W}(x,s)}{\sigma^2(x)}\, d\omega_x
-\frac12 \int_{\underaccent{\bar}{x}}^{\bar{x}} \! \frac{\widetilde{W}(x,s)^2}{\sigma^2(x)}\, dx
\right\}\,\pi_0(ds),
\end{equation}
where $\int \cdot\, d\omega_x$ is the pathwise integral from Lemma~\ref{lemma: lemma 1 for example}(i), and
$C(\omega \kern 0.045em ; \widetilde{W})$ is the normalizing constant chosen so that $\int_S \pi_1(ds, \omega \kern 0.045em ; \widetilde{W})=1$.
In particular, $\pi_1(\,\cdot\,, \omega \kern 0.045em ; \widetilde{W})$ depends on $\omega$ only through the projection profile
$\bigl(\int_{\underaccent{\bar}{x}}^{\bar{x}} \! \frac{\widetilde{W}(x,s)}{\sigma^2(x)}\, d\omega_x \bigr)_{s\in S}$.
The market maker's zero-profit Arrow--Debreu prices are then
\[
P(x, \omega \kern 0.045em ; \widetilde{W})
=
\int_{S} \eta(x, s)\, \pi_1(ds, \omega \kern 0.045em ; \widetilde{W}),
\qquad x \in [\underaccent{\bar}{x}, \bar{x}] .
\]
\end{theorem}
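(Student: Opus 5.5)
The plan is to realize the posterior as a regular conditional distribution of $s$ given the order-flow path on the product space $S\times\Omega_\gamma$, and to verify directly that the displayed formula~\eqref{eqn: rigorous expression for posterior} satisfies the defining disintegration property. Concretely, I would first define the joint law $Q(ds,d\omega)\coloneqq \pi_0(ds)\,\mathbb{P}_{\widetilde W(\cdot,s)}(d\omega)$ on $S\times\Omega_\gamma$. Here $\mathbb{P}_{\widetilde W(\cdot,s)}$ is the measure from Lemma~\ref{lemma: lemma 2 for example} with $W=\widetilde W(\cdot,s)$. That lemma identifies it as the law of the order flow~\eqref{eqn: dY s x} under signal $s$. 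For $Q$ to be a well-defined measure, $s\mapsto\mathbb{P}_{\widetilde W(\cdot,s)}(A)$ must be measurable for every $A\in\mathcal F_\gamma$.

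To obtain this measurability, write the density as $L(s,\omega)\coloneqq\exp\{\int \widetilde W(x,s)\sigma^{-2}(x)\,d\omega_x-\tfrac12\int \widetilde W(x,s)^2\sigma^{-2}(x)\,dx\}$. By Assumption~\ref{assumption: MM's posterior}(iii) the map $s\mapsto \widetilde W(\cdot,s)\in C^\delta$ is continuous. By Lemma~\ref{lemma: lemma 1 for example}(ii) the pathwise integral is jointly continuous in $(\omega,W)$. Hence $L$ is jointly measurable, indeed continuous, on $S\times\Omega_\gamma$. One caveat is that the topology in Lemma~\ref{lemma: lemma 1 for example}(ii) is the $\gamma$-Hölder one, whereas $\mathcal F_\gamma$ is the trace of the sup-norm Borel field. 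So I would check that the Hölder-continuous map is $\mathcal F_\gamma$-measurable. This holds because $[\cdot]_\gamma$ is a supremum of countably many sup-norm-continuous functionals, taken over rational pairs by continuity of the paths, so the Hölder balls lie in $\mathcal F_\gamma$. Alternatively, one can use directly that Riemann sums along a fixed sequence of dyadic partitions are $\mathcal F$-measurable and converge pointwise by Lemma~\ref{lemma: lemma 1 for example}(i). The second route is cleaner and is the one I would use. Tonelli then gives measurability of $s\mapsto\int_A L(s,\omega)\,\mathbb{P}_{0,\gamma}(d\omega)$.

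Next comes the normalization. I would show $0<C(\omega;\widetilde W)=\int_S L(s,\omega)\pi_0(ds)<\infty$ for every $\omega\in\Omega_\gamma$. Positivity is immediate because $L>0$. Finiteness is the delicate point, since $S$ need not be compact and $\widetilde W(\cdot,s)$ need not be bounded in $C^\delta$ uniformly in $s$. I would first try to bound $\int W\sigma^{-2}d\omega\le c\,\|W\|_\delta[\omega]_\gamma$ via the sewing estimate and complete the square against the $-\tfrac12\int W^2\sigma^{-2}$ term. If that fails, I would note that Tonelli gives $\int C\,d\mathbb{P}_{0,\gamma}=\int_S\int L\,d\mathbb{P}_{0,\gamma}\,d\pi_0=1$, so $C<\infty$ $\mathbb{P}_{0,\gamma}$-a.s. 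I would then restrict to that full-measure set, or add an integrability hypothesis. I expect this finiteness to be the main obstacle for a genuinely pathwise, every-$\omega$ statement.

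With these in hand, Bayes' formula follows by Fubini. For $B\subset S$ and $A\in\mathcal F_\gamma$,
\[
Q(B\times A)=\int_A\Bigl(\int_B L(s,\omega)\pi_0(ds)\Bigr)\mathbb{P}_{0,\gamma}(d\omega)=\int_A \pi_1(B,\omega;\widetilde W)\,\bar Q(d\omega),
\]
where $\bar Q(d\omega)=C(\omega;\widetilde W)\,\mathbb{P}_{0,\gamma}(d\omega)$ is the $\omega$-marginal of $Q$. So $\pi_1$ is a version of the conditional law of $s$ given $\omega$. Since each $\pi_1(\cdot,\omega)$ is defined for every $\omega$ through the pathwise integral, it is the canonical pathwise choice. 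The dependence statement is then read off the formula, since $\omega$ enters only through $s\mapsto\int\widetilde W(x,s)\sigma^{-2}(x)\,d\omega_x$. The zero-profit prices are the conditional expectations $E^Q[\eta(x,s)\mid\omega]=\int_S\eta(x,s)\pi_1(ds,\omega)$, provided $\eta(x,\cdot)$ is $\pi_1$-integrable, which I would assume or verify as for $C$.
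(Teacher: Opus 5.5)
Your proposal is correct and is essentially the paper's proof: define the likelihood $L(\omega,s)$ through the pathwise integral, get joint measurability from Assumption~\ref{assumption: MM's posterior}(iii) and Lemma~\ref{lemma: lemma 1 for example}, identify $\mathbb{P}_{\scriptscriptstyle \widetilde{W}(\,\cdot\,,s)}$ as the law of the drifted order flow via Lemma~\ref{lemma: lemma 2 for example}, and disintegrate the joint law $L\,d\mathbb{P}_{\!\scriptscriptstyle 0,\gamma}\,d\pi_0$ by Fubini--Tonelli. Your extra care goes beyond what the paper's proof checks and is well placed: the dyadic Riemann-sum route is what actually gives $\mathcal{B}(S)\otimes\mathcal{F}_\gamma$-measurability, while the H\"older-ball route does not, because H\"older balls need not generate all H\"older-open sets in the non-separable space $C^\gamma$. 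Likewise, the square cannot be completed in general (the linear term is controlled by $\|W\|_\delta$, the penalty only by $\|W\|_\sigma$), so $C(\omega;\widetilde W)<\infty$ is guaranteed only $\mathbb{P}_{\!\scriptscriptstyle 0,\gamma}$-a.s.\ unless, e.g., $\sup_s\|\widetilde W(\cdot,s)\|_\delta<\infty$.
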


\begin{proof}
It suffices to establish the posterior formula \eqref{eqn: rigorous expression for posterior}, after which the pricing identity follows from the market maker's zero-profit condition.

Per Assumption~\ref{assumption: MM's posterior}, we work on the H\"older path space $(\Omega_\gamma,\mathcal{F}_\gamma,\mathbb{P}_{\!\scriptscriptstyle 0,\gamma})$.
For each $s\in S$, define the likelihood functional
\[
L(\omega,s)
\coloneqq
\exp\!\left\{
\int_{\underaccent{\bar}{x}}^{\bar{x}} \frac{\widetilde{W}(x,s)}{\sigma^2(x)}\, d\omega_x
-\frac12 \int_{\underaccent{\bar}{x}}^{\bar{x}} \frac{\widetilde{W}(x,s)^2}{\sigma^2(x)}\, dx
\right\},
\qquad \omega\in\Omega_\gamma,
\]
where $\int\cdot\, d\omega_x$ is the pathwise integral from Lemma~\ref{lemma: lemma 1 for example}(i).
By Lemma~\ref{lemma: lemma 1 for example}(i)--(ii), $L(\,\cdot\,,s)$ is well-defined on $\Omega_\gamma$ and $(\omega,s)\mapsto L(\omega,s)$ is jointly measurable.

For each $s\in S$, define a probability measure $\mathbb{P}_{\scriptscriptstyle \widetilde{W}(\,\cdot\,,s)}$ on $(\Omega_\gamma,\mathcal{F}_\gamma)$ by
\[
\frac{ \mathrm{d} \mathbb{P}_{\scriptscriptstyle \widetilde{W}(\,\cdot\,,s)} }{ \mathrm{d} \mathbb{P}_{\! \scriptscriptstyle 0,\gamma} }(\omega)
=
L(\omega,s).
\]
Lemma~\ref{lemma: lemma 2 for example} implies that, under $\mathbb{P}_{\scriptscriptstyle \widetilde{W}(\,\cdot\,,s)}$, the canonical process has the same law as the drifted order-flow model
$dY_x=\widetilde{W}(x,s)\,dx+\sigma(x)\,dB_x$.
Consequently, the joint law of $(\omega,s)$ induced by the belief $\widetilde{W}$ and prior $\pi_0$ is
\[
\mathbb{Q}(d\omega,ds)
\coloneqq
\mathbb{P}_{\scriptscriptstyle \widetilde{W}(\,\cdot\,,s)}(d\omega)\,\pi_0(ds)
=
L(\omega,s)\,\mathbb{P}_{\! \scriptscriptstyle 0,\gamma}(d\omega)\,\pi_0(ds).
\]
Disintegrating $\mathbb{Q}$ with respect to its $\omega$-marginal and applying Fubini--Tonelli yields the conditional distribution of $s$ given $\omega$:
\[
\pi_1(ds, \omega \kern 0.045em ; \widetilde{W})
=
C(\omega \kern 0.045em ; \widetilde{W})^{-1}\,L(\omega,s)\,\pi_0(ds),
\qquad
C(\omega \kern 0.045em ; \widetilde{W})
\coloneqq
\int_S L(\omega,u)\,\pi_0(du),
\]
which is exactly \eqref{eqn: rigorous expression for posterior}.
\end{proof}

\subsection*{Cross-Market Inference} 

Having established Theorem~\ref{thm: Market Maker's Posterior} on a pathwise (pointwise-in-$\omega$) footing, we can give a transparent interpretation of the market maker's updating rule~\eqref{eqn: rigorous expression for posterior}. 

Fixing the market maker's conjecture $\widetilde{W}(\,\cdot\,,\cdot\,)$, each signal $s \in S$ corresponds to a hypothesized drift profile $x\mapsto \widetilde{W}(x,s)$ for the order-flow process.
The posterior weight assigned to each $s$ is obtained by an exponential tilt of the prior $\pi_0$ whose data-dependent term is the noise-adjusted pairing
\[
\int_{\underaccent{\bar}{x}}^{\bar{x}} \! \frac{\widetilde{W}(x,s)}{\sigma^2(x)}\, d\omega_x .
\]
Accordingly, the realized order-flow path $\omega$ influences the posterior only through the profile
$\bigl(\int_{\underaccent{\bar}{x}}^{\bar{x}} \! \frac{\widetilde{W}(x,s)}{\sigma^2(x)}\, d\omega_x \bigr)_{s\in S}$. 
In other words, signals whose conjectured informed demand profile $\widetilde{W}(\,\cdot\,,s)$ is more aligned with the observed order flow receive larger posterior mass, while the quadratic term
$\frac12\int_{\underaccent{\bar}{x}}^{\bar{x}} \! \frac{\widetilde{W}(x,s)^2}{\sigma^2(x)}\,dx$
is the corresponding likelihood penalty for larger conjectured drifts.

The factor $\sigma^{-2}(x)$ makes explicit how heterogeneous noise affects learning: order flow in markets with larger noise variance is down-weighted in the statistic and contributes less to inference. 
In particular, as $\sigma(\cdot)$ becomes large, the projection statistic becomes negligible and the posterior remains close to the prior. In the options interpretation, for instance, if $\widetilde{W}(\,\cdot\,,s)$ predicts \emph{straddle}-like demand under a high-volatility 
signal $s$ (a standard volatility trade; cf.\ \cite{hull2003options}), then order flow exhibiting a pronounced straddle component increases the above pairing and therefore shifts posterior mass toward that signal.

When $[\underaccent{\bar}{x},\bar{x}]$ is collapsed to a singleton as in \cite{kyle1985continuous}, the order flow $\omega$ becomes scalar and Theorem~\ref{thm: Market Maker's Posterior} recovers the familiar single-asset inference logic in Kyle. 
There, if the market maker conjectures that an insider who observes value $s$ submits a buy order, then observing net buying increases the posterior weight on $s$ and moves the price toward $s$, and conversely for net selling (cf.\ the Kyle equilibrium pricing rule~\eqref{eqn: Kyle equil}).

\subsection{Sufficient Statistics}
\label{sec: sufficient statistics}

\begin{definition}
\label{def: overlap measures}
$\;$
\nopagebreak

(i) For the market maker, the noise-adjusted projection coefficient of a realized cumulative order-flow path $\omega$ onto his belief $\widetilde{W}(\,\cdot\,, s)$ is
\begin{equation}
\label{eqn: MM overlap measure}
\Pi_{mm}(\omega, s ; \widetilde{W})
:= \int_{\underaccent{\bar}{x}}^{\bar{x}}  \frac{ \widetilde{W}(x, s) }{\sigma^2(x)} \, d \omega_x .
\end{equation}

(ii) For the insider, the noise-adjusted projection coefficient of a portfolio payoff profile $W(\,\cdot\,)$ onto $\widetilde{W}(\,\cdot\,, s)$ is
\begin{equation}
\label{eqn: insider overlap measure}
\Pi_{insider}(W, s ; \widetilde{W})
:= \int_{\underaccent{\bar}{x}}^{\bar{x}}  \frac{ \widetilde{W}(x, s) }{\sigma^2(x)} \, W(x)\, dx .
\end{equation}
\end{definition}

In view of Theorem~\ref{thm: Market Maker's Posterior}, for fixed $\widetilde{W}$ the market maker’s inference is summarized by the sufficient statistic
$s \mapsto \Pi_{mm}(\omega, s ; \widetilde{W})$.
The insider-facing quantity $\Pi_{insider}(W, s ; \widetilde{W})$ plays an analogous role for portfolio choice, as we show in Section~\ref{sec: Informed Trader's Problem}.
The distinction between the two is informational---the market maker computes $\Pi_{mm}$ from aggregate order flow $\omega$, whereas the insider can condition on his own chosen payoff profile $W$.

Together, $\Pi_{mm}$ and $\Pi_{insider}$ make explicit the \emph{cross-market} linkage between quantities and prices for their respective agents. 
Through $\Pi_{mm}(\omega, \,\cdot\, ; \widetilde{W})$, the market maker compares the realized order flow $\omega$ to the conjectured signal-contingent informed-demand profiles $\widetilde{W}(\,\cdot\,,s)$ and sets his prices accordingly.
Through $\Pi_{insider}(W, \,\cdot\, ; \widetilde{W})$, the insider evaluates how a contemplated portfolio $W$ interacts with these same profiles and, through the induced posterior, affects prices.
In equilibrium, the conditional expectation of $\Pi_{mm}(\omega, s ; \widetilde{W})$ under the distribution of $\omega$ given $s$ equals $\Pi_{insider}(W, s ; \widetilde{W})$.

\section{The Insider's Portfolio Choice}
\label{sec: Informed Trader's Problem}

In this section, we derive two necessary conditions for the insider’s portfolio-choice problem to be well-posed: a first-order (variational) condition and a viability (no-free-lunch with zero price impact) condition.
Since, in equilibrium, the insider must solve his portfolio-choice problem against the market maker’s pricing rule, these conditions are, in particular, necessary for equilibrium.

Fix a signal realization $s\in S$ and a market-maker belief $\widetilde{W}(\,\cdot\,,\,\cdot\,)$.
For any admissible insider portfolio $W\in C^{\delta}([\underaccent{\bar}{x},\bar{x}],\R)$, let $\mathbb{P}_{\!\scriptscriptstyle W}$ denote the induced law on $(\Omega_\gamma,\mathcal{F}_\gamma)$ under which the canonical order-flow path satisfies
$d\omega_x = W(x)\,dx + \sigma(x)\,dB_x$ (cf.\ \eqref{eqn: dY_x} and Lemma~\ref{lemma: lemma 2 for example}).
Recalling the Arrow--Debreu pricing kernel $P(x,\omega\kern0.045em;\widetilde{W})$ from \eqref{eqn: informed trader's problem informal}, an application of Fubini--Tonelli yields the equivalent representation
\begin{align}
\max\limits_{ {\scriptscriptstyle W(\cdot)  \in  C^{\delta}\!([\underaccent{\bar}{x}, \bar{x}], \R)} } \! J(W; \widetilde{W}, s)
&= \max\limits_{  {\scriptscriptstyle W(\,\cdot\,)  \in  C^{\delta}\!([\underaccent{\bar}{x}, \bar{x}], \R)} }
\left\{
\underbrace{ \int_{\underaccent{\bar}{x}}^{\bar{x}} \!\!\! W(x)\,\eta(x,s)\, dx }_\text{expected payoff}
-
\underbrace{ \int_{\underaccent{\bar}{x}}^{\bar{x}} \!\!\! W(x)\, \overline{P}(x, W; \widetilde{W}) \, dx }_\text{expected cost}
\right\},
\label{eqn: informed trader's problem}
\end{align}
where
\[
\overline{P}(\,\cdot\,, W; \widetilde{W})
\;:=\;
\mathbb{E}^{\mathbb{P}_{\!\scriptscriptstyle W}}\!\bigl[P(\,\cdot\,, \omega \kern 0.045em ; \widetilde{W})\bigr]
\]
is the $\mathbb{P}_{\!\scriptscriptstyle W}$-expected pricing kernel faced by the insider when he submits $W$.

We write $\partial J(W;\widetilde{W},s)$ for the G\^ateaux derivative of $J(\,\cdot\,;\widetilde{W},s)$ at $W$.
That is, for each direction $v\in C^{\delta}([\underaccent{\bar}{x},\bar{x}],\R)$,
\[
\partial J(W;\widetilde{W},s)(v)
:= \lim_{\varepsilon\to 0}\frac{J(W+\varepsilon v;\widetilde{W},s)-J(W;\widetilde{W},s)}{\varepsilon},
\]
whenever the limit exists.

\begin{remark}
In standard finance terminology, $\partial J(W;\widetilde{W},s)(v)$ is the insider’s ``marginal'' expected profit in the direction $v$---that is, the first-order change in the objective induced by an infinitesimal perturbation of the portfolio in the direction $v$ (see, e.g., \cite{bodie2021investments} for textbook uses of marginal benefits and costs in basic portfolio choice).
We will use this marginal terminology throughout, without further comment, when the finance context warrants.
\end{remark}

\subsection{First-Order Condition}
\label{subsec: FOC and Arrow-Debreu-Kyle Decomposition}

A perturbation of $W$ in direction $v$ affects $J(W;\widetilde{W},s)$ in two ways. 
First, it changes the payoff term directly, through the linear pairing $\int_{\underaccent{\bar}{x}}^{\bar{x}} v(x)\,\eta(x,s)\,dx$. 
Second, it changes the induced order-flow law $\mathbb{P}_{\!\scriptscriptstyle W}$ and hence the market maker’s posterior, which enters the expected pricing kernel $\overline{P}(\,\cdot\,,W;\widetilde{W})$. 
Accordingly, a local adjustment of the position in a given security typically induces a change in prices across the entire cross-section of Arrow--Debreu securities. 
Theorem~\ref{thm: informed trader 1} makes this precise by computing the G\^{a}teaux derivative $\partial J(W;\widetilde{W},s)(v)$ and separating the ``no-impact'' Arrow--Debreu component from the additional price-impact term generated by the induced change in the posterior. 
Corollary~\ref{cor: def of price impact} then records the associated cross-market price-impact kernel.

In finance terms, the insider’s trades to exploit private information entail an informational trading cost, because his orders reveal information to the market maker. 
For each security, the insider balances its marginal payoff against a marginal cost with two components: first, the prevailing execution price, and second, the additional marginal cost generated by the cross-market repricing induced by information leakage.

\begin{theorem}[Insider First-Order Condition]
\nopagebreak
\label{thm: informed trader 1}
Fix $s\in S$ and a market-maker belief $\widetilde{W}(\,\cdot\,,\cdot\,)$.  Under the standing assumptions, the functional
$J(\,\cdot\,;\widetilde{W},s)$ is G\^{a}teaux differentiable on the admissible class.  For any admissible portfolio $W$ and any admissible direction
$v$, its G\^{a}teaux derivative admits the decomposition
\[
\partial \! \kern 0.08em J(W; \widetilde{W}, s)(v)
=
\partial \! \kern 0.08em J_p(W)(v)
-
\partial \! \kern 0.08em J_{\scriptscriptstyle \! AD}(W)(v)
-
\partial \! \kern 0.08em J_{\scriptscriptstyle \! K}(W)(v),
\]
where, with $(\widetilde{W},s)$ held fixed,
\begin{align}
\partial \! \kern 0.08em J_p(W)(v)
&:= \int_{\underaccent{\bar}{x}}^{\bar{x}} v(x)\,\eta(x,s)\,dx,
\label{eqn: marginal payoff}\\[0.25em]
\partial \! \kern 0.08em J_{\scriptscriptstyle \! AD}(W)(v)
&:= \int_{\underaccent{\bar}{x}}^{\bar{x}} \overline{P}(x, W; \widetilde{W})\,v(x)\,dx,
\label{eqn: AD term}\\[0.25em]
\partial \! \kern 0.08em J_{\scriptscriptstyle \! K}(W)(v)
&:= \int_{\underaccent{\bar}{x}}^{\bar{x}} W(x)\,
\mathbb{E}^{\mathbb{P}_{\scriptscriptstyle W}}\!\Bigl[
\Cov\!\Bigl(\eta(x,\cdot),\Pi_{insider}(v,\cdot;\widetilde{W})\mid \omega \,;\widetilde{W}\Bigr)
\Bigr]\,dx.
\label{eqn: Kyle term in FOC}
\end{align}
In particular, if $W$ maximizes $J(\,\cdot\,;\widetilde{W},s)$ over the admissible class, then
$\partial \! \kern 0.08em J(W; \widetilde{W}, s)(v)=0$ for all admissible directions $v$, equivalently
\begin{equation}
\label{eqn: insider FOC}
 \partial \! \kern 0.08em J_p(W)(\,\cdot\,)  =  \partial \! \kern 0.08em J_{\scriptscriptstyle \! AD}(W)(\,\cdot\,) + \partial \! \kern 0.08em J_{\scriptscriptstyle \! K}(W)(\,\cdot\,) .
\end{equation}
\end{theorem}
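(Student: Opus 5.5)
We compute the G\^{a}teaux derivative directly from the representation \eqref{eqn: informed trader's problem}. Since $J(W;\widetilde{W},s)=\int W\eta\,dx-\int W(x)\overline{P}(x,W;\widetilde{W})\,dx$, the payoff term is linear in $W$ and contributes $\partial J_p(W)(v)$ immediately. For the cost term we apply the product rule: perturbing the explicit factor $W$ while holding $\overline{P}(\cdot,W;\widetilde{W})$ fixed gives $\partial J_{AD}(W)(v)$. Perturbing the law $\mathbb{P}_{W}$ inside $\overline{P}$ gives the remaining term,
\[
\int W(x)\,\frac{d}{d\varepsilon}\Big|_{\varepsilon=0}\mathbb{E}^{\mathbb{P}_{W+\varepsilon v}}\bigl[P(x,\omega;\widetilde{W})\bigr]\,dx .
\]
The bulk of the work is to identify this derivative with the covariance expression \eqref{eqn: Kyle term in FOC}, and to justify interchanging limits with integrals.

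To do this, we move the dependence on $\varepsilon$ out of the measure and into the integrand. For this we use the pathwise change of variables on $(\Omega_\gamma,\mathcal{F}_\gamma)$. Under $\mathbb{P}_{W+\varepsilon v}$ the canonical path has the law of $\omega+\varepsilon\int_{\underaccent{\bar}{x}}^{\cdot}v(y)\,dy$ with $\omega\sim\mathbb{P}_W$ (Lemma~\ref{lemma: lemma 2 for example}). The shift $V:=\int^{\cdot}v$ is $C^1$, hence in $\Omega_\gamma$. Therefore
\[
\overline{P}(x,W+\varepsilon v;\widetilde{W})=\mathbb{E}^{\mathbb{P}_W}\bigl[P(x,\omega+\varepsilon V;\widetilde{W})\bigr].
\]
The Young integral is additive in the path, and $\int \widetilde{W}(y,u)\sigma^{-2}(y)\,dV_y=\int\widetilde{W}(y,u)\sigma^{-2}(y)v(y)\,dy$ is a Riemann integral. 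Hence
\[
\Pi_{mm}(\omega+\varepsilon V,u;\widetilde{W})=\Pi_{mm}(\omega,u;\widetilde{W})+\varepsilon\,\Pi_{insider}(v,u;\widetilde{W}).
\]
By Theorem~\ref{thm: Market Maker's Posterior}, the shifted posterior is thus an exponential tilt of $\pi_1(\cdot,\omega;\widetilde{W})$ by $e^{\varepsilon\Pi_{insider}(v,u)}$, and
\[
P(x,\omega+\varepsilon V)=\frac{\int\eta(x,u)e^{\varepsilon\Pi_{insider}(v,u)}\pi_1(du,\omega)}{\int e^{\varepsilon\Pi_{insider}(v,u)}\pi_1(du,\omega)}.
\]
Differentiating this ratio at $\varepsilon=0$ gives exactly $\Cov(\eta(x,\cdot),\Pi_{insider}(v,\cdot)\mid\omega;\widetilde{W})$, the standard derivative of an exponentially tilted mean. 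Integrating against $\mathbb{P}_W$ and then against $W(x)\,dx$ produces $\partial J_K(W)(v)$.

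The main obstacle is justifying that differentiation commutes with $\mathbb{E}^{\mathbb{P}_W}$ and with $\int dx$, i.e.\ dominated convergence. We would use the following bounds. First, $u\mapsto\widetilde{W}(\cdot,u)$ is continuous into $C^\delta$ (Assumption~\ref{assumption: MM's posterior}(iii)); to obtain $\sup_u\|\widetilde{W}(\cdot,u)\|_\delta<\infty$ we impose compactness of $S$, or else boundedness of $\widetilde{W}$ as part of the standing admissibility assumptions. Second, with $\sigma^2$ bounded below, this gives $|\Pi_{insider}(v,u)|\le M_v$ uniformly in $u$. By the mean value theorem, the difference quotient of the tilted mean is then bounded by $2\sup|\eta|\,M_v e^{2M_v}$ for $|\varepsilon|\le1$. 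This bound is independent of $\omega$ and $x$, given that $\eta$ is bounded (again part of the standing assumptions, or via integrability of $\eta$ under the posterior). Dominated convergence then applies under $\mathbb{P}_W$, which is a probability measure, and under $W(x)\,dx$ on the compact interval. The same bounds show that $\overline{P}$ is finite, so $J$ is well defined and Gâteaux differentiable.

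Finally, suppose $W$ maximizes $J$ over the admissible class $C^\delta$. This class is a linear space, so $W+\varepsilon v$ is admissible for every $v\in C^\delta$ and every real $\varepsilon$. The scalar function $\varepsilon\mapsto J(W+\varepsilon v)$ has an interior maximum at $\varepsilon=0$ and is differentiable there, so its derivative vanishes. This yields $\partial J(W;\widetilde{W},s)(v)=0$ for all $v$, which rearranges to \eqref{eqn: insider FOC}.
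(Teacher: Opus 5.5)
Your proposal is correct and follows essentially the same route as the paper. The Cameron--Martin/Girsanov shift turns the $W$-dependence of the order-flow law into an exponential tilt of the posterior by $e^{\varepsilon\,\Pi_{insider}(v,\cdot\,;\widetilde W)}$, and differentiating the tilted mean gives the conditional covariance. The differences are minor: you shift from $\mathbb{P}_W$ by $\varepsilon\int v$ rather than from $\mathbb{P}_{0,\gamma}$ by $\int W$, and you make explicit (via uniform bounds on $\eta$ and $\widetilde W(\cdot,u)$) the dominated-convergence step that the paper only covers by appeal to ``standing regularity conditions.''
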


\begin{proof}
Fix $s\in S$ and a market-maker belief $\widetilde{W}(\,\cdot\,,\cdot\,)$. Recall from
\eqref{eqn: informed trader's problem informal}--\eqref{eqn: informed trader's problem} that
\[
J(W;\widetilde{W},s)
=
\int_{\underaccent{\bar}{x}}^{\bar{x}} W(x)\,\eta(x,s)\,dx
-
\int_{\underaccent{\bar}{x}}^{\bar{x}} W(x)\,\overline{P}(x,W;\widetilde{W})\,dx,
\]
where $\overline{P}(x,W;\widetilde{W})=\mathbb{E}^{\mathbb{P}_{\scriptscriptstyle W}}\!\bigl[P(x,\omega \kern 0.045em;\widetilde{W})\bigr]$ and
\[
P(x,\omega \kern 0.045em;\widetilde{W})
=
\int_S \eta(x,s')\,\pi_1(ds',\omega \kern 0.045em;\widetilde{W})
\]
by Theorem~\ref{thm: Market Maker's Posterior}; see \eqref{eqn: rigorous expression for posterior}.

\medskip\noindent
\textit{Step 1: Push-forward representation $\overline{P}$ (with explicit $W$-dependence).}
Using \eqref{eqn: rigorous expression for posterior}, define
\begin{equation}
\label{eqn: caligraphic I}
\mathcal{I}(\omega, s ; \widetilde{W})
=
\int_{\underaccent{\bar}{x}}^{\bar{x}} \frac{\widetilde{W}(x,s)}{\sigma^2(x)}\,d\omega_x
-\frac12\int_{\underaccent{\bar}{x}}^{\bar{x}} \frac{\widetilde{W}(x,s)^2}{\sigma^2(x)}\,dx .
\end{equation}
Then
\begin{align}
\nopagebreak
\overline{P}(x,W;\widetilde{W})
&=\mathbb{E}^{\mathbb{P}_{\scriptscriptstyle W}}\!\Bigl[\int_S \eta(x,u)\,\pi_1(du,\omega \kern 0.045em;\widetilde{W})\Bigr] \nonumber\\
&=\mathbb{E}^{\mathbb{P}_{\scriptscriptstyle W}}\!\Bigl[\int_S \eta(x,u)\,
\frac{e^{\mathcal{I}(\omega,u;\widetilde{W})}\,\pi_0(du)}{C(\omega \kern 0.045em;\widetilde{W})}\Bigr] \nonumber\\
&=\mathbb{E}^{\mathbb{P}_{\scriptscriptstyle 0,\gamma}}\!\Bigl[\int_S \eta(x,u)\,
\frac{e^{\mathcal{I}(\omega,u;\widetilde{W})}\,
e^{\int_{\underaccent{\bar}{x}}^{\bar{x}} \frac{\widetilde{W}(x',u)\,W(x')}{\sigma^2(x')}\,dx'}\,
\pi_0(du)}{C'(\omega \kern 0.045em;\widetilde{W})}\Bigr],
\label{eqn: pushed forward AD bar}
\end{align}
where
\begin{equation}
\label{eqn: C prime}
C'(\omega \kern 0.045em;\widetilde{W})
=
\int_S e^{\mathcal{I}(\omega,u;\widetilde{W})}\,
e^{\int_{\underaccent{\bar}{x}}^{\bar{x}} \frac{\widetilde{W}(x',u)\,W(x')}{\sigma^2(x')}\,dx'}\,\pi_0(du).
\end{equation}
The identity \eqref{eqn: pushed forward AD bar} is the usual Cameron--Martin/Girsanov shift for the order-flow dynamics \eqref{eqn: dY_x}
(cf.\ \cite{karatzas2012brownian}): under $\mathbb{P}_{\scriptscriptstyle W}$ the canonical process has the same law as
$x\mapsto \omega_x+\int_{\underaccent{\bar}{x}}^{x} W(x')\,dx'$ under $\mathbb{P}_{\scriptscriptstyle 0,\gamma}$.

\medskip\noindent
\textit{Step 2: The payoff term.}
The map $W\mapsto \int_{\underaccent{\bar}{x}}^{\bar{x}} W(x)\,\eta(x,s)\,dx$ is linear, hence its G\^{a}teaux derivative in direction $v$ equals
$\int_{\underaccent{\bar}{x}}^{\bar{x}} v(x)\,\eta(x,s)\,dx$, which is \eqref{eqn: marginal payoff}.

\medskip\noindent
\textit{Step 3: The cost term.}
Let
\[
J_c(W) := \int_{\underaccent{\bar}{x}}^{\bar{x}} W(x)\,\overline{P}(x,W;\widetilde{W})\,dx,
\qquad
f(\varepsilon) := J_c(W+\varepsilon v).
\]
Under the standing regularity conditions (so that differentiation may be interchanged with integration/expectation),
\begin{align}
\label{eqn: two integrals}
\partial \! \kern 0.08em J_c(W)(v)
&= f'(0) \nonumber\\
&=\int_{\underaccent{\bar}{x}}^{\bar{x}} v(x)\,\overline{P}(x,W;\widetilde{W})\,dx
+\int_{\underaccent{\bar}{x}}^{\bar{x}} W(x)\,g(x)\,dx,
\end{align}
where $g(x):=\left.\frac{d}{d\varepsilon}\right|_{\varepsilon=0}\overline{P}\bigl(x,W+\varepsilon v;\widetilde{W}\bigr)$.
The first integral in \eqref{eqn: two integrals} is exactly \eqref{eqn: AD term}.

\medskip\noindent
\textit{Step 4: Identifying $g(x)$ as a conditional covariance.}
For fixed $W$ and $(u,\omega)$, define
\begin{equation}
\label{eqn: dots}
l(u,\omega; \widetilde{W})
:=
\exp\!\Bigl(\mathcal{I}(\omega,u;\widetilde{W})\Bigr)\,
\exp\!\Bigl(\int_{\underaccent{\bar}{x}}^{\bar{x}}
\frac{\widetilde{W}(x',u)\,W(x')}{\sigma^2(x')}\,dx'\Bigr).
\end{equation}
Then \eqref{eqn: C prime} reads $C'(\omega \kern 0.045em;\widetilde{W})=\int_S l(u,\omega; \widetilde{W})\,\pi_0(du)$, and the integrand in \eqref{eqn: pushed forward AD bar} is
\[
\int_S \eta(x,u)\,\frac{l(u,\omega; \widetilde{W})}{C'(\omega \kern 0.045em;\widetilde{W})}\,\pi_0(du).
\]
For a direction $v$, set
\[
\Pi_{insider}(v,u;\widetilde{W})
:=
\int_{\underaccent{\bar}{x}}^{\bar{x}} \frac{\widetilde{W}(x',u)}{\sigma^2(x')}\,v(x')\,dx'.
\]
Differentiating the normalized weight $l/C'$ in direction $v$ yields the usual centered term
\[
\Pi_{insider}(v,\cdot;\widetilde{W})-\mathbb{E}\!\left[\Pi_{insider}(v,\cdot;\widetilde{W})\mid \omega \kern 0.045em;\widetilde{W}\right],
\]
and hence the standard covariance identity: for each fixed $\omega$,
\[
g(x)
=
\mathbb{E}^{\mathbb{P}_{\scriptscriptstyle W}}\!\Bigl[
\Cov\!\bigl(\eta(x,\cdot),\Pi_{insider}(v,\cdot;\widetilde{W}) \mid \omega \kern 0.045em;\widetilde{W}\bigr)
\Bigr].
\]
Substituting this expression for $g(x)$ into \eqref{eqn: two integrals} identifies the second integral in \eqref{eqn: two integrals} with
\eqref{eqn: Kyle term in FOC}. Combining \eqref{eqn: marginal payoff} with \eqref{eqn: two integrals} gives the stated decomposition of
$\partial \! \kern 0.08em J(W;\widetilde{W},s)(v)$, and hence the first-order condition \eqref{eqn: insider FOC}.
This proves the theorem.
\end{proof}

The first-order condition~\eqref{eqn: insider FOC} is an identity of (bounded) linear functionals on the admissible tangent space. Via~\eqref{eqn: marginal payoff},
the payoff derivative $\partial \! \kern 0.08em J_p(W)$ is represented under the integral pairing by the signal--conditional payoff profile $x\mapsto \eta(x,s)$.
Likewise, via~\eqref{eqn: AD term}, the Arrow--Debreu cost component $\partial \! \kern 0.08em J_{\scriptscriptstyle \! AD}(W)$ is represented by the pricing kernel
$x\mapsto \overline{P}(x,W;\widetilde{W})$; this is the prevailing marginal execution price. The remaining term
$\partial \! \kern 0.08em J_{\scriptscriptstyle \! K}(W)$ is the price-impact functional---it captures the additional marginal cost from cross-market repricing induced by information leakage.

What distinguishes the price-impact functional from the payoff and Arrow--Debreu terms is that it is entirely mediated by inference.
We next consider its underlying mechanism and the resulting cross-market repricing effect.

\subsection{Cross-Market Price Impact}
\label{sec: Kyle Term and Cross-Market Price Impact}

\paragraph{The Price-Impact Functional $\partial \! \kern 0.08em J_{\scriptscriptstyle \! K}(W)$}
The functional $\partial \! \kern 0.08em J_{\scriptscriptstyle \! K}(W)$ arises from the $W$-dependence of the pricing
kernel $x\mapsto \overline{P}(x,W;\widetilde{W})$, which enters through posterior updating.
By the covariance representation~\eqref{eqn: Kyle term in FOC}, for a given aggregate order-flow realization
$\omega$ the G\^ateaux derivative of the (random) price $P(x,\omega\kern 0.045em;\widetilde{W})$ in the direction
$v$ is
\begin{equation}
\label{eqn: price impact of v}
\Cov \Bigl(\eta(x,\cdot),\Pi_{insider}(v,\cdot;\widetilde{W})\mid \omega \kern 0.045em;\widetilde{W}\Bigr),
\end{equation}
that is, the conditional covariance (over the signal variable) computed under the market maker's posterior on $S$
given $\omega$ (with $\widetilde{W}$ fixed).
Here $\eta(x,\,\cdot\,)$ is the signal-wise payoff profile of security $x$, and
$\Pi_{insider}(v,\,\cdot\,;\widetilde{W})$ is the insider’s noise-adjusted projection coefficient of $v$ against the
signal-contingent profile $\widetilde{W}(\,\cdot\,,s)$ in the $\sigma^{-2}$-weighted pairing, as in
Definition~\ref{def: overlap measures}(ii).
Taking expectation over $\omega$ under $\mathbb{P}_{\scriptscriptstyle W}$ then yields $\partial \! \kern 0.08em J_{\scriptscriptstyle \! K}(W)(v)$.

The conditional covariance in \eqref{eqn: price impact of v} has a direct finance meaning: it is the \emph{information footprint} of $v$ for security $x$---how much information a marginal $v$-trade reveals about $x$’s payoff.
Accordingly, the price impact \eqref{eqn: price impact of v} is high precisely when, under the market maker’s belief, trades in direction $v$ are viewed as highly informative about $x$’s payoff.

\paragraph{Pairwise Cross-Impact}
When $v$ represents a unit position in security $y$, \eqref{eqn: price impact of v} specializes to the
cross price impact of $y$-order flow on the price of $x$.
To make this rigorous, fix $y\in(\underaccent{\bar}{x},\bar{x})$ and choose an approximate identity
$(v_n)_{n\ge 1}\subset C^\delta([\underaccent{\bar}{x},\bar{x}])$ centered at $y$
(e.g., $v_n\ge 0$, $\int v_n(z)\,dz=1$, and $v_n \Rightarrow \delta_y$ weakly as measures).
Then, by Definition~\ref{def: overlap measures}(ii),
\[
\Pi_{\mathrm{insider}}(v_n,\,\cdot\,;\widetilde W)
=\int_{\underaccent{\bar}{x}}^{\bar{x}} \frac{\widetilde W(z,\cdot)}{\sigma^2(z)}\,v_n(z)\,dz
\;\longrightarrow\;
\frac{\widetilde W(y,\cdot)}{\sigma^2(y)}
\qquad\text{in }L^2(S,\mu),
\]
where the convergence is the standard approximation-of-identity limit, under our standing regularity assumptions.
We may therefore extend $\Pi_{\mathrm{insider}}(\,\cdot\,,\,\cdot\,;\widetilde W)$ to the unit-mass trade $\delta_y$ by setting
\begin{equation}
\label{eqn:Pi_insider_delta_y}
\Pi_{\mathrm{insider}}(\delta_y,\,\cdot\,;\widetilde W)=\frac{\widetilde W(y,\cdot)}{\sigma^2(y)}.
\end{equation}
In the options interpretation, $\delta_y$ corresponds to the familiar vanishing-width ``spike'' localized at strike $y$; see, e.g., \cite{hull2003options}.

\begin{corollary}[Cross Price Impact]
\label{cor: def of price impact}
Let $x\in[\underaccent{\bar}{x},\bar{x}]$ and $y\in(\underaccent{\bar}{x},\bar{x})$, and write $\frac{\partial}{\partial W(y)}$ for the directional derivative corresponding to a unit trade at $y$
(realized via any approximate identity converging to $\delta_y$ as above).
Under market maker belief $\widetilde W$, the cross price impact of a unit trade in $y$ on the price of $x$,
conditional on aggregate order flow $\omega$, is
\begin{equation}
\label{eqn: conditional price impact y on x}
\frac{\partial}{\partial W(y)}\,P(x,\omega\kern0.045em;\widetilde W)
=
\frac{1}{\sigma^2(y)}\,
\Cov\!\Bigl(\eta(x,\cdot),\,\widetilde W(y,\cdot)\mid \omega\kern0.045em;\widetilde W\Bigr),
\end{equation}
where the conditional covariance is taken over the signal variable under the market maker's posterior on $S$
given $\omega$.
Consequently, for the expected pricing kernel
$\overline P(x,W;\widetilde W):=\mathbb{E}^{\mathbb{P}_{\scriptscriptstyle W}}\!\bigl[P(x,\omega\kern0.045em;\widetilde W)\bigr]$,
the expected cross price impact is
\begin{equation}
\label{eqn: price impact y on x}
\frac{\partial}{\partial W(y)}\,\overline P(x,W;\widetilde W)
=
\mathbb{E}^{\mathbb{P}_{\scriptscriptstyle W}}\!\left[
\frac{\partial}{\partial W(y)}\,P(x,\omega\kern0.045em;\widetilde W)
\right].
\end{equation}
\end{corollary}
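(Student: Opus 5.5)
The plan is to compute the conditional derivative \eqref{eqn: conditional price impact y on x} for a smooth direction $v$ first, and then pass to the limit along an approximate identity $v_n\Rightarrow\delta_y$. The expected version \eqref{eqn: price impact y on x} then follows by exchanging the limit with $\mathbb{E}^{\mathbb{P}_{\scriptscriptstyle W}}$.

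\textbf{Step 1: the conditional derivative for smooth $v$.} Fix $\omega\in\Omega_\gamma$. A trade perturbation $\varepsilon v$ shifts the realized order-flow path to $\omega+\varepsilon\int_{\underaccent{\bar}{x}}^{\cdot}v$. Since $v\in C^\delta$, the Young integral of Lemma~\ref{lemma: lemma 1 for example}(i) is linear in the integrator and reduces to a Riemann integral against the absolutely continuous part. Hence
\[
\Pi_{mm}\Bigl(\omega+\varepsilon\textstyle\int v,\,u;\widetilde W\Bigr)=\Pi_{mm}(\omega,u;\widetilde W)+\varepsilon\,\Pi_{insider}(v,u;\widetilde W).
\]
By Theorem~\ref{thm: Market Maker's Posterior}, the posterior is the exponential tilt $e^{\mathcal I}\pi_0/C$. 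The shift therefore multiplies the unnormalized weight by $e^{\varepsilon\Pi_{insider}(v,u;\widetilde W)}$. Differentiating the ratio $\int\eta(x,u)e^{\varepsilon\Pi}\,d\pi_1\big/\int e^{\varepsilon\Pi}\,d\pi_1$ at $\varepsilon=0$ gives the standard centered identity, exactly as in Step~4 of the proof of Theorem~\ref{thm: informed trader 1}:
\[
\frac{d}{d\varepsilon}\Big|_{\varepsilon=0}P(x,\omega+\varepsilon\textstyle\int v;\widetilde W)=\Cov\bigl(\eta(x,\cdot),\Pi_{insider}(v,\cdot;\widetilde W)\mid\omega;\widetilde W\bigr),
\]
which is \eqref{eqn: price impact of v}. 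Interchanging $d/d\varepsilon$ with $\int_S$ is justified by dominated convergence, provided $\eta(x,\cdot)$ and $\widetilde W$ are bounded in $s$ on the relevant set. Alternatively one can assume exponential moments under $\pi_0$; I would add this as part of the ``standing regularity'' the paper invokes.

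\textbf{Step 2: the limit $v_n\to\delta_y$.} Take $v_n$ an approximate identity at $y$. By \eqref{eqn:Pi_insider_delta_y}, $\Pi_{insider}(v_n,\cdot;\widetilde W)\to\widetilde W(y,\cdot)/\sigma^2(y)$. In fact the convergence is uniform in $s$ whenever $\{\widetilde W(\cdot,s)\}$ is equi-continuous, for instance if it is bounded in $C^\delta$ over the support, via Assumption~\ref{assumption: MM's posterior}(iii) and (iv). Uniform convergence together with bounded $\eta(x,\cdot)$ gives convergence of the posterior covariances. Pulling out the constant $1/\sigma^2(y)$ by bilinearity yields \eqref{eqn: conditional price impact y on x}. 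The limit does not depend on the chosen approximate identity, which justifies writing $\partial/\partial W(y)$.

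\textbf{Step 3: the expected impact.} For \eqref{eqn: price impact y on x}, write $\overline P$ via the push-forward representation \eqref{eqn: pushed forward AD bar}: the expectation is taken under the fixed measure $\mathbb{P}_{0,\gamma}$, and the dependence on $W$ sits only in the tilt factor. Differentiating under $\mathbb{E}^{\mathbb{P}_{0,\gamma}}$, then pulling back by Girsanov (Lemma~\ref{lemma: lemma 2 for example}), gives $\mathbb{E}^{\mathbb{P}_{\scriptscriptstyle W}}$ of the $\omega$-wise derivative from Steps~1--2. Note that Step~1 evaluated on the shifted path is precisely the derivative of the tilted weight. The exchange needs a dominating function. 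A conditional covariance of bounded quantities is bounded uniformly in $\omega$ by $\|\eta(x,\cdot)\|_\infty\sup_{n,s}|\Pi_{insider}(v_n,s)|$, so dominated convergence applies both for the $\varepsilon$-derivative and for the $n\to\infty$ limit.

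\textbf{Main obstacle.} The hard part is the uniform-in-$(n,s,\omega)$ domination that allows interchanging the limit in $n$, the derivative, and the integrals over $S$ and $\Omega$. If $\widetilde W$ is unbounded in $s$, I would first try to control the tilts through Gaussian exponential moments of $\Pi_{mm}$ under $\mathbb{P}_{0,\gamma}$, using that it is Gaussian with variance $\int\widetilde W^2/\sigma^2$, combined with an integrability assumption on $\pi_0$.
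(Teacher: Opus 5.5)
Your proposal is correct and follows the same route as the paper. Like the paper, you use the covariance identity \eqref{eqn: price impact of v} for a smooth direction $v$, pass to the limit along an approximate identity $v_n\Rightarrow\delta_y$ via \eqref{eqn:Pi_insider_delta_y}, and then take expectation under $\mathbb{P}_{\scriptscriptstyle W}$. Your version is more explicit than the paper's three-line argument in three ways. You realize the perturbation as the path shift $\omega\mapsto\omega+\varepsilon\int v$, which is the right reading of the paper's $P(x,\omega;\widetilde W+\varepsilon v_n)$: perturbing the belief literally would flip the sign of the covariance. You replace the paper's bare $L^2(S,\mu)$ limit with uniform-in-$s$ convergence of $\Pi_{insider}(v_n,\cdot\,;\widetilde W)$ and explicit domination. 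The cost is boundedness assumptions on $\eta(x,\cdot)$ and $\widetilde W$, which the paper leaves to ``standing regularity.''
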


\begin{proof}
By the covariance representation~\eqref{eqn: price impact of v}
\[
\frac{\partial}{\partial \varepsilon}P\!\left(x,\omega\kern0.045em;\widetilde W+\varepsilon v_n\right)\Big|_{\varepsilon=0}
=
\Cov\!\left(\eta(x,\cdot),\,\Pi_{\mathrm{insider}}(v_n,\cdot;\widetilde W)\mid \omega\kern0.045em;\widetilde W\right).
\]
for any approximate identity $(v_n)_{n\ge1}$ with $v_n\Rightarrow \delta_y$.
Letting $n\to\infty$ and passing to $L^2$-limit~\eqref{eqn:Pi_insider_delta_y} yields \eqref{eqn: conditional price impact y on x}. Taking expectations under
$\mathbb{P}_{\scriptscriptstyle W}$ gives \eqref{eqn: price impact y on x}.
\end{proof}

Corollary~\ref{cor: def of price impact} generalizes the standard ``informativeness $\times$ liquidity'' principle from market microstructure.
The (posterior) covariance between
$\widetilde{W}(y,\,\cdot\,)$ and $\eta(x,\,\cdot\,)$ captures what the market maker infers about $x$'s payoff
from order flow in $y$: if it is positive, then high demand for $y$ is viewed as evidence that $x$ has a high payoff,
so buys of $y$ raise the price of $x$, and vice versa. The factor $\sigma^{-2}(y)$ is the
usual liquidity effect: more noise trading in $y$ makes $y$-order flow less informative and therefore
attenuates this cross-impact.
The next example gives a simple, concrete illustration.

\begin{example}
\label{example: remark on price impact from insider perspective}
Fix two distinct signals $s\neq s'$. Suppose security $y$ pays only under $s$ (i.e., $\eta(y,t)=0$ for $t\neq s$) and the market maker believes (plausibly) that the insider buys $y$ only under $s$ (i.e., $\widetilde W(y,t)=0$ for $t\neq s$ with $\widetilde W(y,s)>0$). 
So the market maker views buy order flow in $y$ as evidence in favor of signal $s$.

\begin{enumerate}[label=(\roman*)]
\item If security $x$ also pays only under $s$ (i.e., $\eta(x,t)=0$ for $t\neq s$), then
\[
\Cov \bigl(\eta(x,\cdot),\widetilde W(y,\cdot)\mid \omega;\widetilde W\bigr)\ge 0
\] and the cross-impact of $y$ on $x$ is positive: buying $y$ makes $s$ more likely---under the market maker's belief---and therefore raises the price of $x$.

\item If instead $x$ pays only under $s'$ (i.e., $\eta(x,t)=0$ for $t\neq s'$), then the cross-impact is negative: buying $y$ shifts probability mass away from $s'$, lowering the price of $x$.
\end{enumerate}
\end{example}

\subsection{Viability}
\label{sec: no arb}

Viability is a general necessary condition for equilibrium---at equilibrium prices, agents' optimization problems must be well-posed.
In reduced-form models, where prices are taken as \emph{given}, no-arbitrage is typically imposed as a pricing restriction (e.g., existence of an equivalent martingale measure).
Here, as in standard finance models, prices are equilibrium \emph{outcomes}, jointly determined with agents' optimal actions.
Accordingly, the relevant equilibrium restriction is the stronger viability condition: the insider's problem must be well-posed, i.e., there is no portfolio that can be scaled to yield unbounded expected profit (equivalently, utility) at the prevailing prices.
For a textbook discussion of the arbitrage-viability-equilibrium link, we refer to \cite[Ch.~1]{duffie2001dynamic}.

We now characterize this viability condition.
Again, the projection profile $\Pi_{\mathrm{insider}}(W,\,\cdot\,;\widetilde W)$ from Definition~\ref{def: overlap measures}(ii) naturally occurs.

\begin{definition}
\label{def: zero price impact subspace}
The \textbf{zero price impact subspace} under market maker belief $\widetilde W$ is the linear subspace
of the insider's admissible portfolios given by
\[
\mathcal{V}_0(\widetilde W)
:=\Bigl\{\,W\in C^\delta([\underaccent{\bar}{x},\bar{x}],\R) \;\Big|\;\Pi_{\mathrm{insider}}(W,\cdot;\widetilde W)=0\,\Bigr\}.
\]
\end{definition}

If some portfolio $W\in\mathcal V_0(\widetilde W)$ yielded strictly positive (expected) utility to the insider, he could scale it up with no price impact to obtain unbounded utility, so his optimization problem would be ill-posed. 
This is Theorem~\ref{thm: informed trader 2}.

\begin{theorem}[No Free Lunch with Zero Price Impact]
\label{thm: informed trader 2}
Fix a market maker belief $\widetilde W$ and a signal $s$. A necessary condition for the insider's
problem~\eqref{eqn: informed trader's problem} to admit an optimal portfolio is that
\[
J(W;\widetilde W,s)=0 \qquad \text{for all } W\in \mathcal{V}_0(\widetilde W).
\]
Otherwise, the insider can obtain unbounded utility by scaling up a zero price impact portfolio.
\end{theorem}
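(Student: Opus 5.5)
The plan is to show that on $\mathcal{V}_0(\widetilde W)$ the pricing kernel does not react to the insider's order, so $J$ becomes linear along rays in that subspace. Then I argue by contradiction using scaling.

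\emph{Step 1: prices are unaffected.} Take $W\in\mathcal V_0(\widetilde W)$ and any $c\in\R$. I will use the push-forward representation \eqref{eqn: pushed forward AD bar} from the proof of Theorem~\ref{thm: informed trader 1}. For the order $cW$, the only place the order enters $\overline P(x,cW;\widetilde W)$ is the tilt
\[
\exp\Bigl(\int_{\underaccent{\bar}{x}}^{\bar{x}} \frac{\widetilde W(x',u)\,cW(x')}{\sigma^2(x')}\,dx'\Bigr)=\exp\bigl(c\,\Pi_{insider}(W,u;\widetilde W)\bigr),
\]
which appears in both the numerator and in $C'$. Since $\Pi_{insider}(W,\cdot;\widetilde W)=0$, at least for $\pi_0$-a.e.\ $u$, this factor equals $1$. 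Hence $\overline P(x,cW;\widetilde W)=\overline P(x,0;\widetilde W)=:\overline P_0(x)$ for every $c$ and every $x$. If $\Pi_{insider}(W,\cdot)$ vanishes only $\pi_0$-a.e., the two integrals over $S$ are unchanged anyway, because $\pi_0$-null sets do not contribute.

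\emph{Step 2: $J$ is linear in the scale.} Using the representation \eqref{eqn: informed trader's problem},
\[
J(cW;\widetilde W,s)=c\int_{\underaccent{\bar}{x}}^{\bar{x}} W(x)\bigl(\eta(x,s)-\overline P_0(x)\bigr)\,dx=c\,J(W;\widetilde W,s).
\]
\emph{Step 3: contradiction.} Suppose $J(W;\widetilde W,s)\neq 0$ for some $W\in\mathcal V_0(\widetilde W)$. Note that $\mathcal V_0$ is a linear subspace, so $-W$ also lies in it, and $cW\in C^\delta$ for every $c$. Choose the sign of $c$ so that $cJ(W)>0$ and let $|c|\to\infty$. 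Then $\sup J=+\infty$ over admissible portfolios, so no maximizer exists. This gives the necessity claim, and it also shows the insider obtains unbounded utility.

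\emph{Main obstacle.} The delicate point is Step 1: justifying the Cameron--Martin shift \eqref{eqn: pushed forward AD bar} for the pathwise integral. Concretely, one needs the identity
\[
\Pi_{mm}(\omega+\textstyle\int W,\,u)=\Pi_{mm}(\omega,u)+\Pi_{insider}(W,u)
\]
pathwise. This holds because for a $C^1$ shift the Young integral splits into the $\omega$-part plus a Riemann integral. One also needs the finiteness of $\overline P_0$, which makes $J$ finite. I would take both from the standing assumptions, as in the proof of Theorem~\ref{thm: informed trader 1}.
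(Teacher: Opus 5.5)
Your proof is correct and follows the paper's route: the pushed-forward representation \eqref{eqn: pushed forward AD bar} shows that $\overline P(\,\cdot\,,W;\widetilde W)$ depends on the order only through $\Pi_{insider}(W,\cdot\,;\widetilde W)$. Prices are therefore frozen along $\mathcal V_0(\widetilde W)$, $J$ is linear there, and scaling yields unbounded utility. Your choice to scale from the origin ($cW$), rather than perturb an arbitrary admissible $W$ by $\alpha V$ as the paper does, is slightly cleaner: the paper's identity $J(W+\alpha V;\widetilde W,s)=J(W;\widetilde W,s)+\alpha J(V;\widetilde W,s)$ tacitly requires $\int\bigl(\overline P(x,W;\widetilde W)-\overline P(x,0;\widetilde W)\bigr)V(x)\,dx=0$, which your version never needs.
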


\begin{proof}
Let $W$ be an admissible portfolio. By linearity of
$W\mapsto \Pi_{\mathrm{insider}}(W,\,\cdot\,;\widetilde W)$ (Definition~\ref{def: overlap measures}(ii)),
\[
\Pi_{\mathrm{insider}}(W+\alpha V,\,\cdot\,;\widetilde W)
=\Pi_{\mathrm{insider}}(W,\,\cdot\,;\widetilde W)
\qquad \text{for all } \alpha\in\R \text{ and } V\in\mathcal V_0(\widetilde W).
\]
By its pushed-forward representation~\eqref{eqn: pushed forward AD bar}, the expected price functional
$W\mapsto \overline P(\,\cdot\,,W;\widetilde W)$ depends on $W$ only through $\Pi_{\mathrm{insider}}(W,\,\cdot\,;\widetilde W)$.
Hence, for any admissible $W$ and $V\in\mathcal V_0(\widetilde W)$,
\[
\overline P(\,\cdot\,,W+\alpha V;\widetilde W)
=\overline P(\,\cdot\,,W;\widetilde W)
\qquad \text{for all }\alpha\in\R.
\]
Fix $s\in S$ and set $\overline P_W:=\overline P(\,\cdot\,,W;\widetilde W)$. Using the definition of the expected utility functional $J$,
\[
J(W+\alpha V;\widetilde W,s)
=\int_{\underaccent{\bar}{x}}^{\bar{x}}\bigl(\eta(x,s)-\overline P_W(x)\bigr)\bigl(W(x)+\alpha V(x)\bigr)\,dx
=J(W;\widetilde W,s)+\alpha\,J(V;\widetilde W,s).
\]
If there exists $V\in\mathcal V_0(\widetilde W)$ with $J(V;\widetilde W,s)\neq 0$, then choosing $\alpha$ of the appropriate sign
and letting $|\alpha|\to\infty$ makes $J(W+\alpha V;\widetilde W,s)$ arbitrarily large. Therefore, the value of
problem~\eqref{eqn: informed trader's problem} is $+\infty$ and no maximizer can exist. It follows that a necessary
condition for an optimal portfolio is
\[
J(V;\widetilde W,s)=0 \qquad \text{for all } V\in\mathcal V_0(\widetilde W),
\]
as claimed.
\end{proof}

\section{Canonical Game}
\label{sec: finite dim reduction}

In this section we carry out the canonical game reduction.
We begin by noting that, in equilibrium, the insider's zero-payoff portfolios must coincide with the zero price impact portfolios.
Indeed, under the correct equilibrium belief (Definition~\ref{def: equilibrium}), a zero-payoff portfolio is payoff-irrelevant and therefore has zero price impact.
Conversely, by Theorem~\ref{thm: informed trader 2}, any zero price impact portfolio must yield zero payoff; otherwise it could be scaled up without moving prices, yielding unbounded utility.

This lets us characterize the portfolios that are payoff-relevant for the insider in equilibrium.
Throughout, define
\[
\langle f,g\rangle_{\sigma}
:=\int_{\underaccent{\bar}{x}}^{\bar{x}} \frac{f(x)g(x)}{\sigma^2(x)}\,dx,
\qquad 
\|f\|_\sigma^2:=\langle f,f\rangle_\sigma,
\]
and let $\mathcal H_\sigma:=L^2([\underaccent{\bar}{x},\bar{x}],\sigma^{-2}(x)\,dx)$ be the associated Hilbert space.
By Assumption~\ref{assumption: MM's posterior}(iv), $\|\cdot\|_\sigma$ is equivalent to the usual $L^2(dx)$ norm; consequently,
orthogonality and closed linear spans agree in $\mathcal H_\sigma$ and in $L^2(dx)$.

\begin{proposition}
\label{cor: finite S 1}
Without loss of generality, the insider's equilibrium portfolios $W^*(\,\cdot\,,s)$, $s\in S$, may be restricted to
\[
\mathcal M:=\overline{\operatorname{span}}\{\eta(\,\cdot\,,s):s\in S\},
\]
where the closure is taken in $L^2(dx)$ (equivalently, in $\mathcal H_\sigma$).
\end{proposition}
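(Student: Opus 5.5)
We decompose any equilibrium portfolio orthogonally in $\mathcal H_\sigma$ and show that the component orthogonal to $\mathcal M$ can be dropped without affecting anyone's payoffs or beliefs. The key is that orthogonality is insensitive to the weight $\sigma^{-2}$. Since $\mathcal M$ is a closed subspace of $\mathcal H_\sigma$, we write
\[
W^*(\cdot,s)=W^*_{\mathcal M}(\cdot,s)+W^*_\perp(\cdot,s),
\]
where the first term is the $\mathcal H_\sigma$-projection onto $\mathcal M$. Assumption~\ref{assumption: MM's posterior}(iv) makes $\|\cdot\|_\sigma$ equivalent to the $L^2(dx)$ norm. However, the plain orthogonal complements differ, since $f\perp_{L^2}\mathcal M$ iff $f/\sigma^{2}\perp_\sigma \mathcal M$. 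So the natural split is to project in $L^2(dx)$, which gives
\[
\int W^*_\perp(x,s)\,\eta(x,s')\,dx=0 \quad\text{for all } s'.
\]
We then check that everything the game depends on is unchanged when $W_\perp$ is removed.

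\textbf{Step 1 (payoff irrelevance).} For each realized $s$, the payoff term $\int W\eta(\cdot,s)\,dx$ does not depend on $W^*_\perp$, because $\eta(\cdot,s)\in\mathcal M$.

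\textbf{Step 2 (the cost term).} The insider's expected cost is
\[
\int W(x)\,\overline P(x,W;\widetilde W)\,dx .
\]
Here $\overline P(x,\cdot)=\int\eta(x,u)\,\rho_W(du)$ is a mixture of the functions $\eta(\cdot,u)$, with $\rho_W$ the $\mathbb P_W$-averaged posterior. Pairing $W^*_\perp$ against such a mixture therefore gives zero (Fubini, given boundedness of $\eta$). So $W^*_\perp$ contributes nothing to the cost.

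\textbf{Step 3 (invariance of prices and the equilibrium).} We must show the pricing kernel is the same under the truncated conjecture $\widetilde W_{\mathcal M}$. There are two routes. First, the equilibrium remark preceding the proposition says a zero-payoff portfolio has zero price impact; Theorem~\ref{thm: informed trader 2} is the converse. Given that, $W^*_\perp(\cdot,s)$ lies in $\mathcal V_0(W^*)$, so by the push-forward representation~\eqref{eqn: pushed forward AD bar} the expected prices faced by the insider are unchanged when he drops $W^*_\perp$. Second, and more directly, if $W^*_\perp$ were nonzero and price-relevant, the market maker's posterior would load on a payoff-irrelevant direction; we would replace $W^*$ by $W^*_{\mathcal M}$ and verify the equilibrium conditions anew. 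For that verification:
\begin{itemize}
\item In the posterior~\eqref{eqn: rigorous expression for posterior}, the term $\Pi_{mm}$ changes by $\int W^*_\perp\sigma^{-2}\,d\omega$.
\item Its $\mathbb P_W$-law is that of a Gaussian independent of the $\mathcal M$-relevant statistics, if we use the $\mathcal H_\sigma$ projection.
\end{itemize}
This exposes the tension between the two inner products, and reconciling it is the main obstacle. I would resolve it by using the $\mathcal H_\sigma$ projection for the statistic and invoking Step 1–2 logic in the $\langle\cdot,\cdot\rangle_\sigma$ pairing, or by appealing to the equilibrium identity $\mathcal V_0=$ zero-payoff set, under which both complements coincide on equilibrium portfolios.

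\textbf{Step 4 (optimality).} Since the truncated strategy yields the same utility $J$ for every $s$, and no deviation becomes more profitable because prices are unchanged, $W^*_{\mathcal M}$ is again an equilibrium. This is the sense of ``without loss of generality'' in the proposition.
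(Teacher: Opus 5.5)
Your Steps~1--2 are fine, and Route~1 in Step~3 is exactly the paper's proof: decompose in $L^2(dx)$, note that the remainder pays zero under every signal, and invoke the assertion that zero-payoff portfolios have zero price impact in equilibrium. The paper handles the tension you flag by declaring that orthogonality in $L^2(dx)$ and in $\mathcal H_\sigma$ agree, and you are right that it does not. Your relation is inverted, though. The correct one is $f\perp_{L^2}\mathcal M\iff\sigma^2 f\perp_\sigma\mathcal M$, so the zero-payoff set is $\mathcal M^{\perp_{L^2}}=(\sigma^2\mathcal M)^{\perp_\sigma}$.

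The gap is that you leave this ``main obstacle'' open, and neither of your proposed resolutions closes it. Route~1 shows only that $W^*_{\mathcal M}(\cdot,s)$ does as well as $W^*(\cdot,s)$ against the \emph{original} conjecture $W^*$. An equilibrium needs $W^*_{\mathcal M}(\cdot,s)$ to be a best reply to the \emph{truncated} conjecture $W^*_{\mathcal M}$. Under that conjecture $\Pi_{mm}$, the posterior and the whole pricing kernel change, so your Step~4 claim that ``prices are unchanged'' is unsupported.

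The identity you lean on also points away from $\mathcal M$. The paper only asserts it; Theorem~\ref{thm: informed trader 2} supplies just the viability direction. Since $\mathcal V_0(W^*)=\{V:\langle V,W^*(\cdot,u)\rangle_\sigma=0\ \forall u\}$, ``zero payoff $\Rightarrow$ zero impact'' says $W^*(\cdot,u)/\sigma^2\in\mathcal M$, i.e.\ $W^*(\cdot,u)\in\sigma^2\mathcal M$. So the two complements do not ``coincide on equilibrium portfolios'' lying in $\mathcal M$; they coincide on $\sigma^2\mathcal M$.

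For nonconstant $\sigma$ this cannot be patched, because restricting to $\mathcal M$ collides with Theorem~\ref{thm: informed trader 2}. Take $S=\{s_1,s_2\}$ and any conjecture with $\widetilde W(\cdot,s_i)\in\mathcal M$.

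For $V\in\mathcal V_0(\widetilde W)$, the push-forward formula~\eqref{eqn: pushed forward AD bar} gives $\overline P(\cdot,V;\widetilde W)=\overline P(\cdot,0;\widetilde W)=\rho\,\eta(\cdot,s_1)+(1-\rho)\,\eta(\cdot,s_2)$ with $\rho\in(0,1)$. Hence $J(V;\widetilde W,s_1)=(1-\rho)\int V\,(\eta(\cdot,s_1)-\eta(\cdot,s_2))\,dx$.

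The subspace $\mathcal V_0(\widetilde W)$ is cut out by only the two constraints $\int V\,\widetilde W(\cdot,s_i)\,\sigma^{-2}\,dx=0$. So by density there is a $C^\delta$ function $V$ in it making this nonzero, unless $\eta(\cdot,s_1)-\eta(\cdot,s_2)\in\operatorname{span}\{\widetilde W(\cdot,s_i)/\sigma^2\}\subseteq\sigma^{-2}\mathcal M$. That exception means $\sigma^2(\eta(\cdot,s_1)-\eta(\cdot,s_2))\in\mathcal M$.

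Thus whenever $\sigma^2(\eta(\cdot,s_1)-\eta(\cdot,s_2))\notin\mathcal M$ (the generic case), no profile valued in $\mathcal M$ is viable. Restricting equilibrium portfolios to $\mathcal M$ then rules out equilibrium altogether.

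The subspace on which your decomposition argument is internally consistent is $\sigma^2\mathcal M=\overline{\operatorname{span}}\{\sigma^2\eta(\cdot,s):s\in S\}$. Its $\mathcal H_\sigma$-complement is exactly the zero-payoff set, and under any conjecture valued in $\sigma^2\mathcal M$ that complement has zero price impact. Your argument, like the paper's, can go through only when $\sigma^2\mathcal M=\mathcal M$ (e.g.\ constant $\sigma$). Even then, the identity ``zero payoff $\Rightarrow$ zero impact'' for an arbitrary equilibrium has to be proved, not assumed.
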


\begin{proof}
Fix \(s\in S\). Decompose any admissible portfolio as
\[
W(\,\cdot\,,s)=W_{\mathcal M}(\,\cdot\,,s)+W_{\mathcal M^\perp}(\,\cdot\,,s),
\qquad
W_{\mathcal M}(\,\cdot\,,s)\in\mathcal M,\ \ W_{\mathcal M^\perp}(\,\cdot\,,s)\in\mathcal M^\perp,
\]
where \(\mathcal M^\perp\) denotes the orthogonal complement in \(L^2(dx)\) (equivalently in \(\mathcal H_\sigma\)).
By definition of \(\mathcal M\), for every \(s'\in S\),
\[
\int_{\underaccent{\bar}{x}}^{\bar x} W_{\mathcal M^\perp}(x,s)\,\eta(x,s')\,dx=0,
\]
so \(W_{\mathcal M^\perp}(\,\cdot\,,s)\) has identically zero payoff across signal realizations. In equilibrium,
zero-payoff portfolios coincide with zero price impact portfolios (as noted above), hence adding
\(W_{\mathcal M^\perp}(\,\cdot\,,s)\) does not change equilibrium prices and does not affect expected utility.
Therefore, replacing \(W(\,\cdot\,,s)\) by its projection \(W_{\mathcal M}(\,\cdot\,,s)\) is without loss of generality.
\end{proof}

\medskip

\paragraph{Coefficient Representation}
We now make the reduction in Proposition~\ref{cor: finite S 1} explicit in coefficient space.
Define the coefficient synthesis operator $T\colon L^2(S,\mu)\to\mathcal H_\sigma$ by
\begin{equation}
\label{eqn: T_operator_definition}
(T\theta)(x):=\int_S \theta(s)\,\eta(x,s)\,\mu(ds), \qquad \theta\in L^2(S,\mu).
\end{equation}
with the integral understood in the Bochner sense.
Then $\mathrm{ran}(T)\subseteq \mathcal M$, and hence $\mathcal M=\overline{\mathrm{ran}(T)}$.
So every payoff-relevant portfolio $W\in\mathcal M$ admits a representation $W=T\theta$ for some coefficient
$\theta\in L^2(S,\mu)$.
Similarly, we represent a given market maker belief $\bigl(\widetilde W(\,\cdot\,,s)\bigr)_{s\in S}\subset\mathcal M$ by choosing a measurable map
$s\mapsto \tilde\theta(s)\in L^2(S,\mu)$ such that
\[
\widetilde W(\cdot,s)=T\tilde\theta(s), \qquad s\in S.
\]

\begin{definition}
\label{def: information intensity matrix} 
Fix a choice of positive square root ${\bf L}$ of $T^*T$, and call it the \textbf{information intensity operator}.
Equivalently, ${\bf L}^2$ is the integral operator
\[
({\bf L}^2 f)(s)=\int_S k(s,t)\,f(t)\,\mu(dt),
\qquad f\in L^2(S,\mu),
\]
with kernel
\begin{equation}
\label{eqn: info intensity kernel}
k(s,t):=\langle \eta(\,\cdot\,,s),\eta(\,\cdot\,,t)\rangle_{\sigma}
=\int_{\underaccent{\bar}{x}}^{\bar{x}}
\frac{\eta(x,s)\eta(x,t)}{\sigma^2(x)}\,dx.
\end{equation}
\end{definition}

${\bf L}$ is the multi-asset operator analogue of Kyle’s information-intensity parameter
$\sigma_v/\sigma_\varepsilon$ \cite{kyle1985continuous} (recalled in~\eqref{eqn: Kyle equil}).
We compare information intensity using the (partial) Loewner order on positive operators.
In this sense, larger ${\bf L}$ leads to more informative order flow in equilibrium across assets.

\paragraph{Orthonormalization/Whitening}
Let $\mathcal H_k$ be the reproducing kernel Hilbert space on $S$ with kernel $k(\,\cdot\, , \,\cdot\,)$ from \eqref{eqn: info intensity kernel},
\[
\mathcal H_k:=\overline{\mathrm{ran}({\bf L})},
\qquad
\langle {\bf L}\theta_1,{\bf L}\theta_2\rangle_{\mathcal H_k}
:=\langle \theta_1,\theta_2\rangle_{L^2(S,\mu)}
\ \ \text{on }\mathrm{ran}({\bf L}),
\]
extended by completion. By the reproducing property, evaluation at each signal $s$ is represented by the
kernel section
\begin{equation}
\label{eqn: reproducing}
f(s)=\langle f,k(\,\cdot\,,s)\rangle_{\mathcal H_k},
\qquad f\in\mathcal H_k,\ s\in S.
\end{equation}
(When $S$ is finite, $\mathcal H_k \simeq \mathbb{R}^{|S|}$, and the sections $k(\,\cdot\,,s)$ may be identified with the columns of the Gram matrix $k(\,\cdot\,,\cdot\,)$ implementing coordinate evaluation.)
Define the \emph{whitening transformation}
\begin{equation}
\label{eqn: change of basis}
\theta \mapsto \hat{\theta} := {\bf L}\theta \in \mathcal H_k .
\end{equation}
For market maker belief coefficients $s\mapsto \tilde\theta(s)$, set
\begin{equation}
\label{eqn: fin dim notation for W for all signals}
\hat\theta(s):={\bf L}\tilde\theta(s)\in\mathcal H_k,
\qquad s\in S,
\end{equation}
and the associated whitened belief operator
\begin{equation}
\label{eqn: Dhat operator}
(\widehat\Theta f)
:=\int_S f(s)\,\hat\theta(s)\,\mu(ds),
\qquad f\in \mathcal H_k,
\end{equation}
with the integral understood in the Bochner sense.

\begin{remark}
\label{rmk: normalization of information intensity}
In finance terms, the whitening transformation~\eqref{eqn: change of basis} normalizes the trading game by its information intensity---one unit in a whitened signal coordinate corresponds to one unit of information exposure to that signal.
Mathematically, this unit exposure to a signal $s$ is represented by the evaluation functional $\mathcal H_k\ni f \mapsto f(s)$.
\end{remark}

\begin{theorem}
\label{thm: canonical game}
Under the whitening transformation \eqref{eqn: change of basis}, the Bayesian trading game between the insider and the market maker is isomorphic to the following \textbf{canonical game}:

\begin{itemize}
\item For each signal $u\in S$ there is a traded \textbf{pseudo-security}. Conditional on the realized signal $s\in S$, pseudo-security $u$ pays $\mathbf 1_{\{u=s\}}$. Equivalently, an order $\hat\theta\in\mathcal H_k$ has realized payoff $\hat\theta(s)$.

\item The insider observes $s$. The market maker's prior on $S$ is $\mu$.

\item The insider submits an order $\hat\theta\in\mathcal H_k$ for the pseudo-securities. The market maker receives order flow
\[
\hat\omega = \hat\theta + \widehat X,
\]
where the noise trades (across the pseudo-securities) $\widehat X$ is an isonormal Gaussian element on $\mathcal H_k$, i.e., $\{\widehat X(f):f\in\mathcal H_k\}$ is centered Gaussian with covariance
$\E[\widehat X(f)\widehat X(g)]=\langle f,g\rangle_{\mathcal H_k}$. 

\item The market maker's belief is the measurable map $s\mapsto \hat\theta(s)$ (equivalently, the operator $\widehat\Theta$ as in \eqref{eqn: Dhat operator}).
Given $\hat\omega$, his posterior on $(S,\mathcal S)$ is
\begin{equation}
\label{eqn: pi_1 finite S new}
\hat{\pi}_1(ds\mid \hat\omega \kern 0.045em ; \widehat{\Theta})
=
\frac{
\exp\!\left( \hat\omega(\hat\theta(s)) - \frac12 \|\hat\theta(s)\|_{\mathcal H_k}^2 \right)\mu(ds)
}{
\int_S \exp\!\left( \hat\omega(\hat\theta(u)) - \frac12 \|\hat\theta(u)\|_{\mathcal H_k}^2 \right)\mu(du)
},
\end{equation}
where $\hat\omega(h):=\langle h,\hat\omega\rangle_{\mathcal H_k}$ whenever $\hat\omega\in\mathcal H_k$
(by Riesz representation).

\item Conditional on observing $s\in S$ and given the market maker's belief $\widehat\Theta$, the insider maximizes expected profit from trading the pseudo-securities:
\begin{equation}
\label{eqn: further transformed insider's problem for finite S case}
\max_{\hat{\theta} \, \in \, \mathcal H_k} \,
\hat\theta(s)
-
\int_S \hat\theta(u)\,
\hat{\pi}_1(du \mid \hat\theta + \widehat X \kern 0.045em ; \widehat{\Theta})
\ \equiv\ 
\max_{\hat{\theta} \, \in \, \mathcal H_k} J(\hat{\theta} \kern 0.045em ; \widehat{\Theta}, s),
\end{equation}
where $\hat{\pi}_1(\cdot\mid \hat\theta+\widehat X;\widehat\Theta)$ denotes \eqref{eqn: pi_1 finite S new} evaluated at the random observation $\hat\omega=\hat\theta+\widehat X$.
\end{itemize}
\end{theorem}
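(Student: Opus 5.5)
The plan is to build the isomorphism explicitly through the synthesis operator $T$ and its polar decomposition. Write $T = U{\bf L}$, where $U\colon \overline{\mathrm{ran}({\bf L})}\to\mathcal M\subseteq\mathcal H_\sigma$ is a partial isometry. By construction of $\mathcal H_k$, the map
\[
\Phi\colon \mathcal H_k\to\mathcal M,\qquad {\bf L}\theta\mapsto T\theta,
\]
is well defined and isometric on $\mathrm{ran}({\bf L})$:
\[
\|T\theta\|_\sigma^2=\langle T^*T\theta,\theta\rangle=\|{\bf L}\theta\|^2_{L^2}=\|{\bf L}\theta\|_{\mathcal H_k}^2 .
\]
It therefore extends to a unitary $\Phi\colon\mathcal H_k\to\mathcal M$. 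By Proposition~\ref{cor: finite S 1}, every payoff-relevant insider portfolio and every belief $\widetilde W(\cdot,s)$ corresponds bijectively, via $\Phi^{-1}$, to an element $\hat\theta\in\mathcal H_k$. The proof then transports each primitive of the original game (payoffs, noise, posterior, objective) through $\Phi$ and checks that the result matches the corresponding bullet of the statement.

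First, payoffs. For $W=\Phi\hat\theta$, I will show
\[
\int W(x)\eta(x,s)\,dx=\hat\theta(s).
\]
To get this with the weight $\sigma^{-2}$, I will first reparametrize positions as $W=\sigma^{2}\cdot(\text{element of }\mathcal M)$, or equivalently work with $T$ defined into $\mathcal H_\sigma$ so that the payoff pairing becomes $\langle W,\eta(\cdot,s)\rangle_\sigma$. I will fix this bookkeeping convention at the outset. With it, one checks on $W=T\theta$:
\[
\langle T\theta,\eta(\cdot,s)\rangle_\sigma=\int k(s,t)\theta(t)\,\mu(dt)=({\bf L}^2\theta)(s)=\langle {\bf L}\theta,k(\cdot,s)\rangle_{\mathcal H_k},
\]
using the reproducing property~\eqref{eqn: reproducing}. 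This identifies the payoff with evaluation $\hat\theta(s)$, which is exactly the pseudo-security description. The general case follows by density and continuity.

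Second, the noise and the posterior. The noise is white noise $\sigma\,dB$, whose Cameron--Martin space is $\mathcal H_\sigma$. Hence the linear functional $h\mapsto\int h(x)\sigma^{-2}(x)\,d\omega_x$ on $\mathcal M$, computed via the pathwise integral of Lemma~\ref{lemma: lemma 1 for example} and agreeing a.s.\ with the It\^o integral by part (iii), is an isonormal Gaussian process on $\mathcal M$ under $\mathbb P_{0}$. Pulling it back by $\Phi$ defines $\widehat X(f):=\int(\Phi f)\sigma^{-2}\,dB$, which is isonormal on $\mathcal H_k$. Under $\mathbb P_W$ (Lemma~\ref{lemma: lemma 2 for example}), the same functional equals $\langle \hat\theta,\cdot\rangle_{\mathcal H_k}+\widehat X(\cdot)$; this is the decomposition $\hat\omega=\hat\theta+\widehat X$. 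Substituting into Theorem~\ref{thm: Market Maker's Posterior}:
\[
\Pi_{mm}(\omega,s;\widetilde W)=\hat\omega(\hat\theta(s)),\qquad \tfrac12\|\widetilde W(\cdot,s)\|_\sigma^2=\tfrac12\|\hat\theta(s)\|^2_{\mathcal H_k}.
\]
This yields \eqref{eqn: pi_1 finite S new}, with $\mu$ playing the role of the prior (I will take $\mu=\pi_0$). Finally, the insider objective \eqref{eqn: informed trader's problem} becomes
\[
\hat\theta(s)-\int_S\hat\theta(u)\,\hat\pi_1(du\mid\cdot),
\]
since $\int W\,P(\cdot,\omega)\,dx=\int_S\bigl(\int W\eta(\cdot,u)\bigr)\pi_1(du)$ by Fubini. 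This is \eqref{eqn: further transformed insider's problem for finite S case}. The equilibrium correspondence then follows because the maps between strategies and beliefs commute with $\Phi$.

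The main obstacle is the noise step. $\widehat X$ is not an element of $\mathcal H_k$: it is a generalized element, defined only through the linear functional $f\mapsto\widehat X(f)$. I therefore expect the hardest part to be making sense of $\hat\omega(\hat\theta(s))$ pathwise, simultaneously for all $s$. My first attempt will use the continuity in Lemma~\ref{lemma: lemma 1 for example}(ii) together with Assumption~\ref{assumption: MM's posterior}(iii). These give a jointly measurable version $(\omega,s)\mapsto\int\widetilde W(x,s)\sigma^{-2}\,d\omega_x$, and I will define $\hat\omega(\hat\theta(s))$ to be that version. This sidesteps Riesz representation except when $\hat\omega\in\mathcal H_k$.

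A secondary, more routine issue is the weight convention in the payoff pairing, together with checking that the kernel sections $k(\cdot,s)$ actually lie in $\mathcal H_k$. Both should follow from $\eta(\cdot,s)\in\mathcal H_\sigma$, since $k(\cdot,s)=T^*\eta(\cdot,s)$.
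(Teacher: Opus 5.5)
Your route (an explicit unitary between $\mathcal H_k$ and the position space, used to transport payoffs, the isonormal noise, the posterior and the cost) is a more explicit version of the paper's sketch, which goes through the Gaussian law of $\Pi_{mm}$, whitening, and the reproducing property. Two steps fail, however.

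First, $\Phi\colon{\bf L}\theta\mapsto T\theta$ is not an isometry from $\mathcal H_k$. By the definition of $\mathcal H_k$, $\|{\bf L}\theta\|_{\mathcal H_k}=\|\theta\|_{L^2(S,\mu)}$ for $\theta\perp\ker{\bf L}$, whereas $\|T\theta\|_\sigma=\|{\bf L}\theta\|_{L^2(S,\mu)}$. Your $\Phi$ is the polar factor $U$. It is isometric for the $L^2(S,\mu)$-norm on $\ran{\bf L}$, but not for the $\mathcal H_k$-norm unless ${\bf L}$ is a projection. The same slip sits in your payoff display: the reproducing property gives $\langle{\bf L}\theta,k(\cdot,s)\rangle_{\mathcal H_k}=({\bf L}\theta)(s)$, not $({\bf L}^2\theta)(s)$. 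A finite check: take $S=\{s_1,\dots,s_I\}$, uniform $\mu$, and invertible Gram matrix $K$. Then ${\bf L}^2=K/I$ and $\|f\|^2_{\mathcal H_k}=f^{\mathsf T}K^{-1}f$, so $\|{\bf L}\theta\|^2_{\mathcal H_k}=\theta^{\mathsf T}\theta/I$, while $\|T\theta\|^2_\sigma=\theta^{\mathsf T}K\theta/I^2$.

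Both checks become correct if you use $\Phi^{-1}V:=T^*V=\bigl(s\mapsto\langle V,\eta(\cdot,s)\rangle_\sigma\bigr)$, i.e.\ ${\bf L}^2\theta\mapsto T\theta$ (equivalently ${\bf L}\phi\mapsto U\phi$). This is the feature-map unitary $\mathcal M\to\mathcal H_k$ sending $\eta(\cdot,s)\mapsto k(\cdot,s)$, and for it $\langle V,\eta(\cdot,s)\rangle_\sigma=(\Phi^{-1}V)(s)$ holds by construction. In synthesis coefficients this map is $\theta\mapsto{\bf L}^2\theta$, not the whitening $\theta\mapsto{\bf L}\theta$. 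The paper's Step~2 makes the same identification: it calls the covariance $\langle\tilde\theta(s),{\bf L}^2\tilde\theta(t)\rangle_{L^2}$ the $\mathcal H_k$-inner product of ${\bf L}\tilde\theta(s)$ and ${\bf L}\tilde\theta(t)$, which by definition equals $\langle\tilde\theta(s),\tilde\theta(t)\rangle_{L^2}$. So you cannot borrow this step from it.

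Second, the weight is not a bookkeeping choice. The payoff of $W$ is $\int W\eta(\cdot,s)\,dx=\langle W,\sigma^2\eta(\cdot,s)\rangle_\sigma$. So in the noise's Cameron--Martin space $\mathcal H_\sigma$, the payoff representer is $\sigma^2\eta(\cdot,s)$, not $\eta(\cdot,s)$. Your noise and posterior transport uses $W=\Phi\hat\theta$ as the drift. Your payoff computation $\langle T\theta,\eta(\cdot,s)\rangle_\sigma$, however, is the payoff of a different position, $\sigma^{-2}T\theta$.

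The two conventions you offer are not equivalent. Declaring the payoff to be the $\sigma^{-2}$-weighted pairing changes the game. Setting $W=\sigma^2V$ with $V\in\mathcal M$ is consistent: payoff, mean shift and noise covariance all become pairings in $L^2(\sigma^2dx)$. But it produces the kernel $\int\sigma^2(x)\eta(x,s)\eta(x,t)\,dx$ and places positions in $\sigma^2\mathcal M$ rather than $\mathcal M$.

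This outcome is forced. Suppose $\Psi$ is a unitary from $\mathcal H_k$ onto a closed position space $\mathcal E\subseteq\mathcal H_\sigma$ with $\int(\Psi\hat\theta)\eta(\cdot,s)\,dx=c\,\hat\theta(s)$ for some $c>0$. Then $\Psi k(\cdot,s)=c^{-1}\mathrm{pr}_{\mathcal E}\bigl(\sigma^2\eta(\cdot,s)\bigr)$, so $k$ must equal $c^{-2}$ times the Gram kernel of these projected representers. For $\mathcal E=\sigma^2\mathcal M$, that is $c^{-2}\int\sigma^2\eta(x,s)\eta(x,t)\,dx$. This is a multiple of $k$ when $\sigma$ is constant, and then it is harmless: the cost is the posterior average of the same payoff functional, so the whole objective is just rescaled. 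It is not a multiple of $k$ in general. The paper's Step~3 invokes the reproducing property without addressing the $dx$ versus $\sigma^{-2}dx$ weight either.

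So your argument, once repaired, gives the canonical game for the kernel above (or for $k$ when $\sigma$ is constant), not for $k$ with general $\sigma$. By comparison, the point you flagged as the main obstacle is benign. One caveat there: the jointly measurable version from Lemma~\ref{lemma: lemma 1 for example}(ii) only covers beliefs $\Phi\hat\theta(s)$ that are $C^\delta$-valued and continuous in $s$, not arbitrary measurable maps $s\mapsto\hat\theta(s)\in\mathcal H_k$.
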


\begin{proof}
By Proposition~\ref{cor: finite S 1}, we may restrict the insider to $\mathcal M=\overline{\mathrm{ran}(T)}$ and represent his choice as $W=T\theta$.

\smallskip
\emph{Step 1: Sufficient statistic in coefficient space.}
For any given belief $s \mapsto \widetilde W(\,\cdot\,,s)=T\tilde\theta(s)$, the market maker's sufficient statistic is the projection profile $\Pi_{mm}(\omega, \,\cdot\,;\widetilde W)$ (Definition~\ref{def: overlap measures} and Theorem~\ref{thm: Market Maker's Posterior}).  
As a process indexed by $s\in S$, $\Pi_{mm}(\omega, \,\cdot\,;\widetilde W)$ is jointly Gaussian with covariance kernel
\begin{equation}
\label{eqn: cov_pi_mm}
\mathrm{Cov}\bigl(\Pi_{mm}(\omega,s;\widetilde W),\Pi_{mm}(\omega,t;\widetilde W)\bigr)
=
\langle \widetilde W(\cdot,s),\widetilde W(\cdot,t)\rangle_\sigma
=
\langle \tilde\theta(s), {\bf L}^2 \tilde\theta(t)\rangle_{L^2(S,\mu)}.
\end{equation}
Similarly, the conditional mean shift induced by insider choice $W=T\theta$ equals
$\Pi_{insider}(W,s;\widetilde W)=\langle \theta,{\bf L}^2 \tilde\theta(s)\rangle_{L^2(S,\mu)}$.

\smallskip
\emph{Step 2: Whitening.}
Under the whitening transformation \eqref{eqn: change of basis}, the covariance~\eqref{eqn: cov_pi_mm} becomes the $\mathcal H_k$ inner product
$\langle \hat\theta(s),\hat\theta(t)\rangle_{\mathcal H_k}$, and the mean shift becomes $\langle \hat\theta,\hat\theta(s)\rangle_{\mathcal H_k}$.
Thus, the sufficient statistic $\Pi_{mm}(\omega, \,\cdot\,;\widetilde W)$ is equivalent in law to an isonormal Gaussian observation
$\hat\omega=\hat\theta+\widehat X$ on $\mathcal H_k$.

\smallskip
\emph{Step 3: Posterior and payoff.}
The market maker posterior in the original game is exponential in the sufficient statistic (Theorem~\ref{thm: Market Maker's Posterior}). After whitening, this yields \eqref{eqn: pi_1 finite S new}.
Finally, by the reproducing property~\eqref{eqn: reproducing}, the payoff functional associated with signal $s$ is evaluation of $\hat\theta$ at $s$.
Substituting the posterior pricing rule gives the objective \eqref{eqn: further transformed insider's problem for finite S case}.
\end{proof}

\paragraph{Equilibrium in the Canonical Game}
An equilibrium of the canonical game is given by a (measurable) trading strategy for the pseudo-securities
\[
s \mapsto \theta^*(s)\in\mathcal H_k
\quad\text{(equivalently, by the operator $\Theta^*$ defined as in \eqref{eqn: Dhat operator})},
\]
such that, for $\mu$-a.e.\ $s\in S$, the order $\theta^*(s)$ is a best response to the market maker's
belief $\Theta^*$ in the insider's problem \eqref{eqn: further transformed insider's problem for finite S case}, i.e.,
\begin{equation}
\label{eqn: transformed equil def for finite S case}
\theta^*(s)\in \argmax_{\theta \,\in\, \mathcal H_k} J(\theta \kern 0.045em ; \Theta^*, s).
\end{equation}

\begin{corollary}
\label{cor: equil isomorphism}
Under the isomorphism between the two games, an equilibrium $\Theta^*$ of the canonical game induces an equilibrium of the original trading game (Definition~\ref{def: equilibrium}) with insider trading strategy
\begin{equation}
\label{eqn: canonical form informed demand}
W^*(\,\cdot\,, s)
=
\int_S \beta(s)(u)\,\eta(\,\cdot\,,u)\,\mu(du),
\qquad
\text{where }\ \beta(s) := {\bf L}^{-1}\theta^*(s)\ \text{ for }\mu\text{-a.e.\ }s\in S.
\end{equation}
Here ${\bf L}^{-1}$ denotes the Moore--Penrose inverse of ${\bf L}$ (which is the usual inverse when ${\bf L}$ is injective).
We call the family $\bigl(\beta(s)\bigr)_{s\in S}$ (equivalently ${\bf L}^{-1}\Theta^*$) the \textbf{canonical form} of the original-game equilibrium \eqref{eqn: canonical form informed demand}.
\end{corollary}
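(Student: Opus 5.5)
The plan is to transport the canonical equilibrium back through the explicit maps used in Theorem~\ref{thm: canonical game} and to check that best responses are preserved in both directions. Given the canonical equilibrium $s\mapsto\theta^*(s)\in\mathcal H_k$, define coefficients $\beta(s):={\bf L}^{-1}\theta^*(s)\in L^2(S,\mu)$ and set $W^*(\cdot,s):=T\beta(s)$, which is exactly \eqref{eqn: canonical form informed demand}. First I verify that this inverts the whitening. Since $\mathcal H_k=\overline{\mathrm{ran}({\bf L})}$ and ${\bf L}^{-1}$ is the Moore--Penrose inverse, ${\bf L}\beta(s)=\theta^*(s)$ whenever $\theta^*(s)\in\mathrm{ran}({\bf L})$. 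In general, I would replace $\beta(s)$ by an element of $(\ker {\bf L})^\perp$ with ${\bf L}\beta(s)=\theta^*(s)$, approximating in the closure if necessary. Because $\ker T=\ker T^*T=\ker{\bf L}$, the choice modulo $\ker{\bf L}$ does not change $W^*=T\beta$, so the canonical form is well defined. Measurability of $s\mapsto W^*(\cdot,s)$ follows from measurability of $s\mapsto\theta^*(s)$ and the boundedness of $T$ and ${\bf L}^{-1}$ on $\mathrm{ran}({\bf L})$.

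Second, I use the isomorphism of Theorem~\ref{thm: canonical game} to identify objectives. The key identity is $\langle T\theta_1,T\theta_2\rangle_\sigma=\langle{\bf L}\theta_1,{\bf L}\theta_2\rangle_{\mathcal H_k}$. It gives $\Pi_{insider}(T\theta,s;W^*)=\langle{\bf L}\theta,\theta^*(s)\rangle_{\mathcal H_k}$, and the covariance of $\Pi_{mm}$ matches that of $\widehat X$. Hence the law of the sufficient statistic under $\mathbb P_{T\theta}$ equals the law of $(\hat\omega(\theta^*(s)))_s$ with $\hat\omega={\bf L}\theta+\widehat X$. By Theorem~\ref{thm: Market Maker's Posterior}, the original posterior $\pi_1(\cdot,\omega;W^*)$ therefore has the same law as $\hat\pi_1(\cdot\mid\hat\omega;\Theta^*)$. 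Using the reproducing property \eqref{eqn: reproducing}, the payoff $\int T\theta\,\eta(\cdot,s)\,dx$ is identified with evaluation $\hat\theta(s)$, and the expected cost with $\int\hat\theta(u)\,\hat\pi_1(du)$. The conclusion is $J(T\theta;W^*,s)=J({\bf L}\theta;\Theta^*,s)$ for all $\theta$.

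Third, I deduce optimality. For portfolios $W\in\mathcal M$, I write $W$ as a limit of $T\theta_n$. This requires continuity of $J(\cdot;W^*,s)$ in $\mathcal H_\sigma$, which I obtain from the dominated Gaussian representation \eqref{eqn: pushed forward AD bar}. Then $J(W;W^*,s)\le\sup_\theta J({\bf L}\theta;\Theta^*,s)\le J(\theta^*(s);\Theta^*,s)=J(W^*(\cdot,s);W^*,s)$. For general admissible $W$, Proposition~\ref{cor: finite S 1} reduces to the projection onto $\mathcal M$. I must check that the $\mathcal M^\perp$ component is indeed payoff- and price-irrelevant under $W^*$: it is orthogonal to every $\eta(\cdot,u)$, hence to every $W^*(\cdot,u)\in\mathcal M$, hence lies in $\mathcal V_0(W^*)$. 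Its payoff also vanishes, so $J$ is unchanged. This yields $W^*(\cdot,s)\in\arg\max J(\cdot;W^*,s)$ for $\mu$-a.e.\ $s$. If Definition~\ref{def: equilibrium} is read for every $s$, I would modify $W^*$ on a null set, using the continuity required in Assumption~\ref{assumption: MM's posterior}(iii).

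The main obstacle is the infinite-dimensional bookkeeping in the second and third steps. Two points need care. First, $\theta^*(s)$ may lie in $\mathcal H_k\setminus\mathrm{ran}({\bf L})$, so ${\bf L}^{-1}\theta^*(s)$ need not exist as an $L^2$ element. Second, admissibility requires $W^*(\cdot,s)\in C^\delta$ with continuous dependence on $s$. For the first point, I would try restricting to equilibria with $\theta^*(s)\in\mathrm{ran}({\bf L})$; alternatively, I would interpret $W^*=T{\bf L}^{-1}\theta^*$ through the partial isometry $T{\bf L}^{-1}$, which extends isometrically from $\mathrm{ran}({\bf L})$ to $\mathcal H_k\to\mathcal M$, and this suffices because only $T\beta$ matters. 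For the second point, I would rely on the regularity of $\eta$ inherited by $T$.
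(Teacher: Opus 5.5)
Your construction and overall route match the paper's. Its proof simply sets $\beta(s)=\mathbf{L}^{-1}\theta^*(s)$ and $W^*(\cdot,s)=T\beta(s)$, takes the transfer of best responses from Theorem~\ref{thm: canonical game}, and gets a $C^\delta$ representative continuous in $s$ from a continuous selection theorem. The step you add to verify that transfer is where your argument fails.

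\textbf{The key identity is false.} With the paper's definition $\langle\mathbf{L}\theta_1,\mathbf{L}\theta_2\rangle_{\mathcal H_k}=\langle\theta_1,\theta_2\rangle_{L^2(S,\mu)}$ (for $\theta_i\perp\ker\mathbf{L}$), you have $\langle T\theta_1,T\theta_2\rangle_\sigma=\langle\theta_1,\mathbf{L}^2\theta_2\rangle_{L^2(S,\mu)}=\langle\mathbf{L}\theta_1,\mathbf{L}\theta_2\rangle_{L^2(S,\mu)}$. The two sides agree only when $\mathbf{L}$ is a partial isometry. The payoff identification also fails: even for $\sigma\equiv 1$, $\int T\theta\,\eta(\cdot,s)\,dx=(\mathbf{L}^2\theta)(s)$, not $(\mathbf{L}\theta)(s)=\hat\theta(s)$. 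Putting the $L^2(S,\mu)$ inner product on $\mathcal H_k$ instead fixes the noise covariance but not the payoff.

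For a concrete check, take two signals, uniform $\mu$, $\sigma\equiv1$, and $\eta(\cdot,s_1)\perp\eta(\cdot,s_2)$ with $\|\eta(\cdot,s_i)\|^2=c$. Then $\mathbf{L}=\sqrt{c/2}\,\mathrm{Id}$, and under $\hat\theta=\mathbf{L}\theta$ the original game is the canonical game in the variable $\sqrt{c/2}\,\hat\theta=\mathbf{L}^2\theta$. So $T\mathbf{L}^{-1}\theta^*(s)$ corresponds to the canonical profile $\sqrt{c/2}\,\alpha^*Qk(\cdot,s)$. That profile is an equilibrium only if also $\Phi(\sqrt{c/2}\,\alpha^*)=0$. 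Step~2 of the proof of Theorem~\ref{thm: canonical game} asserts the same identity, so citing that theorem does not close the gap.

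\textbf{What does hold.} The map $\Psi\colon W\mapsto\bigl(s\mapsto\langle W,\eta(\cdot,s)\rangle_\sigma\bigr)$, i.e.\ $T\theta\mapsto\mathbf{L}^2\theta$, is unitary from $\mathcal M$ onto $\mathcal H_k$. For constant $\sigma$ it carries payoffs to evaluations (up to a positive factor) and the order-flow statistic to isonormal noise. The transported strategy is therefore $\Psi^{-1}\theta^*(s)=U\mathbf{L}^{-1}\theta^*(s)$, where $T=U\mathbf{L}$ is the polar decomposition. Your ``partial isometry $T\mathbf{L}^{-1}$'' is exactly $U$. It is isometric for the $L^2(S,\mu)$ norm on $\mathrm{ran}\,\mathbf{L}$, not for $\|\cdot\|_{\mathcal H_k}$. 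Hence $T\mathbf{L}^{-1}\theta^*(s)=U\theta^*(s)$ is missing one factor of $\mathbf{L}^{-1}$. In coefficient form the correct choice is $\beta(s)=\mathbf{L}^{-2}\theta^*(s)$, whenever this lies in $L^2(S,\mu)$.

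\textbf{Further repairs.} In the $\mathcal M^\perp$ step, zero payoff requires orthogonality in $L^2(dx)$, while membership in $\mathcal V_0(W^*)$ requires orthogonality in $\mathcal H_\sigma$. Equivalent norms give the same closed spans but not the same orthogonal complements, so this step works only for constant $\sigma$. For non-constant $\sigma$, Theorem~\ref{thm: informed trader 2} forces $\sigma^2(\eta(\cdot,s)-\overline P)$ into the closed span of the strategies, which $\mathcal M$-valued strategies generally violate.

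Admissibility cannot come from ``regularity of $\eta$ inherited by $T$'', because the standing assumptions impose nothing on $\eta$ in $x$. The paper instead invokes a continuous selection theorem; you need that, or an explicit H\"older hypothesis on $\eta$.

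Finally, your worry that $\theta^*(s)$ may lie outside $\mathrm{ran}\,\mathbf{L}$ is unnecessary. $\mathrm{ran}\,\mathbf{L}$ is already complete in $\|\cdot\|_{\mathcal H_k}$, so $\mathcal H_k=\mathrm{ran}\,\mathbf{L}$. The genuine range issue concerns $\mathbf{L}^{-2}$, which is why writing the strategy as $\Psi^{-1}\theta^*(s)$ is safer than going through $\beta(s)$.
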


\begin{proof}
Let $\theta^*(s)\in\mathcal H_k$ denote the canonical-game equilibrium order at signal $s$.
For $\mu$-a.e.\ $s\in S$, set
\[
\beta(s):={\bf L}^{-1}\theta^*(s)\in L^2(S,\mu),
\qquad
W^*(\,\cdot\,,s):=T\beta(s)
\ \ \text{(see \eqref{eqn: T_operator_definition}).}
\]
To invoke Definition~\ref{def: equilibrium}, we work with an admissible representative of
$W^*(\,\cdot\,,s)$ in $C^\delta([\underaccent{\bar}{x},\bar{x}],\R)$, chosen so that
$s\mapsto W^*(\,\cdot\,,s)$ is continuous %as in Assumption~\ref{assumption: MM's posterior}(iii), 
by a
standard continuous selection theorem (e.g., \cite{Michael1956}).
Then the integral representation of $W^*(\,\cdot\,, s)$ in \eqref{eqn: canonical form informed demand} follows immediately.
\end{proof}

\paragraph{Insider FOC in the Canonical Game}
Let $p\in L^{1}(S,\mu)$ denote the (random) Radon--Nikodym density of the posterior in
\eqref{eqn: pi_1 finite S new} with respect to $\mu$, i.e.
\[
\hat\pi_{1}(du\mid \hat\omega;\widehat \Theta)=p(u)\,\mu(du).
\]
Define the (random) posterior mean element $\kappa_{p}\in\mathcal H_{k}$ by
\[
\kappa_{p}:=\int_{S} k(\,\cdot\,,u)\,p(u)\,\mu(du),
\]
and define the (random) posterior covariance operator $\mathcal C_{p}:\mathcal H_{k}\to\mathcal H_{k}$
via the bilinear form
\begin{equation}
\label{eqn: Cp_def}
\langle f,\mathcal C_{p} g\rangle_{\mathcal H_{k}}
:=
\int_{S} f(u)\,g(u)\,p(u)\,\mu(du)
-
\Bigl(\int_{S} f(u)\,p(u)\,\mu(du)\Bigr)
\Bigl(\int_{S} g(u)\,p(u)\,\mu(du)\Bigr),
\qquad f,g\in\mathcal H_{k}.
\end{equation}

Proposition~\ref{prop: canonical_game_FOC} below is the isomorphic counterpart to Theorem~\ref{thm: informed trader 1}.

\begin{proposition}[Insider FOC in the Canonical Game]
\label{prop: canonical_game_FOC}
In the canonical game, suppose $\hat\theta\in\mathcal H_k$ solves the
insider's problem \eqref{eqn: further transformed insider's problem for finite S case},
conditional on $s \in S$ and under market maker belief $\widehat\Theta$.
Then $\hat\theta$ satisfies the first-order condition
\begin{equation}
\label{eqn: transformed Bayesian game FOC}
k(\,\cdot\,,s)
-
\underbrace{
\mathbb{E}^{(\hat{\theta}; \widehat{\Theta})}[\kappa_p]\,
\vphantom{\widehat{\Theta}\Bigl(\mathbb{E}^{(\hat{\theta}; \widehat{\Theta})}[\mathcal C_p]\Bigr)\hat{\theta}}
}_{\text{AD term}}
-
\underbrace{
\widehat{\Theta}
\Bigl(
\mathbb{E}^{(\hat{\theta}; \widehat{\Theta})}[\mathcal C_p]
\Bigr)
\hat{\theta}
}_{\text{price impact term}}
=0.
\end{equation}
where $\mathbb{E}^{(\hat{\theta}; \widehat{\Theta})}[\cdot]$ denotes expectation with respect to the law of
$\hat\omega=\hat\theta+\widehat X$ induced by the choice $\hat\theta$ under the belief $\widehat\Theta$.
\end{proposition}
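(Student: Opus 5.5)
The plan is to compute the Fréchet/Gâteaux derivative of the canonical-game objective $J(\,\cdot\,;\widehat\Theta,s)$ on $\mathcal H_k$ and identify its Riesz representer. This mirrors the proof of Theorem~\ref{thm: informed trader 1}, now carried out in whitened coordinates. Since $\hat\theta$ is an interior maximizer over the whole Hilbert space $\mathcal H_k$, the derivative vanishes in every direction $v\in\mathcal H_k$. It therefore suffices to show that, for every $v$, the derivative equals $\langle v,\cdot\rangle_{\mathcal H_k}$ applied to the left-hand side of \eqref{eqn: transformed Bayesian game FOC}.

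First I rewrite the objective as in Step~1 of the proof of Theorem~\ref{thm: informed trader 1}. Under the law of $\hat\omega=\hat\theta+\widehat X$, a Cameron--Martin shift moves the dependence on $\hat\theta$ from the law of $\widehat X$ into the posterior weights. This gives
\[
J(\hat\theta;\widehat\Theta,s)=\hat\theta(s)-\E\Bigl[\int_S \hat\theta(u)\,p_{\hat\theta}(u)\,\mu(du)\Bigr],
\]
where the weights $p_{\hat\theta}(u)\propto \exp\bigl(\widehat X(\hat\theta(u))+\langle\hat\theta,\hat\theta(u)\rangle_{\mathcal H_k}-\tfrac12\|\hat\theta(u)\|^2_{\mathcal H_k}\bigr)$ are normalized. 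The payoff term $\hat\theta\mapsto\hat\theta(s)=\langle\hat\theta,k(\,\cdot\,,s)\rangle_{\mathcal H_k}$ is linear by \eqref{eqn: reproducing}, so its derivative in direction $v$ is $\langle v,k(\,\cdot\,,s)\rangle_{\mathcal H_k}$.

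The cost term splits by the product rule into two pieces.

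1. The first piece is $\E\bigl[\int_S v(u)p(u)\mu(du)\bigr]$. Using $v(u)=\langle v,k(\,\cdot\,,u)\rangle$ and a Bochner interchange, it equals $\langle v,\E[\kappa_p]\rangle_{\mathcal H_k}$, which is the AD term.
2. The second piece comes from differentiating the normalized weight $p_{\hat\theta+\varepsilon v}$. Exactly as in Step~4 of the proof of Theorem~\ref{thm: informed trader 1}, this produces the centered factor $\langle v,\hat\theta(u)\rangle-\int\langle v,\hat\theta(u')\rangle p(u')\mu(du')$. The piece is therefore $\E\bigl[\Cov_p\bigl(\hat\theta(\cdot),\langle v,\hat\theta(\cdot)\rangle\bigr)\bigr]$, with $\hat\theta(\cdot)$ the evaluation function $u\mapsto\hat\theta(u)$.

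By the definition \eqref{eqn: Cp_def}, this covariance is $\E\,\langle \hat\theta,\mathcal C_p g_v\rangle_{\mathcal H_k}$ with $g_v(u):=\langle v,\hat\theta(u)\rangle$. To move $v$ outside, I note that $g_v=\widehat\Theta^{*}v$ in the appropriate sense: pairing against the Bochner integral \eqref{eqn: Dhat operator} gives $\langle \widehat\Theta f,v\rangle=\int f(u)\langle\hat\theta(u),v\rangle\mu(du)$. Using self-adjointness of $\mathcal C_p$, I then read off the price-impact term as $\langle v,\widehat\Theta(\E[\mathcal C_p])\hat\theta\rangle_{\mathcal H_k}$, matching the stated composition up to identification of the adjoint. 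Collecting the three pieces and setting the derivative to zero for all $v$ gives \eqref{eqn: transformed Bayesian game FOC}.

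The main obstacle is justifying differentiation under $\E$ and under $\int_S$. This needs uniform integrability of $\exp(\widehat X(\hat\theta(u)))$-type terms for $\varepsilon$ near $0$, for which I would use Gaussian moment bounds, assuming $\sup_u\|\hat\theta(u)\|_{\mathcal H_k}<\infty$ or square integrability. A secondary delicate point is the operator bookkeeping that identifies $\widehat\Theta$ versus its adjoint with the measure $\mu$ versus the $\mathcal H_k$ pairing. I would handle this by checking the identity weakly against arbitrary $v$ rather than at the operator level.
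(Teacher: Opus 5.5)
Your route is the paper's. You differentiate $J(\,\cdot\,;\widehat\Theta,s)$ along $\hat\theta+\varepsilon v$ with the noise $\widehat X$ held fixed. The reproducing property gives $\langle v,k(\cdot,s)\rangle_{\mathcal H_k}$, and the product-rule term $\int_S v\,p\,d\mu$ gives $\langle v,\E[\kappa_p]\rangle_{\mathcal H_k}$. Differentiating the normalized Gibbs weight gives $\E[\Cov_p(\hat\theta,g_v)]$, where $g_v(u):=\langle v,\hat\theta^{\mathrm{MM}}(u)\rangle_{\mathcal H_k}$; I write $\hat\theta^{\mathrm{MM}}$ for the belief map, as the paper does, to separate it from your order $\hat\theta$. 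You then conclude by Riesz. Up to the covariance all of this is correct. The interchange of differentiation with $\E$ and $\int_S$, which you propose to justify by Gaussian moment bounds, is only asserted in the paper.

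The step that fails as written is the last one. You flag it yourself, and the paper's proof also asserts it in one line by appeal to the definitions of $\mathcal C_p$ and $\widehat\Theta$. The Bochner identity $\langle\widehat\Theta f,v\rangle_{\mathcal H_k}=\int_S f\,g_v\,d\mu$ makes $g_v$ an adjoint of $\widehat\Theta$ only for the mixed pairing, with $L^2(S,\mu)$ on the input side. But $\mathcal C_p$ is self-adjoint for $\langle\cdot,\cdot\rangle_{\mathcal H_k}$, and the two pairings differ: $\langle f,g\rangle_{L^2(\mu)}=\langle f,\mathbf L^2 g\rangle_{\mathcal H_k}$. So the $\mathcal H_k$-adjoint is $\widehat\Theta^*v=\mathbf L^2 g_v$, not $g_v$. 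Concretely, set $\bar\theta_p:=\int_S\hat\theta\,p\,d\mu$. Then \eqref{eqn: Cp_def} and the reproducing property give $\mathcal C_p\hat\theta=\int_S k(\cdot,u)\,p(u)\bigl(\hat\theta(u)-\bar\theta_p\bigr)\,\mu(du)$. Pulling $v$ out of the Bochner integral instead gives directly
\[
\Cov_p(\hat\theta,g_v)=\Bigl\langle v,\ \int_S p(u)\bigl(\hat\theta(u)-\bar\theta_p\bigr)\,\hat\theta^{\mathrm{MM}}(u)\,\mu(du)\Bigr\rangle_{\mathcal H_k}.
\]
So the representer is the integral \eqref{eqn: Dhat operator} applied to the function $p\,(\hat\theta-\bar\theta_p)$, not to its $\mathbf L^2$-smoothing $\mathcal C_p\hat\theta$. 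This route also removes your tacit requirement $g_v\in\mathcal H_k$.

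The discrepancy is already visible in a small case: $S=\{s_1,s_2\}$, $\mu$ uniform, $k(\cdot,s_i)$ orthonormal, belief $u\mapsto\alpha Qk(\cdot,u)$, and $\hat\theta=\alpha Qk(\cdot,s_1)$. There the literal $\widehat\Theta(\mathcal C_p\hat\theta)$ is half of the correct term $\alpha^2q_1q_2(1,-1)^{\mathsf T}$, with $q_i$ the posterior probabilities. Example~\ref{e.g. binary signal} obtains the correct term only because it takes $\widehat\Theta_\alpha=\alpha\mathbf{Q}$, the matrix of columns (counting measure), rather than the $\mu$-integral.

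Testing the identity weakly against $v$ will expose this discrepancy, not remove it. You need to fix a convention explicitly, in one of three ways:
\begin{itemize}
\item state the price-impact term as $\int_S\E\bigl[p(u)(\hat\theta(u)-\bar\theta_p)\bigr]\hat\theta^{\mathrm{MM}}(u)\,\mu(du)$;
\item read $\mathcal C_p$ as the $L^2(\mu)$ posterior covariance;
\item define $\widehat\Theta$ through a pairing that coincides with $\langle\cdot,\cdot\rangle_{\mathcal H_k}$, as the paper's finite-$S$ matrix computations implicitly do.
\end{itemize}
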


\begin{proof}
Fix $s\in S$ and a market maker belief $\widehat\Theta$. For a candidate order $\hat\theta\in\mathcal H_k$,
let $p_{\hat\theta}\in L^1(S,\mu)$ denote the (random) posterior density induced by the observation
$\hat\omega=\hat\theta+\widehat X$ under $\widehat\Theta$,
and write
\[
J(\hat\theta;\widehat\Theta,s)
=\hat\theta(s)-\E^{(\hat\theta;\widehat\Theta)}
\Bigl[\int_S \hat\theta(u)\,p_{\hat\theta}(u)\,\mu(du)\Bigr].
\]
Let $h\in\mathcal H_k$ and set $\hat\theta_\varepsilon:=\hat\theta+\varepsilon h$.

\medskip
\noindent\emph{Step 1: The marginal payoff term.}
By the reproducing property,
\begin{equation}
\label{eqn: theta_hat_eps_derivative}
\frac{d}{d\varepsilon}\Big|_{\varepsilon=0}\hat\theta_\varepsilon(s)
= h(s)=\langle h,k(\cdot,s)\rangle_{\mathcal H_k}.
\end{equation}

\medskip
\noindent\emph{Step 2: The marginal cost term.}
Using again the reproducing property,
\begin{equation}
\label{eqn: product rule theta p}
\frac{d}{d\varepsilon}\Big|_{\varepsilon=0}\int_S \hat\theta_\varepsilon(u)\,p_{\hat\theta_\varepsilon}(u)\,\mu(du)
=
\int_S h(u)\,p_{\hat\theta}(u)\,\mu(du)
+
\int_S \hat\theta(u)\,\dot p_{\hat\theta}[h](u)\,\mu(du).
\end{equation}
where $\dot p_{\hat\theta}[h]$ denotes the directional derivative of $p_{\hat\theta}$ at $\hat\theta$ in direction $h$.
(The interchange of differentiation with the $\mu$-integral and the outer expectation is justified by the standing
integrability/regularity assumptions and standard dominated-convergence arguments for exponential families.)

The first term in \eqref{eqn: product rule theta p} satisfies
\[
\int_S h(u)\,p_{\hat\theta}(u)\,\mu(du)=\langle h,\kappa_p\rangle_{\mathcal H_k},
\qquad
\kappa_p:=\int_S k(\cdot,u)\,p_{\hat\theta}(u)\,\mu(du),
\]
hence
\begin{equation}
\label{eqn: rkhs representation expected integral}
\E^{(\hat\theta;\widehat\Theta)}\Bigl[\int_S h(u)\,p_{\hat\theta}(u)\,\mu(du)\Bigr]
=\bigl\langle h,\E^{(\hat\theta;\widehat\Theta)}[\kappa_p]\bigr\rangle_{\mathcal H_k}.
\end{equation}

For the second term in \eqref{eqn: product rule theta p}, fix a realization of $\widehat X$ and recall that $p_{\hat\theta}$ is the normalized
Gibbs density
\[
p_{\hat\theta}(u)\ \propto\
\exp\!\Bigl(\hat\omega\bigl(\hat\theta^{\mathrm{MM}}(u)\bigr)-\tfrac12\|\hat\theta^{\mathrm{MM}}(u)\|^2_{\mathcal H_k}\Bigr),
\]
where $u\mapsto \hat\theta^{\mathrm{MM}}(u)$ is the strategy map encoded by the belief $\widehat\Theta$.
Since $\hat\omega$ shifts by $\varepsilon h$ when $\hat\theta$ shifts by $\varepsilon h$, the log-likelihood shifts by
$\varepsilon\langle h,\hat\theta^{\mathrm{MM}}(u)\rangle_{\mathcal H_k}$, and therefore
\[
\dot p_{\hat\theta}[h](u)
=
p_{\hat\theta}(u)\Bigl(\langle h,\hat\theta^{\mathrm{MM}}(u)\rangle_{\mathcal H_k}
-
\int_S \langle h,\hat\theta^{\mathrm{MM}}(v)\rangle_{\mathcal H_k}\,p_{\hat\theta}(v)\,\mu(dv)\Bigr).
\]
Substituting and rearranging gives (for each realization of $p_{\hat\theta}$)
\begin{align*}
\int_S \hat\theta(u)\,\dot p_{\hat\theta}[h](u)\,\mu(du)
&=
\int_S \hat\theta(u)\,p_{\hat\theta}(u)\,\langle h,\hat\theta^{\mathrm{MM}}(u)\rangle_{\mathcal H_k}\,\mu(du) \\
&\quad-
\Bigl(\int_S \hat\theta(u)\,p_{\hat\theta}(u)\,\mu(du)\Bigr)
\Bigl(\int_S \langle h,\hat\theta^{\mathrm{MM}}(v)\rangle_{\mathcal H_k}\,p_{\hat\theta}(v)\,\mu(dv)\Bigr).
\end{align*}
By the definition of the covariance operator $\mathcal C_p$ in \eqref{eqn: Cp_def} and of the belief operator
$\widehat\Theta$, the right-hand side equals
\[
\bigl\langle h,\,\widehat\Theta(\mathcal C_p\hat\theta)\bigr\rangle_{\mathcal H_k},
\]
and hence, by linearity and boundedness of $\widehat\Theta$,
\begin{equation}
\label{eqn: expected_integral_inner_product_identity}
\E^{(\hat\theta;\widehat\Theta)}\Bigl[\int_S \hat\theta(u)\,\dot p_{\hat\theta}[h](u)\,\mu(du)\Bigr]
=
\bigl\langle h,\,\widehat\Theta\bigl(\E^{(\hat\theta;\widehat\Theta)}[\mathcal C_p]\bigr)\hat\theta\bigr\rangle_{\mathcal H_k}.
\end{equation}

\medskip
\noindent\emph{Step 3.}
Combining \eqref{eqn: theta_hat_eps_derivative}, \eqref{eqn: rkhs representation expected integral}, and \eqref{eqn: expected_integral_inner_product_identity} above yields
\[
dJ(\hat\theta)[h]
=
\Bigl\langle h,\,
k(\cdot,s)-\E^{(\hat\theta;\widehat\Theta)}[\kappa_p]
-\widehat\Theta\bigl(\E^{(\hat\theta;\widehat\Theta)}[\mathcal C_p]\bigr)\hat\theta
\Bigr\rangle_{\mathcal H_k}.
\]
Since $h\in\mathcal H_k$ is arbitrary, the claimed identity \eqref{eqn: transformed Bayesian game FOC} follows from the Riesz representation theorem.
\end{proof}

\section{Equilibrium}
\label{sec: symmetric equilibrium}

We first consider equilibrium in the canonical game of Theorem~\ref{thm: canonical game}, formulated on the whitened coefficient space $(\mathcal H_k,\langle\cdot,\cdot\rangle_{\mathcal H_k})$ over the Borel signal space $(S,\mathcal S,\mu)$.
Let
\begin{equation}
\label{eqn: kernel mean element}
\bar{k} \;:=\; \int_S k(\,\cdot\,,s)\,\mu(ds) \in \mathcal H_k
\end{equation}
denote the (nonzero) kernel mean element, and let $P:\mathcal H_k\to\mathcal H_k$ be the rank-one orthogonal
projection onto $\linspan\{\bar{k}\}$. Define the \emph{centering operator} $Q$ as the orthogonal projection onto $\bar{k}^\perp$,
\begin{equation}
\label{eqn: centering operator Q}
Q \;:=\; \mathrm{Id}_{\mathcal H_k}-P.
\end{equation}

\paragraph{Equilibrium Ansatz}
In the canonical game, payoffs reduce to evaluation at the realized $s\in S$ and the noise is isotropic in
$\mathcal H_k$. Consequently, the environment is invariant to (measure-preserving) relabelings of the signal space,
which naturally suggests restricting attention to signal-equivariant equilibria. Accordingly, we consider candidate equilibria in which the strategy lies along
the \emph{centered} representer of evaluation, i.e., the component of $k(\,\cdot\,,s)$ orthogonal to the
prior-mean element $\bar k$.

Equivalently, the insider adopts a \emph{market-neutral long--short position}: he goes long the pseudo-security
indexed by the realized signal and finances it by taking an offsetting short position across the remaining
pseudo-securities, so that his net exposure to the prior mean $\bar k$ is zero.
This leads to the following one-parameter equilibrium ansatz:
\begin{equation}
\label{eqn: equilibrium ansatz for transformed Bayesian game}
\theta^*(s) \;=\; \alpha^*\, Q\,k(\,\cdot\,,s),
\qquad s\in S,
\qquad \text{for some }\alpha^*>0.
\end{equation}
Under this ansatz, $\alpha^*$ is the overall trading scale of the insider's long--short position.
We next show that, within this ansatz, the equilibrium condition reduces to a single scalar equation in $\alpha$.

\begin{theorem}
[Scalar Equilibrium Equation]
\label{thm: scalar_equilibrium_equation}
There exists a function $\Phi:[0,\infty)\to\R$ such that any canonical game equilibrium within the ansatz
class~\eqref{eqn: equilibrium ansatz for transformed Bayesian game} is characterized by a scalar
$\alpha^*>0$ satisfying
$
\Phi(\alpha^*)=0.
$
\end{theorem}

\begin{proof}
Fix $\alpha\ge 0$ and let $\widehat\Theta_\alpha$ denote the belief operator induced, via
\eqref{eqn: Dhat operator}, by the ansatz strategy map
\begin{equation}
\label{eqn: Theta_alpha_strategy_map}
s\mapsto \alpha\,Q\,k(\cdot,s).
\end{equation}

Substituting the ansatz \eqref{eqn: equilibrium ansatz for transformed Bayesian game} into the insider
first-order condition \eqref{eqn: transformed Bayesian game FOC} and applying $Q$ to both sides yields,
for $\mu$-a.e.\ $s\in S$,
\begin{equation}\label{eqn: general-S collated FOC}
Qk(\cdot,s)
\;-\;
Q\,\E_{\widehat\Theta_\alpha}\!\big[\kappa_p\big]
\;-\;
Q\,\widehat\Theta_\alpha\!\Big(\E_{\widehat\Theta_\alpha}\!\big[\mathcal C_p\big]\Big)\,\big(\alpha Qk(\cdot,s)\big)
\;=\;0,
\end{equation}
where $\E_{\widehat\Theta_\alpha}[\cdot]$ denotes expectation under the law of the order-flow observation
$\hat\omega=\alpha Qk(\cdot,s)+\widehat X$ induced by \eqref{eqn: Theta_alpha_strategy_map} under
$\widehat\Theta_\alpha$.

For each $\alpha$, define the residual
\[
F_\alpha(s)
:=
Qk(\cdot,s)
-
Q\,\E_{\widehat\Theta_\alpha}\!\big[\kappa_p\big]
-
Q\,\widehat\Theta_\alpha\!\Big(\E_{\widehat\Theta_\alpha}\!\big[\mathcal C_p\big]\Big)\,\big(\alpha Qk(\cdot,s)\big)
\in \ran(Q)=\bar k^\perp .
\]
By the equivariance of the canonical environment, the map $s\mapsto F_\alpha(s)$ is
signal-equivariant and takes values in $\bar k^\perp$; consequently, for $\mu$-a.e.\ $s$ it is collinear with
$Qk(\cdot,s)$. Hence, there exists a scalar $\Phi(\alpha)\in\R$ such that
\[
F_\alpha(s)=\Phi(\alpha)\,Qk(\cdot,s),
\qquad \mu\text{-a.e.\ } s\in S,
\]
and the proportionality factor depends only on $\alpha$. Specifically, up to a $\mu$-full set, this scalar is given by the normalized projection
\begin{equation}
\label{eqn:Phi-definition}
\Phi(\alpha)
=
\frac{\big\langle Qk(\cdot,s),\,F_\alpha(s)\big\rangle_{\mathcal H_k}}{\|Qk(\cdot,s)\|_{\mathcal H_k}^2}
=
\frac{\Big\langle Qk(\cdot,s),\,
Qk(\cdot,s)-Q\E_{\widehat\Theta_\alpha}[\kappa_p]
-\alpha\,Q\widehat\Theta_\alpha\!\big(\E_{\widehat\Theta_\alpha}[\mathcal C_p]\big)\,Qk(\cdot,s)
\Big\rangle_{\mathcal H_k}}
{\|Qk(\cdot,s)\|_{\mathcal H_k}^2},
\end{equation}
and equivariance implies that this expression is independent of the choice of $s$.

Therefore, \eqref{eqn: general-S collated FOC} is equivalent to
\begin{equation}\label{eqn: matrix equil eqn}
\Phi(\alpha)\,Qk(\cdot,s)=0,
\qquad \mu\text{-a.e.\ } s\in S,
\end{equation}
which holds if and only if $\Phi(\alpha)=0$. Any equilibrium within the ansatz class
\eqref{eqn: equilibrium ansatz for transformed Bayesian game} is thus characterized by a scalar
$\alpha^*>0$ satisfying $\Phi(\alpha^*)=0$.
\end{proof}

\begin{lemma}
\label{lemma: convex price impact general S}
In the canonical game, fix $\alpha>0$ and let $\widehat{\Theta}_\alpha$ denote the market maker's belief operator induced via
\eqref{eqn: Dhat operator} by the candidate strategy map $s\mapsto \alpha\,Q\,k(\,\cdot\,,s)$. 
Then, conditional on each signal $s\in S$, the insider objective
$\hat\theta\mapsto J(\hat\theta;\widehat{\Theta}_\alpha,s)$ is concave on $\mathcal H_k$. 
Consequently, the first-order condition of Proposition~\ref{prop: canonical_game_FOC}
is sufficient for optimality: any $\hat\theta\in\mathcal H_k$ satisfying \eqref{eqn: transformed Bayesian game FOC}
is a best response to $\widehat{\Theta}_\alpha$.
\end{lemma}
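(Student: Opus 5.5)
The plan is to show that the price-cost part of $J(\,\cdot\,;\widehat\Theta_\alpha,s)$ is convex. The payoff part $\hat\theta\mapsto\hat\theta(s)=\langle\hat\theta,k(\cdot,s)\rangle_{\mathcal H_k}$ is linear, so concavity of $J$ is equivalent to convexity of
\[
G(\hat\theta):=\E\Bigl[\bigl\langle \hat\theta,\kappa_{p}(\hat\theta+\widehat X)\bigr\rangle_{\mathcal H_k}\Bigr].
\]
Here $\kappa_p(\hat\omega)$ is the posterior mean element built from \eqref{eqn: pi_1 finite S new} with belief $u\mapsto\alpha Qk(\cdot,u)$.

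\emph{Step 1: split off the mean direction.} Decompose $\hat\theta=P\hat\theta+Q\hat\theta$.

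Under the belief $\widehat\Theta_\alpha$ the log-likelihood is $\alpha\langle\hat\omega,Qk(\cdot,u)\rangle-\frac{\alpha^2}{2}\|Qk(\cdot,u)\|^2$. It therefore depends on $\hat\omega$ only through $Q\hat\omega$. Consequently $P\hat\theta$ does not move the posterior.

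By the signal-equivariance already used in Theorem~\ref{thm: scalar_equilibrium_equation}, $u\mapsto\bar k(u)=\langle \bar k,k(\cdot,u)\rangle$ is $\mu$-a.e.\ constant. Hence $\langle P\hat\theta,\kappa_p\rangle$ is a deterministic linear function of $P\hat\theta$. So $G$ equals a linear term plus $G_Q(Q\hat\theta)$, and it suffices to prove convexity of $G_Q$ on $\bar k^\perp$.

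\emph{Step 2: log-partition representation.} Set $\Psi(\hat\omega):=\log\int_S\exp\bigl(\alpha\langle\hat\omega,Qk(\cdot,u)\rangle-\tfrac{\alpha^2}{2}\|Qk(\cdot,u)\|^2\bigr)\mu(du)$. This is a convex, smooth function whose gradient is $\nabla\Psi=\alpha Q\kappa_p$ and whose Hessian is $\alpha^2Q\,\mathcal C_pQ\succeq0$.

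For $\hat\theta\in\bar k^\perp$ this gives $G_Q(\hat\theta)=\alpha^{-1}\E\langle\hat\theta,\nabla\Psi(\hat\theta+\widehat X)\rangle$. With $g(\hat\theta):=\E\Psi(\hat\theta+\widehat X)$, the Gaussian smoothing of a convex function, we get $G_Q(\hat\theta)=\alpha^{-1}\langle\hat\theta,\nabla g(\hat\theta)\rangle$.

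Only finitely many directions matter when checking convexity along a line $\hat\theta+th$. I would therefore do all calculus in the finite-dimensional span of $\hat\theta$ and $h$, and push the remaining isonormal coordinates into the expectation. Interchanging derivatives with expectations is justified by exponential-family domination, as in Proposition~\ref{prop: canonical_game_FOC}.

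\emph{Step 3: second variation (main obstacle).} Along the line one has
\[
\tfrac{d^2}{dt^2}G_Q=\alpha^{-1}\bigl(2\langle h,\nabla^2g\,h\rangle+D^3g(\hat\theta)[\hat\theta,h,h]\bigr).
\]
The first term is $2\alpha\,\E\langle h,Q\mathcal C_pQh\rangle\ge0$. The difficulty is the third-order term, which has no sign a priori.

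My first attempt would be Gaussian integration by parts in the direction $\hat\theta$. By Stein's identity, $D^3g[\hat\theta,h,h]=\E\bigl[\widehat X(\hat\theta)\,\langle h,\nabla^2\Psi(\hat\theta+\widehat X)h\rangle\bigr]$ up to the $\|\hat\theta\|$-normalization. Equivalently, I would rewrite the expectation under the Cameron--Martin-shifted measure, where the shift density is $\exp(\widehat X(\hat\theta)-\frac12\|\hat\theta\|^2)$.

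I would then combine both terms into a single expectation of $\langle h,\nabla^2\Psi\,h\rangle$ against the weight $2+\widehat X(\hat\theta)$. Next I would use the exponential-family identity that the derivative of posterior variance equals posterior third cumulant. The aim is to show the combined expression equals an expected squared quantity, for instance a posterior variance of $\langle h,\alpha Qk(\cdot,u)\rangle$ under a tilted law. The Kyle scalar case, where $G$ is exactly quadratic with coefficient $\lambda$, is the sanity check for the sign.

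If a direct sign argument fails, the fallback is a different route to convexity. One writes $G_Q(\hat\theta)=\E\bigl[\langle\hat\theta,\kappa_p\rangle\bigr]$ as a supremum over the insider-side posterior of affine functionals, and exploits that the price is the posterior mean of a quantity whose likelihood is log-linear in $\hat\theta$.

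Once convexity of $G$ is established, $J$ is concave and Gâteaux differentiable by Proposition~\ref{prop: canonical_game_FOC}. A concave Gâteaux-differentiable functional on a Hilbert space is maximized at any stationary point, so \eqref{eqn: transformed Bayesian game FOC} is sufficient, which gives the second assertion of the lemma.
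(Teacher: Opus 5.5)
\paragraph{Main gap: Step~3.} This step cannot be closed, because the concavity claim is false. Your curvature formula is correct. Along $\hat\theta+th$ the cost term has second derivative $\alpha^{-1}\bigl(2\langle h,\nabla^2g\,h\rangle+D^3g(\hat\theta)[\hat\theta,h,h]\bigr)$, and the second piece is an expected third posterior cumulant. You then only list strategies for signing that piece (Stein's identity, a hoped-for sum of squares, a supremum representation). None of them can work.

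Take the binary case of Example~\ref{e.g. binary signal}, with $\mathcal H_k\simeq\R^2$. Under $\widehat\Theta_\alpha$ the posterior log-odds are $\log(p_1/p_2)=\alpha(\theta_1-\theta_2+Y)$ with $Y\sim\mathcal N(0,2)$. So, conditional on $s_1$,
\[
J(\hat\theta;\widehat\Theta_\alpha,s_1)=j(d):=d\,\bigl(1-m(d)\bigr),\qquad d:=\theta_1-\theta_2,\qquad m(d):=\E\bigl[(1+e^{-\alpha(d+Y)})^{-1}\bigr],
\]
and $j'(\alpha)$ is exactly the paper's $\Phi(\alpha)$. Here $j(0)=0$, $j>0$ on $(0,\infty)$, and $j(d)\le d\,e^{\alpha^2-\alpha d}\to0$ for $d\ge0$. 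A concave function on $[0,\infty)$ that is bounded below must be nondecreasing. Hence $j$, and therefore $J$, is not concave.

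The culprit is exactly your third-order term. The cost curvature along $d$ is $2m'(d)+d\,m''(d)$, and for $d>0$ the third-cumulant part $d\,m''(d)<0$ eventually dominates. The paper's proof does not supply the missing idea either. It asserts $D^2J[h,h]=-\E\langle\widehat\Theta_\alpha h,\mathcal C_p\widehat\Theta_\alpha h\rangle_{\mathcal H_k}$, which simply omits this term, and its covariance term does not match your correct $2\alpha^{-1}\langle h,\nabla^2g\,h\rangle$ either.

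\paragraph{Step~1.} This step needs $u\mapsto\bar k(u)$ to be $\mu$-a.e.\ constant. That is not a standing assumption, and the paper's appeal to equivariance is itself unproved. For a general kernel, write $P\hat\theta=c\,\bar k$. Then $\langle P\hat\theta,\kappa_p\rangle_{\mathcal H_k}=c\int_S\bar k(u)\,p(u)\,\mu(du)$ is random and depends on $Q\hat\theta$ through $p$, so the reduction to $G_Q$ fails.

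\paragraph{What survives.} At least in the binary case, the sufficiency conclusion holds even though concavity fails. Here $m$ is the distribution function $F$ of $W:=L/\alpha-Y$, with $L$ standard logistic and independent of $Y$; this is a log-concave law. So $j'(d)=1-F(d)-d\,F'(d)$ is positive on $(-\infty,0]$. On $(0,\infty)$ it equals $\bigl(1-F(d)\bigr)\bigl(1-d\,r(d)\bigr)$, where the hazard rate $r=F'/(1-F)$ is positive and nondecreasing, so $d\,r(d)$ increases strictly from $0$ to $\infty$. Hence $j$ has a unique critical point, which is its global maximum. In general, a proof of the lemma's second assertion would have to establish a unimodality property of this kind, not concavity.
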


\begin{proof}
The second Fr\'echet derivative of $J(\,\cdot\,;\widehat\Theta_\alpha,s)$
in any direction $h\in\mathcal H_k$ can be written in terms of the posterior covariance operator $\mathcal C_p$ as
\[
D^2 J(\hat\theta;\widehat\Theta_\alpha,s)[h,h]
 \;=\; -\,\mathbb E^{(\hat\theta;\widehat\Theta_\alpha)}
\bigl[\;\langle \widehat\Theta_\alpha h,\ \mathcal C_p\,\widehat\Theta_\alpha h\rangle_{\mathcal H_k}\;\bigr].
\]
Since $\mathcal C_p$ is a covariance operator, it is positive semidefinite by construction. Hence, $J(\,\cdot\,;\widehat\Theta_\alpha,s)$ is concave on $\mathcal H_k$.
Therefore, any $\hat\theta$ satisfying the first-order condition \eqref{eqn: transformed Bayesian game FOC}
is a global maximizer, and the condition is sufficient for optimality.
\end{proof}

\begin{lemma}
\label{lemma: claim for MM posterior general S}
In the canonical game, fix $\alpha>0$ and let $\widehat{\Theta}_\alpha$ denote the market maker's belief operator induced via \eqref{eqn: Dhat operator} by the candidate
strategy map $s\mapsto \alpha\, Q\,k(\,\cdot\,,s)$. 
Conditional on the insider observing signal $s\in S$, the market maker's posterior density $p(\,\cdot\,)$ \eqref{eqn: pi_1 finite S new} admits the representation
\begin{equation}
\label{eqn: logistic-normal posterior general S}
p(u)
=
\frac{\exp\!\left(
Z(u)+\alpha^2\left\langle Qk(\cdot,u),Qk(\cdot,s)\right\rangle_{\mathcal H_k}-\frac{\alpha^2}{2}\left\|Qk(\cdot,u)\right\|_{\mathcal H_k}^2
\right)}
{\int_S \exp\!\left(
Z(v)+\alpha^2\left\langle Qk(\cdot,v),Qk(\cdot,s)\right\rangle_{\mathcal H_k}-\frac{\alpha^2}{2}\left\|Qk(\cdot,v)\right\|_{\mathcal H_k}^2
\right)\mu(dv)},
\end{equation}
where $Z(\,\cdot\,)$ is a centered Gaussian process with
$\Cov \bigl(Z(u),Z(v)\bigr)=\alpha^2\left\langle Qk(\,\cdot\,,u),Qk(\cdot,v)\right\rangle_{\mathcal H_k}$.
\end{lemma}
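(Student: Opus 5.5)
Fixing $s\in S$ and $\alpha>0$, I will start from the posterior formula \eqref{eqn: pi_1 finite S new} of Theorem~\ref{thm: canonical game}. Under the belief $\widehat\Theta_\alpha$, the conjectured order at signal $u$ is $\hat\theta(u)=\alpha Qk(\cdot,u)$. In equilibrium the insider plays the conjectured order $\alpha Qk(\cdot,s)$, so the observation is $\hat\omega=\alpha Qk(\cdot,s)+\widehat X$. The exponent in the numerator of \eqref{eqn: pi_1 finite S new} is then
\[
\hat\omega\bigl(\alpha Qk(\cdot,u)\bigr)-\tfrac12\bigl\|\alpha Qk(\cdot,u)\bigr\|_{\mathcal H_k}^2 .
\]
I will expand the first term using linearity of $\hat\omega(\cdot)$ in its argument. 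The deterministic part $\langle \alpha Qk(\cdot,u),\alpha Qk(\cdot,s)\rangle_{\mathcal H_k}$ produces $\alpha^2\langle Qk(\cdot,u),Qk(\cdot,s)\rangle_{\mathcal H_k}$. The noise part $\widehat X(\alpha Qk(\cdot,u))$ is what I will define as $Z(u)$. The quadratic penalty gives $\tfrac{\alpha^2}{2}\|Qk(\cdot,u)\|^2_{\mathcal H_k}$. Dividing by the same expression integrated in $v$ against $\mu$ yields \eqref{eqn: logistic-normal posterior general S} exactly. This also requires the density $p$ with respect to $\mu$ to be the integrand ratio, which holds by definition.

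Next I will verify the law of $Z$. Since $\widehat X$ is isonormal on $\mathcal H_k$, the family $\{\widehat X(f)\}$ is centered jointly Gaussian with covariance $\langle f,g\rangle_{\mathcal H_k}$. Setting $Z(u):=\widehat X(\alpha Qk(\cdot,u))$, $Z$ is therefore a centered Gaussian process with
\[
\Cov(Z(u),Z(v))=\alpha^2\langle Qk(\cdot,u),Qk(\cdot,v)\rangle_{\mathcal H_k},
\]
as claimed. I will also note that $Z$ does not depend on $s$; only the drift term does.

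The main subtlety is that $\widehat X$ is not an element of $\mathcal H_k$ in infinite dimensions. The pairing $\hat\omega(h)=\langle h,\hat\omega\rangle_{\mathcal H_k}$ in Theorem~\ref{thm: canonical game} is literally defined only for $\hat\omega\in\mathcal H_k$. I will handle this by \emph{interpreting} $\hat\omega(h):=\langle h,\hat\theta\rangle_{\mathcal H_k}+\widehat X(h)$ as the isonormal extension, which is linear in $h$ almost surely on any fixed countable family. This matches the original game: by Step~1 of the proof of Theorem~\ref{thm: canonical game}, the sufficient statistic $\Pi_{mm}$ is a Gaussian process with exactly this mean and covariance.

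Joint measurability of $(u,\omega)\mapsto Z(u)$, needed for the normalizing integral to make sense, I will take from a measurable version. One way to get it is through the Young-integral continuity of Lemma~\ref{lemma: lemma 1 for example}(ii), transported through the isomorphism. Another is through the continuity of $u\mapsto Qk(\cdot,u)$ in $\mathcal H_k$, which gives $L^2$-continuity of $Z$ and hence a measurable modification.
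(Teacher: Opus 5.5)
Your proposal is correct and follows the paper's proof essentially verbatim. You substitute $\hat\omega=\alpha Qk(\cdot,s)+\widehat X$ into \eqref{eqn: pi_1 finite S new}, split the pairing into the drift term $\alpha^2\langle Qk(\cdot,u),Qk(\cdot,s)\rangle_{\mathcal H_k}$ and the noise term $Z(u)=\alpha\widehat X(Qk(\cdot,u))$, normalize, and read the covariance of $Z$ off isonormality. Your remarks on the isonormal extension of $\hat\omega(h)$ and on a jointly measurable version of $Z$ add care the paper omits. The one slip is ``in equilibrium'': for a fixed $\alpha>0$, the observation law comes simply from the insider following the candidate strategy $\alpha Qk(\cdot,s)$.
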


\begin{proof}
Conditional on $s$ and under the belief $\widehat{\Theta}_\alpha$, the observed order flow is
$\hat\omega=\alpha\,Qk(\,\cdot\,,s)+\widehat X$,
where $\widehat X$ is an isonormal Gaussian element on $\mathcal H_k$.
Substituting this expression into \eqref{eqn: pi_1 finite S new} yields
\eqref{eqn: logistic-normal posterior general S} after expanding the Cameron--Martin terms and normalizing.
Since $\widehat X$ is isonormal on $\mathcal H_k$, the process $Z(\,\cdot\,)$ is centered Gaussian with the specified covariance kernel.
\end{proof}

With the above observations about the ansatz~\eqref{eqn: equilibrium ansatz for transformed Bayesian game} in hand, we can now show that it admits an equilibrium.

\begin{theorem}
\label{thm: equilibrium of original trading game}
There exists $\alpha^*>0$ such that the strategy
\begin{equation}
\label{eqn: canonical_equilibrium_theta}
s \mapsto \theta^*(s)=\alpha^*\,Q\,k(\,\cdot\,,s)
\end{equation}
is an equilibrium of the canonical game. Under the isomorphism in Corollary~\ref{cor: equil isomorphism},
the corresponding strategy 
\begin{equation}
\label{eqn: beta_star_definition}
s \mapsto \beta^*(s):={\bf L}^{-1}\theta^*(s).
\end{equation}
is an equilibrium of the original trading game in canonical form.
\end{theorem}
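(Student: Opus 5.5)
The plan is to combine the three preceding results. Theorem~\ref{thm: scalar_equilibrium_equation} reduces the equilibrium condition within the ansatz to the scalar equation $\Phi(\alpha)=0$. Lemma~\ref{lemma: convex price impact general S} makes the first-order condition sufficient for optimality. Lemma~\ref{lemma: claim for MM posterior general S} gives an explicit logistic--normal description of the posterior, which I will use to compute $\Phi$.

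So it suffices to show that $\Phi$ has a root $\alpha^*>0$. Once that holds, the ansatz order $\alpha^*Qk(\cdot,s)$ satisfies \eqref{eqn: transformed Bayesian game FOC} for $\mu$-a.e.\ $s$, hence is a best response by concavity, and is therefore a canonical equilibrium. Corollary~\ref{cor: equil isomorphism} then transfers it to the original game, with $\beta^*(s)={\bf L}^{-1}\theta^*(s)$.

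To locate a root I will use the intermediate value theorem on $(0,\infty)$. The first step is continuity of $\Phi$. By Lemma~\ref{lemma: claim for MM posterior general S}, $p$ is a normalized exponential of a Gaussian field $Z$ plus a deterministic tilt, and both depend continuously on $\alpha$. I would get continuity of $\E[\kappa_p]$ and $\E[\mathcal C_p]$ in $\alpha$ by dominated convergence, since $p$ is a probability density and $k$ is bounded on the relevant sets. These are the same standing integrability conditions the paper already uses in Proposition~\ref{prop: canonical_game_FOC}.

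Next I examine the two endpoints, using formula \eqref{eqn:Phi-definition}.

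\textbf{As $\alpha\downarrow0$.} The posterior converges to the prior, so $p\to1$, $\kappa_p\to\bar k$, and $Q\bar k=0$. The price-impact term carries the factor $\alpha$ from the strategy and another factor $\alpha$ from $\widehat\Theta_\alpha$, so it vanishes. Hence $\Phi(0^+)=1>0$: the insider's marginal profit is positive, and he wants to trade more.

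\textbf{As $\alpha\to\infty$.} The order flow becomes fully revealing. The posterior concentrates at $s$, and $Q\E[\kappa_p]$ approaches $Qk(\cdot,s)$, so the first two terms cancel to leading order. Meanwhile the price-impact term $\alpha\,\langle Qk,\,Q\widehat\Theta_\alpha\E[\mathcal C_p]Qk\rangle$ stays strictly positive. I would aim to show its normalized value is bounded below, for instance by a variance lower bound on $\E[\mathcal C_p]$ of order $\alpha^{-2}$ multiplied against $\widehat\Theta_\alpha$ of order $\alpha$ and the explicit factor $\alpha$. Alternatively, and more robustly, I would use the fact that the expected profit $J$ at the ansatz cannot remain positive and growing when prices reveal $s$. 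Either way the goal is $\Phi(\alpha)<0$ for large $\alpha$.

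The intermediate value theorem then yields $\alpha^*>0$ with $\Phi(\alpha^*)=0$.

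The main obstacle is this large-$\alpha$ sign analysis. It requires quantitative control of the posterior covariance in the concentrating regime, a Laplace-type asymptotic for the Gibbs measure in Lemma~\ref{lemma: claim for MM posterior general S}, uniformly enough to balance against the growing factors of $\alpha$. My first attempt would be to reduce to the scalar Kyle computation in the direction $Qk(\cdot,s)$: write $Z$ along that direction, note that the tilt $\alpha^2\langle Qk(\cdot,u),Qk(\cdot,s)\rangle$ acts like a Gaussian signal-to-noise ratio, and compare with the one-asset fixed point $\lambda\beta=1/2$, which has the analogous sign change.
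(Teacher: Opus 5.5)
Your skeleton matches the paper's. You reduce to a root of $\Phi$, get continuity by dominated convergence, use $\Phi(0)=1$, show $\Phi<0$ for large $\alpha$, apply the intermediate value theorem, and finish with Lemma~\ref{lemma: convex price impact general S} and Corollary~\ref{cor: equil isomorphism}. The small-$\alpha$ endpoint and the final assembly are fine.

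The gap is the large-$\alpha$ step, and the mechanism you propose there is false. Pair the first-order condition with $Qk(\cdot,s)$, write $g(u):=\langle Qk(\cdot,s),Qk(\cdot,u)\rangle_{\mathcal H_k}$, and unwind the computation in the proof of Proposition~\ref{prop: canonical_game_FOC}. This gives
\[
\Phi(\alpha)\,g(s)=\E\Bigl[g(s)-\int_S g\,p\,d\mu\Bigr]-\alpha^2\,\E\bigl[\Var_p(g)\bigr],
\]
where $\Var_p$ is the variance under $p\,d\mu$. Both terms are posterior moments, and both vanish as the posterior concentrates.

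In the binary example the impact term is $\alpha^2\E[p_1p_2]$ with $Z\sim\mathcal N(\alpha^2,2\alpha^2)$. Since $1/(1+e^z)\le e^{-z/2}$, you get $\E[p_1p_2]\le\E[p_2]\le\E[e^{-Z/2}]=e^{-\alpha^2/4}$. So the impact term is at most $\alpha^2e^{-\alpha^2/4}\to0$. With atomic signals there is no $\alpha^{-2}$ lower bound on the posterior covariance, and the impact term is not bounded below.

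The sign of $\Phi$ at large $\alpha$ is therefore set by the ratio of two vanishing quantities. Nonnegativity of $\E[\mathcal C_p]$ does not control that ratio. The paper's proof makes the same leap: a vanishing informational advantage plus a nonnegative impact term only gives $\limsup_{\alpha\to\infty}\Phi(\alpha)\le 0$.

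Your fallbacks do not close the gap either. $\Phi(\alpha)g(s)$ is the derivative of $c\mapsto J(c\,Qk(\cdot,s);\widehat\Theta_\alpha,s)$ at $c=\alpha$ with the belief frozen. The on-path profit $\alpha\mapsto J(\alpha Qk(\cdot,s);\widehat\Theta_\alpha,s)$ also moves through the belief, so its decline does not sign $\Phi$. Kyle's $\lambda\beta=\tfrac12$ rests on a Gaussian prior and linear pricing, and neither is present here.

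What you need is a direct comparison of the two terms. Bayes consistency, i.e.\ the likelihood-ratio identity linking the order-flow laws under different signals, supplies it in the binary case. The density $\rho$ of $Z$ satisfies $\rho(-z)=e^{-z}\rho(z)$. Folding $\{z<0\}$ onto $\{z>0\}$ in both integrals gives
\[
\E[p_1p_2\mid s_1]=\int_0^\infty\frac{\rho(z)}{1+e^{z}}\,dz=\tfrac12\,\E[p_2\mid s_1].
\]
Hence $\Phi(\alpha)=1-\phi_1(\alpha)-\alpha^2\phi_2(\alpha)=\bigl(1-\tfrac{\alpha^2}{2}\bigr)\E[p_2\mid s_1]$. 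This is negative exactly for $\alpha>\sqrt2$, and it gives $\alpha^*=\sqrt2$ in closed form.

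For general $S$ you need the analogous estimate: $\alpha^2\E[\Var_p(g)]$ must eventually dominate $\E[g(s)-\int_S g\,p\,d\mu]$. A lower bound on the impact term alone cannot deliver this.
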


\begin{proof}
Let $\Phi$ be the scalar function defined in \eqref{eqn:Phi-definition} (Theorem~\ref{thm: scalar_equilibrium_equation}). By
Lemma~\ref{lemma: claim for MM posterior general S}, the posterior $p$ induced by the postulated
strategy is a logistic-normal random measure; in particular, $\Phi$ is continuous in $\alpha$ by the
dominated convergence theorem.

Evaluating \eqref{eqn: general-S collated FOC} at $\alpha=0$ gives
$\mathbb E_{\widehat\Theta_0}[\kappa_p]=\bar k$ and the price-impact term vanishes, hence
$\Phi(0)=1>0$.
On the other hand, as $\alpha\to\infty$, the logistic-normal representation in
Lemma~\ref{lemma: claim for MM posterior general S} implies posterior concentration on the true
signal, so $\mathbb E_{\widehat\Theta_\alpha}[\kappa_p]\to k(\cdot,s)$ in $\mathcal H_k$ and
$\mathbb E_{\widehat\Theta_\alpha}[\mathcal C_p]\to 0$. Consequently,
\begin{equation}
\label{eqn:infoadv-vanishes}
\frac{\big|\langle Qk(\cdot,s),\,Q(k(\cdot,s)-\mathbb E_{\widehat\Theta_\alpha}[\kappa_p])\rangle_{\mathcal H_k}\big|}
{\|Qk(\cdot,s)\|_{\mathcal H_k}^2}
\le
\frac{\|Q(k(\cdot,s)-\mathbb E_{\widehat\Theta_\alpha}[\kappa_p])\|_{\mathcal H_k}}{\|Qk(\cdot,s)\|_{\mathcal H_k}}
\longrightarrow 0
\qquad(\alpha\to\infty).
\end{equation}

Moreover, for each realization of $p$, the operator $\mathcal C_p$ is a covariance operator on $\mathcal H_k$
and hence positive semidefinite; therefore $\mathbb E_{\widehat\Theta_\alpha}[\mathcal C_p]$ is
positive semidefinite as well, and in particular
\begin{equation}\label{eqn:priceimpact-nonneg}
\Big\langle Qk(\cdot,s),\,
Q\,\widehat\Theta_\alpha\!\big(\mathbb E_{\widehat\Theta_\alpha}[\mathcal C_p]\big)\,Qk(\cdot,s)
\Big\rangle_{\mathcal H_k}\ge 0,
\qquad \alpha\ge 0.
\end{equation}
By the definition \eqref{eqn:Phi-definition} of $\Phi$, the price-impact contribution to $\Phi(\alpha)$ is therefore
$-\alpha$ times a nonnegative scalar. Combining \eqref{eqn:priceimpact-nonneg} with \eqref{eqn:infoadv-vanishes}
yields $\Phi(\alpha)<0$ for all sufficiently large $\alpha$.
By continuity and the Intermediate Value Theorem,
there exists $\alpha^*>0$ such that $\Phi(\alpha^*)=0$.

Finally, Lemma~\ref{lemma: convex price impact general S} implies that the first-order condition is
sufficient for optimality (conditional on each $s$). Hence, $\theta^*(s)=\alpha^*Qk(\cdot,s)$ is a best
response to $\widehat\Theta_{\alpha^*}$ for every $s$, and the strategy map
\eqref{eqn: equilibrium ansatz for transformed Bayesian game} with $\alpha=\alpha^*$ is an equilibrium
in the canonical game.
\end{proof}

\paragraph{Finite-$S$ Special Case}
We briefly record how the key objects reduce to ordinary linear algebra when $S$ is finite (with uniform prior).
A more detailed discrete formulation and discussion of the finance implications is developed separately in \cite{kellertseng}.

If $S=\{s_1,\dots,s_I\}$ is finite and the prior is uniform, we may identify $\mathcal H_k\simeq\R^I$ with the
canonical basis $\{e_i\}_{i=1}^I$. Then $\bar{k}$ is proportional to $\bar e:=\sum_{i=1}^I e_i$, and the centering operator
$Q$ has the matrix representation
\[
{\bf Q} \;=\; {\bf I}-\frac1I\,\bar e\,\bar e^{\mathsf T}\in\R^{I\times I}.
\]
Under the ansatz \eqref{eqn: equilibrium ansatz for transformed Bayesian game}, the belief operator reduces to
$\widehat{\Theta}_{\alpha}=\alpha\,{\bf Q}$,
where the $i$-th column specifies the insider's order for the $I$ pseudo-securities conditional on signal $s_i$.
Stacking the $I$ first-order conditions yields the finite-dimensional instance of \eqref{eqn: matrix equil eqn}:
$\Phi(\alpha)\,{\bf Q}=0$,
with $\Phi$ the corresponding finite-dimensional specialization of the scalar function from
Theorem~\ref{thm: scalar_equilibrium_equation}. The next example works out the binary case $I=2$ explicitly.

\begin{example}[Binary Signal]
\label{e.g. binary signal}

Let $I = 2$ and write the canonical game ansatz in matrix form
\[
\widehat{\Theta}_\alpha
=\alpha\,{\bf Q}
=\frac12
\begin{bmatrix}
\alpha & -\alpha\\[2pt]
-\alpha & \alpha
\end{bmatrix}.
\]
Substituting $\widehat{\Theta}_\alpha$ into the insider's  canonical game first-order condition
\eqref{eqn: transformed Bayesian game FOC} conditional on $s_1$ gives
\begin{equation}
\label{eqn: transformed Bayesian game FOC, binary signal}
\begin{bmatrix}
1\\
0
\end{bmatrix}
-
\begin{bmatrix}
\mathbb{E}[p_1]\\
1-\mathbb{E}[p_1]
\end{bmatrix}
-
\alpha^2
\begin{bmatrix}
\mathbb{E}[p_1p_2]\\
-\mathbb{E}[p_1p_2]
\end{bmatrix}
=
\bigl(1-\mathbb{E}[p_1]-\alpha^2\mathbb{E}[p_1p_2]\bigr)
\begin{bmatrix}
1\\
-1
\end{bmatrix}
=
\begin{bmatrix}
0\\
0
\end{bmatrix},
\end{equation}
where $\mathbb{E}[\,\cdot\,]$ denotes expectation under the law induced by $\widehat{\Theta}_\alpha$ and
$p_i$ is the market maker's posterior probability of $s_i$.
Substituting $\widehat{\Theta}_\alpha$ into the posterior \eqref{eqn: pi_1 finite S new} yields 
\begin{equation}
\label{eqn: posterior in binary signal case}
(p_1,p_2)\stackrel{d}{=}\Bigl(\frac{e^{Z}}{e^{Z}+1},\frac{1}{e^{Z}+1}\Bigr),
\qquad
Z\sim\mathcal N\!\bigl(\alpha^2,\,2\alpha^2\bigr).
\end{equation}
Here, $Z$ is the log-likelihood ratio under $s_1$.

By \eqref{eqn: posterior in binary signal case}, the quantities $\mathbb{E}[p_1]$ and $\mathbb{E}[p_1p_2]$
in \eqref{eqn: transformed Bayesian game FOC, binary signal} are moments of a logit-normal distribution and hence functions of $\alpha$.
Write $\mathbb{E}[p_1]=\phi_1(\alpha)$ and $\mathbb{E}[p_1p_2]=\phi_2(\alpha)$, and define
\begin{equation}
\label{eqn: Phi in terms of phi1 phi2}
\Phi(\alpha)\;:=\;1-\phi_1(\alpha)-\alpha^2\phi_2(\alpha).
\end{equation}
Then \eqref{eqn: transformed Bayesian game FOC, binary signal} is equivalent to
\[
\Phi(\alpha)
\begin{bmatrix}
1\\
-1
\end{bmatrix}
=
\begin{bmatrix}
0\\
0
\end{bmatrix},
\]
and the analogous condition for $s_2$ gives the same scalar equation. 
Equivalently, $\Phi(\alpha)\,{\bf Q}=0$, the binary-signal instance of \eqref{eqn: matrix equil eqn}.
Since $\Phi(0)=\tfrac12$ and, by Fatou's lemma applied to the posterior moments under
\eqref{eqn: posterior in binary signal case}, $\lim_{\alpha\to\infty}\Phi(\alpha)\uparrow 0$, the
Intermediate Value Theorem yields $\alpha^*>0$ with $\Phi(\alpha^*)=0$, which proves
Theorem~\ref{thm: equilibrium of original trading game} in this special case.
\end{example}

\subsection*{The Equilibrium Constant $\alpha^*$}
Our analysis thus isolates a natural economic parameter, $\alpha^*$, that governs equilibrium.
It is the insider’s \emph{equilibrium leverage}---the global scale of his market-neutral long--short trade in the canonical (hence original) game, i.e., the overall aggressiveness with which he scales his trade to exploit his private information.

The scalar function $\Phi(\alpha)$ of Theorem~\ref{thm: scalar_equilibrium_equation} encodes directly the insider's trade-off in choosing his optimal scale $\alpha^*$.  
$\Phi(\alpha)>0$ means that a marginal increase in aggressiveness is
still profitable (residual informational advantage dominates), whereas $\Phi(\alpha)<0$ means that
marginal price impact dominates (additional scaling mainly accelerates revelation of the signal and
moves prices against the insider).  In particular, at $\alpha=0$ there is no informed trading and hence no price impact
penalty, so $\Phi(0)>0$.  For $\alpha$ large, the posterior is nearly degenerate, so the marginal
informational benefit of scaling vanishes while the impact term remains weakly adverse, implying
$\Phi(\alpha)<0$ for all sufficiently large $\alpha$.  The equilibrium constant $\alpha^*>0$ is therefore
the endogenous scale at which informational edge and price impact exactly balance:
$\Phi(\alpha^*)=0$.

This is especially transparent in the binary-signal case of Example~\ref{e.g. binary signal}:
\begin{itemize}
\item \textbf{No trade ($\alpha=0$):} the insider’s position is zero, so expected profit is zero and the
market maker’s posterior \eqref{eqn: posterior in binary signal case} stays at the prior
$(\tfrac12,\tfrac12)$.  Correspondingly, $\Phi(0)=\tfrac12>0$, meaning that the marginal expected
profit from increasing $\alpha$ is strictly positive.
\item \textbf{Over-trading ($\alpha\to\infty$):} scaling demand without bound asymptotically reveals the
signal (the posterior \eqref{eqn: posterior in binary signal case} becomes nearly degenerate).
Correspondingly, $\Phi(\alpha)<0$ for all sufficiently large $\alpha$: the marginal informational benefit
of further scaling vanishes, while adverse price impact remains.
\item \textbf{Optimal scale ($\alpha^*>0$):} equilibrium obtains at the endogenous scale where the
marginal information edge and marginal price impact exactly offset, i.e., $\Phi(\alpha^*)=0$.
\end{itemize}

\section{Price Discovery}
\label{sec: discussion}

We now turn to the price-discovery implications of the equilibrium in
Theorem~\ref{thm: equilibrium of original trading game} in our general setting: the insider’s
equilibrium informed demand, the resulting price impact within and across markets, and the
informational efficiency of equilibrium prices.
These are basic issues in market microstructure---and, more broadly, in the theory of information
and asset prices.

\subsection{Insider Equilibrium Strategy}
\label{sec: Generalized Kyle's beta}

The insider's signal-contingent equilibrium portfolio obtained in Theorem~\ref{thm: equilibrium of original trading game} is the Bochner integral
\begin{equation}
\label{eqn: informed demand general S}
W^*(\,\cdot\,,s)
=\int_S \beta^*(s)(u)\,\eta(\,\cdot\,,u)\,\mu(du),
\qquad s\in S,
\end{equation}
where $\beta^*(s)={\bf L}^{-1}\theta^*(s)$.
In the options specialization of Section~\ref{SS:AnOptionsFormulation}, applying the Breeden--Litzenberger
formula~\eqref{eqn: Breeden-Litzenberger} to $W^*(\,\cdot\,,s)$ yields the equivalent option portfolio.

\paragraph{Insider Portfolio Construction}
The insider portfolio~\eqref{eqn: informed demand general S} contingent on signal $s$ is directly
implementable via the following three-step recipe.

\begin{enumerate}[label=\textup{Step~\arabic*.}, leftmargin=*, itemsep=1.2mm]
\item \emph{Initial market-neutral (canonical) signal tilt.}
Compute the centered kernel section
$
Qk(\,\cdot\,,s)=k(\,\cdot\,,s)-Pk(\,\cdot\,,s),
$
which removes the prior-mean component $\bar k$ and isolates the trading direction that tilts toward the realized
signal $s$. In the finite--uniform case, this is simply a long--short allocation across signals.

\item \emph{Information-intensity adjustment.}
Adjust the signal tilt by applying the inverse information-intensity operator ${\bf L}^{-1}$,
\[
\beta^*(s):={\bf L}^{-1}\theta^*(s)={\bf L}^{-1}\!\bigl(\alpha^*Qk(\,\cdot\,,s)\bigr)\in L^2(S,\mu),
\]
thereby shrinking exposure in directions where information intensity is higher (equivalently, liquidity is lower).

\item \emph{Assemble across assets.}
Form the portfolio \eqref{eqn: informed demand general S} by aggregating payoffs $\eta(\,\cdot\,,u)$ with weights $\beta^*(s)(u)$, $u \in S$.
\end{enumerate}

\paragraph{Many-Asset Generalization of Kyle's Informed Demand}
The equilibrium informed demand in our general model extends the
``trading direction--then--information-intensity adjustment'' structure of
\cite{kyle1985continuous} (recalled in \eqref{eqn: Kyle equil}):
\[
\text{\em informed demand}\;=\;
\begin{cases}
\displaystyle \frac{\sigma_\varepsilon}{\sigma_v}\,(v-v_0),
& \text{\em single asset},\\[1.1ex]
\displaystyle \beta^*(s)={\bf L}^{-1}\!\bigl(\alpha^*Qk(\,\cdot\,,s)\bigr),
& \text{\em many-asset (canonical coefficients)}.
\end{cases}
\]
In the single-asset Kyle model, $(v-v_0)$ determines the trading direction: the insider buys when he knows the asset value $v$ is above its prior mean $v_0$ and sells when $v$ is below $v_0$.
Similarly, in the Arrow--Debreu setting the centered kernel section $Qk(\,\cdot\,,s)$ specifies the mean-zero trading direction
that tilts toward the realized signal $s$ (relative to the prior), and this direction is then scaled by the equilibrium
trading intensity $\alpha^*$.
Likewise, in the single-asset case the trading direction is adjusted by the inverse information intensity
$\sigma_\varepsilon/\sigma_v$. In our setting, the analogous adjustment is performed by applying the inverse
information-intensity operator ${\bf L}^{-1}$.

\paragraph{A Unified Equilibrium View of Option Trades}
In the options specialization (Section~\ref{SS:AnOptionsFormulation}), the insider trading strategy~\eqref{eqn: informed demand general S}
$s\mapsto W^*(\,\cdot\,,s)$ can be read directly as an options strategy via the Breeden--Litzenberger formula. 
In particular, it encompasses workhorse option trades used in practice \emph{within an equilibrium framework}.
This scope goes well beyond existing models of informed options trading. 
The same translation applies to other derivative instruments; we focus on options as the most immediate illustration.

Beyond recovering familiar trades, the three-step insider portfolio construction recipe above also serves as a practical \emph{design principle}
for option strategies---its Breeden--Litzenberger implementation prescribes how to choose strikes and signed weights to build a vanilla option
portfolio that targets arbitrary aspects of the underlying payoff.

Examples~\ref{example: Kyle with options}--\ref{example: skewness} illustrate this correspondence numerically in a binary-signal, uniform-prior setting with
(truncated) Gaussian payoff specifications, chosen for transparency: they yield stylized equilibrium strategies that closely mirror  
standard option templates. (More generally, our nonparametric framework can accommodate payoff specifications that match these
constructions exactly.)

Throughout the discussion of these examples, we use standard practitioner terminology for option strategies (e.g., bull/bear
spreads, straddles, butterflies, and risk reversals) as in~\cite{hull2003options}.

%\iffalse

\begin{figure}[htbp!]
\centering

\subfloat[Signal $s_1$: payoff density $\eta(\,\cdot\, , s_1)$]{
\begin{tikzpicture}
\begin{axis}[
  width=7cm,height=2.6cm,
  xlabel={state/strike $x$}, ylabel={density},
  axis lines=left,
  ticklabel style={font=\scriptsize},
  label style={font=\scriptsize},
  xmin=-5,xmax=5,
  xtick={-4,-2,0,2,4},
  ytick=\empty,
]
% dashed = prior (mixture), solid+shaded = conditional
\addplot[thin,dashed,domain=-5:5,samples=250]
  {0.5*(1/sqrt(2*pi))*exp(-((x-1)^2)/2) + 0.5*(1/sqrt(2*pi))*exp(-((x+1)^2)/2)};
\addplot[draw=none,domain=-5:5,samples=250,fill=black,fill opacity=0.15]
  {(1/sqrt(2*pi))*exp(-((x-1)^2)/2)} \closedcycle;
\addplot[thick,domain=-5:5,samples=250]
  {(1/sqrt(2*pi))*exp(-((x-1)^2)/2)};
\end{axis}
\end{tikzpicture}
\label{fig: Binary_Signal_Distn_s2_observed}
}
\hfil
\subfloat[$W^*(\,\cdot\, ,s_1)$ : approx.~bull spread]{
\begin{tikzpicture}
\begin{axis}[
  width=7cm,height=2.6cm,
  xlabel={state/strike $x$}, ylabel={$W^*(x,s_1)$},
  axis lines=left,
  ticklabel style={font=\scriptsize},
  label style={font=\scriptsize},
  xmin=-5,xmax=5,
  xtick={-4,-2,0,2,4},
  ytick=\empty,
]
\addplot[thin] {0};
% “dip + hump + tail-to-0” profile (tune parameters if desired)
\addplot[very thick,domain=-5:5,samples=500]
  {3.4*(exp(-((x-2.0)^2)/(2*0.9*0.9)) - exp(-((x+2.3)^2)/(2*0.85*0.85)))};
\end{axis}
\end{tikzpicture}
\label{fig: Binary_Signal_AD_demand_s2_observed}
}

\medskip

\subfloat[Signal $s_2$: payoff density $\eta(\,\cdot\, ,s_2)$]{
\begin{tikzpicture}
\begin{axis}[
  width=7cm,height=2.6cm,
  xlabel={state/strike $x$}, ylabel={density},
  axis lines=left,
  ticklabel style={font=\scriptsize},
  label style={font=\scriptsize},
  xmin=-5,xmax=5,
  xtick={-4,-2,0,2,4},
  ytick=\empty,
]
\addplot[thin,dashed,domain=-5:5,samples=250]
  {0.5*(1/sqrt(2*pi))*exp(-((x-1)^2)/2) + 0.5*(1/sqrt(2*pi))*exp(-((x+1)^2)/2)};
\addplot[draw=none,domain=-5:5,samples=250,fill=black,fill opacity=0.15]
  {(1/sqrt(2*pi))*exp(-((x+1)^2)/2)} \closedcycle;
\addplot[thick,domain=-5:5,samples=250]
  {(1/sqrt(2*pi))*exp(-((x+1)^2)/2)};
\end{axis}
\end{tikzpicture}
\label{fig: Binary_Signal_Distn_s1_observed}
}
\hfil
\subfloat[$W^*(\,\cdot\, ,s_2)$ : approx.~bear spread]{
\begin{tikzpicture}
\begin{axis}[
  width=7cm,height=2.6cm,
  xlabel={state/strike $x$}, ylabel={$W^*(x,s_2)$},
  axis lines=left,
  ticklabel style={font=\scriptsize},
  label style={font=\scriptsize},
  xmin=-5,xmax=5,
  xtick={-4,-2,0,2,4},
  ytick=\empty,
]
\addplot[thin] {0};
\addplot[very thick,domain=-5:5,samples=500]
  {-3.4*(exp(-((x-2.0)^2)/(2*0.9*0.9)) - exp(-((x+2.3)^2)/(2*0.85*0.85)))};
\end{axis}
\end{tikzpicture}
\label{fig: Binary_Signal_AD_demand_s1_observed}
}

\caption{\footnotesize
\textbf{Vertical Spreads (Example~\ref{example: Kyle with options})}
Panels (a, c) plot the signal-contingent payoff densities $\eta(\,\cdot\,,s_i)$ (solid, shaded) against the prior (dashed).
Panels (b, d) plot the corresponding equilibrium insider strategies $W^*(\,\cdot\,,s_i)$: the strategies flip sign across $s_1$ and $s_2$; they are
well-approximated by the standard bull/bear vertical call spread payoff.
}

\label{fig: binary AD price discovery for mean}
\end{figure}

%\fi

\begin{example}{\bf (Kyle-Style Mean Signal - Vertical Spreads)}
\label{example: Kyle with options}
Assume the signal-contingent payoff densities $\eta(\,\cdot\,,s_i)$, $i=1,2$, are (truncated) Gaussian with a
common variance and means $\mu_1>\mu_2$, so the signal is a pure mean/level shift: observing $s_1$ indicates a
higher expected terminal payoff (relative to the prior), while $s_2$ indicates a lower one. Accordingly, the equilibrium
strategies $W^*(\,\cdot\,,s_i)$ flip sign across $s_1$ and $s_2$. Under the high-mean signal $s_1$, the insider’s bullish
position $W^*(\,\cdot\,,s_1)$ is well-approximated by a standard \textbf{bull call spread} (vertical spread),
\[
x \mapsto (x-K_1)_+-(x-K_2)_+, \qquad K_1<K_2,
\]
while under $s_2$ he takes the opposite position (the corresponding \textbf{bear spread}). Figure~\ref{fig: binary AD price discovery for mean}\,(a,c)
plots the signal-contingent payoff densities $\eta(\,\cdot\,,s_i)$ against the prior, and Figure~\ref{fig: binary AD price discovery for mean}\,(b,d)
shows the corresponding equilibrium strategies $W^*(\,\cdot\,,s_i)$.

In the classical single-asset \cite{kyle1985continuous} setting, trading on mean information is restricted to affine positions in the underlying, i.e., no options. 
Our general setting relaxes that restriction and allows the same ``buy on good news, sell on bad news'' trade to be implemented through---necessarily nonlinear---option spreads.

\end{example}

\begin{figure}[htbp!]
\centering

% --- states for the price-impact discussion (edit for different locations) ---
\def\xstate{1.00}
\def\ystate{2.70}
\def\zstate{3.20}

\subfloat[High volatility signal $s_1$]{
\begin{tikzpicture}
\begin{axis}[
  width=7cm,height=2.6cm,
  xlabel={state/strike $x$}, ylabel={density},
  axis lines=left,
  ticklabel style={font=\scriptsize},
  label style={font=\scriptsize},
  xtick=\empty, ytick=\empty,
  xmin=0, xmax=4.2, ymin=0,
  enlargelimits=false,
]
  % prior (dashed) vs conditional (filled)
  \addplot[thick,dashed]  {exp(-((x-2)^2)/(2*1.00^2))};
  \addplot[thick, fill=black, fill opacity=0.15]
    {exp(-((x-2)^2)/(2*1.65^2))} \closedcycle;

  % visible state markers + labels (slightly above axis)
  \addplot[only marks, mark=*, mark size=2.6pt]
    coordinates {(\xstate,0) (\ystate,0) (\zstate,0)};
  \draw[line width=0.6pt] (axis cs:\xstate,0) -- (axis cs:\xstate,0.08);
  \draw[line width=0.6pt] (axis cs:\ystate,0) -- (axis cs:\ystate,0.08);
  \draw[line width=0.6pt] (axis cs:\zstate,0) -- (axis cs:\zstate,0.08);

  \node[font=\scriptsize, fill=white, fill opacity=0.9, text opacity=1,
        inner sep=1.2pt, rounded corners=1pt, anchor=south]
    at (axis cs:\xstate,0.085) {$x$};
  \node[font=\scriptsize, fill=white, fill opacity=0.9, text opacity=1,
        inner sep=1.2pt, rounded corners=1pt, anchor=south]
    at (axis cs:\ystate,0.085) {$y$};
  \node[font=\scriptsize, fill=white, fill opacity=0.9, text opacity=1,
        inner sep=1.2pt, rounded corners=1pt, anchor=south]
    at (axis cs:\zstate,0.085) {$z$};

\end{axis}
\end{tikzpicture}
\label{fig: Binary_Signal_Distn_s2_observed, vol}
}
\hfil
\subfloat[$W^*(\,\cdot\,, s_1)$ : Straddle]{
\begin{tikzpicture}
\begin{axis}[
  width=7cm,height=2.6cm,
  xlabel={state/strike $x$}, ylabel={$W^*(x,s_1)$},
  axis lines=left,
  ticklabel style={font=\scriptsize},
  label style={font=\scriptsize},
  xtick=\empty, ytick=\empty,
  xmin=0, xmax=4.2, ymin=-11, ymax=6,
  enlargelimits=false,
]
  \addplot[black!40] coordinates {(0,0) (4.2,0)};

  % straddle-like (up to an affine component): positive tails, negative center; zero at y
  \addplot[very thick,domain=0:4.2,samples=500]
    { 3.2*exp(-((x-0.90)/0.55)^2)
     -9.5*exp(-((x-2.00)/0.45)^2)
     +3.0*exp(-((x-3.30)/0.55)^2) };

  % visible state markers + labels (lower than before; keep aligned)
  \addplot[only marks, mark=*, mark size=2.6pt]
    coordinates {(\xstate,0) (\ystate,0) (\zstate,0)};
  \draw[line width=0.6pt] (axis cs:\xstate,0) -- (axis cs:\xstate,-0.65);
  \draw[line width=0.6pt] (axis cs:\ystate,0) -- (axis cs:\ystate,-0.65);
  \draw[line width=0.6pt] (axis cs:\zstate,0) -- (axis cs:\zstate,-0.65);

  \node[font=\scriptsize, fill=white, fill opacity=0.9, text opacity=1,
        inner sep=1.2pt, rounded corners=1pt, anchor=north]
    at (axis cs:\xstate,-2.01) {$x$};
  \node[font=\scriptsize, fill=white, fill opacity=0.9, text opacity=1,
        inner sep=1.2pt, rounded corners=1pt, anchor=north]
    at (axis cs:\ystate,-2.01) {$y$};
  \node[font=\scriptsize, fill=white, fill opacity=0.9, text opacity=1,
        inner sep=1.2pt, rounded corners=1pt, anchor=north]
    at (axis cs:\zstate,-2.01) {$z$};

\end{axis}
\end{tikzpicture}
\label{fig: Binary_Signal_AD_demand_s2_observed, vol}
}

\medskip

\subfloat[Low volatility signal $s_2$]{
\begin{tikzpicture}
\begin{axis}[
  width=7cm,height=2.6cm,
  xlabel={state/strike $x$}, ylabel={density},
  axis lines=left,
  ticklabel style={font=\scriptsize},
  label style={font=\scriptsize},
  xtick=\empty, ytick=\empty,
  xmin=0, xmax=4.2, ymin=0,
  enlargelimits=false,
]
  \addplot[thick,dashed] {exp(-((x-2)^2)/(2*1.00^2))};
  \addplot[thick, fill=black, fill opacity=0.15]
    {exp(-((x-2)^2)/(2*0.65^2))} \closedcycle;

  % visible state markers + labels (higher than before; keep aligned)
  \addplot[only marks, mark=*, mark size=2.6pt]
    coordinates {(\xstate,0) (\ystate,0) (\zstate,0)};
  \draw[line width=0.6pt] (axis cs:\xstate,0) -- (axis cs:\xstate,0.20);
  \draw[line width=0.6pt] (axis cs:\ystate,0) -- (axis cs:\ystate,0.20);
  \draw[line width=0.6pt] (axis cs:\zstate,0) -- (axis cs:\zstate,0.20);

  \node[font=\scriptsize, fill=white, fill opacity=0.9, text opacity=1,
        inner sep=1.2pt, rounded corners=1pt, anchor=south]
    at (axis cs:\xstate,0.205) {$x$};
  \node[font=\scriptsize, fill=white, fill opacity=0.9, text opacity=1,
        inner sep=1.2pt, rounded corners=1pt, anchor=south]
    at (axis cs:\ystate,0.205) {$y$};
  \node[font=\scriptsize, fill=white, fill opacity=0.9, text opacity=1,
        inner sep=1.2pt, rounded corners=1pt, anchor=south]
    at (axis cs:\zstate,0.205) {$z$};

\end{axis}
\end{tikzpicture}
\label{fig: Binary_Signal_Distn_s1_observed, vol}
}
\hfil
\subfloat[$W^*(\,\cdot\,, s_2)$ : Butterfly]{
\begin{tikzpicture}
\begin{axis}[
  width=7cm,height=2.6cm,
  xlabel={state/strike $x$}, ylabel={$W^*(x,s_2)$},
  axis lines=left,
  ticklabel style={font=\scriptsize},
  label style={font=\scriptsize},
  xtick=\empty, ytick=\empty,
  xmin=0, xmax=4.2, ymin=-6, ymax=10,
  enlargelimits=false,
]
  \addplot[black!40] coordinates {(0,0) (4.2,0)};

  % butterfly-like: weight near center, negative shoulders/tails (up to an affine component)
  \addplot[very thick,domain=0:4.2,samples=500]
    { -3.0*exp(-((x-0.75)/0.35)^2)
      +9.0*exp(-((x-2.00)/0.50)^2)
      -3.0*exp(-((x-3.25)/0.35)^2) };

  % visible state markers + labels (slightly higher; keep aligned)
  \addplot[only marks, mark=*, mark size=2.6pt]
    coordinates {(\xstate,0) (\ystate,0) (\zstate,0)};
  \draw[line width=0.6pt] (axis cs:\xstate,0) -- (axis cs:\xstate,0.75);
  \draw[line width=0.6pt] (axis cs:\ystate,0) -- (axis cs:\ystate,0.75);
  \draw[line width=0.6pt] (axis cs:\zstate,0) -- (axis cs:\zstate,0.75);

  \node[font=\scriptsize, fill=white, fill opacity=0.9, text opacity=1,
        inner sep=1.2pt, rounded corners=1pt, anchor=south]
    at (axis cs:\xstate,1.77) {$x$};
  \node[font=\scriptsize, fill=white, fill opacity=0.9, text opacity=1,
        inner sep=1.2pt, rounded corners=1pt, anchor=south]
    at (axis cs:\ystate,1.77) {$y$};
  \node[font=\scriptsize, fill=white, fill opacity=0.9, text opacity=1,
        inner sep=1.2pt, rounded corners=1pt, anchor=south]
    at (axis cs:\zstate,1.77) {$z$};

\end{axis}
\end{tikzpicture}
\label{fig: Binary_Signal_AD_demand_s1_observed, vol}
}

\captionsetup{singlelinecheck=off}
\caption{
{\footnotesize
$\;$\\
{\bf Straddle/Butterfly (Example~\ref{example: vol straddle})}
The left column shows the insider's private signal (filled densities) against the market maker's prior (dashed). 
The right column shows the insider portfolios $W^*$ conditional on the corresponding signals.\\
{\bf Price Impact} Three states are indicated---$x$, $y$, $z$.
Security $y$ has zero price impact on all securities, $\Lambda_{w,y} = 0$ for all $w$, because its payoff has zero variation across signals, $\eta(y, s_1) = \eta(y, s_2)$. 
Therefore, the informed demand for $y$ must be zero---$W^*(y, s_1) = W^*(y, s_2) = 0$ as shown in Figures~\ref{fig: Binary_Signal_AD_demand_s2_observed, vol} and~\ref{fig: Binary_Signal_AD_demand_s1_observed, vol}.
$\Lambda_{x,z} > 0$ because $x$ and $z$ have positively correlated (in fact, identical) payoffs.
}
}
\label{fig: binary AD price discovery for vol, normal}
\end{figure}

\begin{example}{\bf (Volatility Signal - Straddle/Butterfly)}
\label{example: vol straddle}
\nopagebreak

Assume the signal-contingent payoff densities $\eta(\,\cdot\,,s_i)$, $i=1,2$, are (truncated) Gaussian with a common mean $\mu$ and variances
$\sigma_1^2 > \sigma_2^2$, so the signal is a volatility shift: observing $s_1$ tells the insider the terminal payoff
is more volatile than under the prior, while $s_2$ indicates lower volatility.

Accordingly, the equilibrium strategies $W^*(\,\cdot\,,s_i)$ are symmetric but load differently on tails versus the
center. When the signal is high-vol ($s_1$), the insider's long-volatility position $W^*(\,\cdot\,,s_1)$ is
well-approximated by an at-the-money \textbf{straddle} with $K \approx \mu$,
\[
x \mapsto (x-K)_+ + (K-x)_+,
\]
or, for bounded tail exposure, by a four-leg truncation such as a \textbf{long iron condor}. When the signal is low-vol ($s_2$), the insider's short-volatility position $W^*(\,\cdot\,,s_2)$
closely mirrors a \textbf{butterfly},
\[
x \mapsto (x-K_1)_+ - 2(x-K_2)_+ + (x-K_3)_+, \qquad K_1 < K_2 < K_3,
\]
with $K_2 \approx \mu$. Figure~\ref{fig: binary AD price discovery for vol, normal}\,(a, c) plots the signal-contingent densities
$\eta(\,\cdot\,,s_i)$ against the prior, and Figure~\ref{fig: binary AD price discovery for vol, normal}\,(b, d) shows the resulting equilibrium
strategies $W^*(\,\cdot\,,s_i)$. (The marked strikes are reference points in the price-impact discussion below.)

\end{example}

\begin{figure}[htbp!]
\centering

% ---- domain (match the style figure) ----
\def\xmin{-5.0}
\def\xmax{5.0}

% ---- prior: standard normal ----
\def\muP{0.0}
\def\sigP{1.0}

% ---- skew-normal parameters (standardized to mean 0, var 1) ----
\def\alphaMag{4.0}
\def\alphaOne{\alphaMag}     % right-skew
\def\alphaTwo{-\alphaMag}    % left-skew

% delta = alpha/sqrt(1+alpha^2); omega = 1/sqrt(1-2 delta^2/pi); xi = -omega*delta*sqrt(2/pi)
\pgfmathsetmacro{\deltaMag}{\alphaMag/sqrt(1+\alphaMag^2)}
\pgfmathsetmacro{\omegaSN}{1/sqrt(1 - 2*\deltaMag^2/pi)}
\pgfmathsetmacro{\omegaOne}{\omegaSN}
\pgfmathsetmacro{\omegaTwo}{\omegaSN}
\pgfmathsetmacro{\xiOne}{- \omegaSN*\deltaMag*sqrt(2/pi)} % alpha>0 => xi<0
\pgfmathsetmacro{\xiTwo}{  \omegaSN*\deltaMag*sqrt(2/pi)} % alpha<0 => xi>0

% ---- Normal CDF approx Phi(z) via tanh (always in [0,1]) ----
% Phi(z) ≈ 0.5*(1 + tanh(0.79788456*z*(1+0.044715*z^2)))

% ---- demand profile parameters ----
\def\A{5.0}
\def\aMid{0.55}
\def\aTail{0.85}

% ===== SWAPPED: (c) moved here into (a) position =====
\subfloat[Right-skewed signal $s_1$]{
\begin{tikzpicture}
\begin{axis}[
  width=7cm,height=2.6cm,
  xlabel={state/strike $x$}, ylabel={density},
  axis lines=left,
  ticklabel style={font=\scriptsize},
  label style={font=\scriptsize},
  xmin=\xmin, xmax=\xmax,
  xtick={-4,-2,0,2,4},
  ytick=\empty,
  ymin=0,
  domain=\xmin:\xmax, samples=280,
  clip=true
]
  % Prior (dashed)
  \addplot[thin,dashed]
    {(1/(\sigP*sqrt(2*pi))) * exp(-0.5*((x-\muP)/\sigP)^2)};

  % Conditional density: fill ONLY under the conditional curve
  \addplot[draw=none, fill=black, fill opacity=0.15]
  { max(0,
      (2/\omegaTwo) * (1/sqrt(2*pi)) * exp(-0.5*((x-\xiTwo)/\omegaTwo)^2)
      * (0.5*(1 + tanh(0.79788456*(\alphaTwo*(x-\xiTwo)/\omegaTwo)*
                     (1+0.044715*(\alphaTwo*(x-\xiTwo)/\omegaTwo)^2))))
    )
  } \closedcycle;

  % Conditional density outline
  \addplot[thick]
  {
      (2/\omegaTwo) * (1/sqrt(2*pi)) * exp(-0.5*((x-\xiTwo)/\omegaTwo)^2)
      * (0.5*(1 + tanh(0.79788456*(\alphaTwo*(x-\xiTwo)/\omegaTwo)*
                     (1+0.044715*(\alphaTwo*(x-\xiTwo)/\omegaTwo)^2))))
  };
\end{axis}
\end{tikzpicture}
\label{fig: Binary_Signal_Distn_s1_observed, skew}
}
\hfil
\subfloat[$W^*(\,\cdot\, , s_1)$ - Risk Reversal]{
\begin{tikzpicture}
\begin{axis}[
  width=7cm,height=2.6cm,
  xlabel={state/strike $x$}, ylabel={$W^*(x,s_1)$},
  axis lines=left,
  ticklabel style={font=\scriptsize},
  label style={font=\scriptsize},
  xmin=\xmin, xmax=\xmax,
  xtick={-4,-2,0,2,4},
  ytick=\empty,
  domain=\xmin:\xmax, samples=520,
  clip=true
]
  \addplot[thin] {0};

  % Ratio-spread-like, sign-changing profile (right-tail loading)
  \addplot[very thick]
  {
     \A*( exp(-0.5*((x-0.95)/\aMid)^2) - exp(-0.5*((x+0.95)/\aMid)^2) )
   - 0.28*\A*( exp(-0.5*((x-2.20)/\aTail)^2) - exp(-0.5*((x+2.20)/\aTail)^2) )
  };
\end{axis}
\end{tikzpicture}
\label{fig: Binary_Signal_AD_demand_s2_observed, skew}
}

\medskip

% ===== SWAPPED: (a) moved here into (c) position =====
\subfloat[Left-skewed signal $s_2$]{
\begin{tikzpicture}
\begin{axis}[
  width=7cm,height=2.6cm,
  xlabel={state/strike $x$}, ylabel={density},
  axis lines=left,
  ticklabel style={font=\scriptsize},
  label style={font=\scriptsize},
  xmin=\xmin, xmax=\xmax,
  xtick={-4,-2,0,2,4},
  ytick=\empty,
  ymin=0,
  domain=\xmin:\xmax, samples=280,
  clip=true
]
  % Prior (dashed)
  \addplot[thin,dashed]
    {(1/(\sigP*sqrt(2*pi))) * exp(-0.5*((x-\muP)/\sigP)^2)};

  % Conditional density: fill ONLY under the conditional curve
  \addplot[draw=none, fill=black, fill opacity=0.15]
  { max(0,
      (2/\omegaOne) * (1/sqrt(2*pi)) * exp(-0.5*((x-\xiOne)/\omegaOne)^2)
      * (0.5*(1 + tanh(0.79788456*(\alphaOne*(x-\xiOne)/\omegaOne)*
                     (1+0.044715*(\alphaOne*(x-\xiOne)/\omegaOne)^2))))
    )
  } \closedcycle;

  % Conditional density outline
  \addplot[thick]
  {
      (2/\omegaOne) * (1/sqrt(2*pi)) * exp(-0.5*((x-\xiOne)/\omegaOne)^2)
      * (0.5*(1 + tanh(0.79788456*(\alphaOne*(x-\xiOne)/\omegaOne)*
                     (1+0.044715*(\alphaOne*(x-\xiOne)/\omegaOne)^2))))
  };
\end{axis}
\end{tikzpicture}
\label{fig: Binary_Signal_Distn_s2_observed, skew}
}
\hfil
\subfloat[$W^*(\,\cdot\, , s_2)$ - Put Spread]{
\begin{tikzpicture}
\begin{axis}[
  width=7cm,height=2.6cm,
  xlabel={state/strike $x$}, ylabel={$W^*(x,s_2)$},
  axis lines=left,
  ticklabel style={font=\scriptsize},
  label style={font=\scriptsize},
  xmin=\xmin, xmax=\xmax,
  xtick={-4,-2,0,2,4},
  ytick=\empty,
  domain=\xmin:\xmax, samples=520,
  clip=true
]
  \addplot[thin] {0};

  % Mirror image: left-tail loading (sign flip)
  \addplot[very thick]
  {
   -(
       \A*( exp(-0.5*((x-0.95)/\aMid)^2) - exp(-0.5*((x+0.95)/\aMid)^2) )
     - 0.28*\A*( exp(-0.5*((x-2.20)/\aTail)^2) - exp(-0.5*((x+2.20)/\aTail)^2) )
    )
  };
\end{axis}
\end{tikzpicture}
\label{fig: Binary_Signal_AD_demand_s1_observed, skew}
}

\caption{
{\footnotesize
$\;$\\
{\bf Risk Reversals/Put Spreads (Example~\ref{example: skewness}).}
The left column shows the insider's private signal (filled densities) against the market maker's prior (dashed).
The right column shows the insider portfolios $W^*$ conditional on the corresponding signals.\\
Panel (b) is a risk reversal ($s_1$) and panel (d) is a put spread ($s_2$).
}
}
\label{fig: binary AD price discovery for skewness}
\end{figure}

\begin{example}{\bf (Skewness Signal - Risk Reversals/Put Spreads)}
\label{example: skewness}
\nopagebreak

Assume the signal-contingent payoff densities $\eta(\,\cdot\,,s_i)$, $i=1,2$, are (truncated) skew-normal with matched location/scale and
skewness parameters $\alpha_1>\alpha_2$, so the signal is a skewness shift: observing $s_1$ tells the insider the terminal payoff is
more right-skewed than under the prior, while $s_2$ indicates more left-skew.
Figure~\ref{fig: binary AD price discovery for skewness}\,(a, c) plots the signal-contingent densities $\eta(\,\cdot\,,s_i)$ against the prior, and
Figure~\ref{fig: binary AD price discovery for skewness}\,(b, d) shows the resulting equilibrium strategies $W^*(\,\cdot\,,s_i)$.

The equilibrium strategies $W^*(\,\cdot\,,s_i)$ tilt exposure toward the favored tail. When the signal induces right-skewness
($s_1$), $W^*(\,\cdot\,,s_1)$ loads positively on the right tail (and relatively negatively on the left tail), and is well-approximated by a
\textbf{risk reversal} (long out-of-the-money calls financed by short out-of-the-money puts). Under $s_2$ the strategy takes the opposite stance, concentrating exposure on the left tail,
which can be implemented by the corresponding \textbf{put spread}, or equivalently a risk reversal in the opposite direction.

\end{example}

\subsection{Price Impact}
\label{sec: Generalized Kyle's lambda and Cross Price Impact}

Applying Corollary~\ref{cor: def of price impact} under the market maker's equilibrium belief
$W^*(\,\cdot\,,\,\cdot\,)$ yields the equilibrium cross price impact of order flow in claim $y$
on the price of claim $x$:
\begin{equation}
\label{eqn: lambda x y}
\Lambda_{x,y}
\;\equiv\;
\frac{\partial}{\partial W(y)}\,\overline P(x, W^*;W^*)
\Big|_{\widetilde W = W^*}
\;=\;
\frac{1}{\sigma^2(y)}\,
\mathbb E\!\left[
\Cov\!\Big( \eta(x,\cdot),\, W^*(y,\cdot)\,\big|\,\omega \Big)
\right],
\end{equation}
where $\Cov(\,\cdot\,,\,\cdot\,|\,\omega)$ denotes covariance over the signal variable $s$ under the
market maker's equilibrium posterior $\pi_1^*(ds\,;\omega)$, and the outer $\mathbb E[\,\cdot\,]$ averages
over $\omega$ under its equilibrium law. Equivalently, $\Lambda$ represents the first variation of the equilibrium pricing
functional with respect to order flow. Dropping the outer expectation in~\eqref{eqn: lambda x y} yields the
(random) price-impact kernel $\Lambda_{x,y}(\omega)$ conditional on the realized aggregate order flow $\omega$.

\paragraph{Liquidity, Scale, and Information Components}
For each security $x$, the own-price impact $\Lambda_{x,x}$ is the natural analogue of \emph{Kyle's lambda}
(recalled in~\eqref{eqn: Kyle equil}). More generally, $\Lambda_{x,y}$ has clear finance meaning:

\begin{enumerate}[label=(\roman*)]
\item \emph{Liquidity adjustment.}
The prefactor $\sigma^{-2}(y)$ is the inverse noise variance in the $y$-market: noisier $y$-flow makes inference from $y$-orders less precise and therefore reduces the market maker's marginal price response to order flow in that market.

\item \emph{Scaling via equilibrium aggressiveness.}
Since $W^*(\,\cdot\,,s)$ is linear in the equilibrium trading scale $\alpha^*$, the impact kernel $\Lambda_{x,y}$ is linear in $\alpha^*$ as well: price impact increases one-for-one with the insider's overall trading scale.

\item \emph{Informational alignment across assets.}
Fix a realized aggregate order flow $\omega$. The conditional impact kernel is governed by the posterior covariance
$\Cov(\eta(x,\,\cdot\,),\,W^*(y,\,\cdot\,)\mid \omega)$ under $\pi_1^*(ds;\omega)$.
Thus, a marginal order in the $y$-security moves the price of the $x$-security only insofar as $y$-flow is
statistically aligned with the $x$-payoff profile $\eta(x,\,\cdot\,)$---that is, only insofar as it is informative about $x$'s payoff.

\end{enumerate}

To make the price-impact characterization even more concrete, the next example considers three scenarios for the realized cross-impact kernel $\Lambda_{x,y}(\omega)$.

\begin{samepage}

\begin{example}
\label{example: cross-impact sign patterns}
$\;$

\begin{enumerate}[label=(\roman*)]

\item \emph{(Mutually exclusive signal loadings $\Rightarrow$ negative cross-impact.)}
Suppose there exist disjoint positive-measure sets $A,B\in\mathcal S$ and constants $a,b>0$ such that
\[
\eta(x,s)=a\,\mathbf 1_A(s),
\qquad
W^*(y,s)=b\,\mathbf 1_B(s),
\qquad s\in S.
\]
Then for every $\omega$,
\[
\Cov \big(\eta(x,\,\cdot\,),W^*(y,\,\cdot\,)\big)
= -ab\,\pi_1^*(A\mid\omega)\,\pi_1^*(B\mid\omega) < 0,
\]
so the conditional cross-impact $\Lambda_{x,y}(\omega) < 0$.
Since $y$-flow shifts the posterior toward $B$ and the payoff of $x$ is supported on $A$ with $A\cap B=\varnothing$,
the posterior value of $x$ falls and its price is adjusted downward.

\item \emph{(Signal-invariant order profile $\Rightarrow$ zero cross-impact.)}
If $W^*(y,\,\cdot\,)$ is $\pi_1^*(\,\cdot\,\mid\omega)$-a.s.\ constant for every $\omega$, then
\[
\Cov \big(\eta(x,\,\cdot\,),W^*(y,\,\cdot\,)\big)=0
\qquad\text{for all }x\text{ and all }\omega,
\]
so $y$ has zero price impact across assets, $\Lambda_{x,y}(\omega)\equiv 0$.
In particular, this holds if $W^*(y,\,\cdot\,)\equiv 0$; see \textbf{Figures~\ref{fig: Binary_Signal_AD_demand_s2_observed, vol}--\ref{fig: Binary_Signal_AD_demand_s1_observed, vol}}, where the security labeled $y$ is one such example with identically zero informed demand.

\item \emph{(Co-moving payoff profiles $\Rightarrow$ positive cross-impact.)}
If $W^*(y,\,\cdot\,)=c\,\eta(x,\,\cdot\,)$ for some $c>0$, then for every $\omega$,
\[
\Cov \big(\eta(x,\,\cdot\,),W^*(y,\,\cdot\,)\big)
=
c\,\Var_{\pi_1^*}\big(\eta(x,\,\cdot\,)\big)\ge 0,
\]
so $\Lambda_{x,y}(\omega)\ge 0$.
Since $W^*(y,\,\cdot\,)$ co-moves with $\eta(x,\,\cdot\,)$, $y$-flow shifts the posterior toward signal realizations with larger
$\eta(x,\,\cdot\,)$, raising the posterior value of $x$;
see \textbf{Figure~\ref{fig: Binary_Signal_Distn_s1_observed, vol}}, where the securities labeled $x$ and $z$
illustrate this case---they have identical payoff profiles across the high and low volatility signals.
Empirically, this is natural. For example, the put and call legs of a straddle should exhibit positive cross-impact (see also the discussion below on empirical implications).
\end{enumerate}
\end{example}

\end{samepage}

\paragraph{Price Impact Across Derivatives.}
The cross-impact kernel $\Lambda_{x,y}$ for primitive Arrow--Debreu claims extends immediately to derivatives, i.e., Borel-measurable maps
$\varphi \colon [\underaccent{\bar}{x},\bar{x}]\to\mathbb{R}$, where $\varphi(x)$ is the exposure to state $x$.
For two derivatives $\varphi_1$ and $\varphi_2$, the \emph{equilibrium cross price impact} of a marginal trade in $\varphi_2$ on the price of $\varphi_1$
is the G\^{a}teaux derivative
\[
\Lambda_{\varphi_1,\varphi_2}
\;:=\;
\left.\frac{d}{d\varepsilon}\,
\overline P\!\bigl(\varphi_1,\,W^*+\varepsilon\varphi_2;\,W^*\bigr)\right|_{\varepsilon=0},
\]
where $W^*+\varepsilon\varphi_2$ denotes the perturbed order,
$y\mapsto W^*(y)+\varepsilon\,\varphi_2(y)$.

\begin{corollary}[Price Impact Between Derivatives]
\label{cor: price impact between derivatives}
For two derivatives $\varphi_1$ and $\varphi_2$,
\begin{equation}
\label{eqn: derivatives cross price impact}
\Lambda_{\varphi_1,\varphi_2}
\;=\;
\int_{\underaccent{\bar}{x}}^{\bar{x}}\!\!\int_{\underaccent{\bar}{x}}^{\bar{x}}
\varphi_1(x)\,\varphi_2(y)\,\Lambda_{x,y}\,dx\,dy,
\end{equation}
under the usual absolute-integrability condition ensuring that the double integral is well-defined.
\end{corollary}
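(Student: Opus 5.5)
The plan is to reduce the corollary to the covariance representation of the directional derivative of the expected pricing kernel (Theorem~\ref{thm: informed trader 1}, Step~4, and the identity~\eqref{eqn: price impact of v}). Then I will rewrite the resulting covariance as a double integral against the pairwise kernel $\Lambda_{x,y}$ of Corollary~\ref{cor: def of price impact}.

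First, the price of a derivative is its state-price pairing: $\overline P(\varphi_1,W;W^*)=\int \varphi_1(x)\,\overline P(x,W;W^*)\,dx$. With this convention, the Gâteaux derivative in direction $\varphi_2$ is
\[
\Lambda_{\varphi_1,\varphi_2}=\int\varphi_1(x)\,g_{\varphi_2}(x)\,dx ,
\]
where $g_{\varphi_2}(x)$ is the directional derivative of $\overline P(x,\,\cdot\,;W^*)$ at $W^*$. Exchanging $d/d\varepsilon$ with the $dx$-integral is justified by dominated convergence under the stated absolute integrability. By Step~4 of the proof of Theorem~\ref{thm: informed trader 1},
\[
g_{\varphi_2}(x)=\mathbb E\bigl[\Cov(\eta(x,\cdot),\Pi_{insider}(\varphi_2,\cdot;W^*)\mid\omega)\bigr].
\]
This formula is valid for Borel $\varphi_2$ and not only Hölder directions. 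The reason is that the pushed-forward representation~\eqref{eqn: pushed forward AD bar} involves $W$ only through the Lebesgue pairing $\int W^*(x',u)\,W(x')\,\sigma^{-2}(x')\,dx'$, so the Hölder restriction plays no role in the derivative computation.

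Second, I expand $\Pi_{insider}(\varphi_2,u;W^*)=\int \varphi_2(y)\,W^*(y,u)\,\sigma^{-2}(y)\,dy$. Bilinearity of the conditional covariance, combined with Fubini in $(y,u)$ under the posterior, gives
\[
\Cov\bigl(\eta(x,\cdot),\Pi_{insider}(\varphi_2,\cdot)\mid\omega\bigr)=\int\varphi_2(y)\,\sigma^{-2}(y)\,\Cov\bigl(\eta(x,\cdot),W^*(y,\cdot)\mid\omega\bigr)\,dy .
\]
Taking the outer expectation, a second Fubini application moves $\mathbb E$ inside the $dy$-integral. By \eqref{eqn: lambda x y}, the integrand becomes $\varphi_2(y)\Lambda_{x,y}$. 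Integrating against $\varphi_1(x)\,dx$ then yields \eqref{eqn: derivatives cross price impact}.

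The main obstacle is justifying these interchanges. There are three of them: derivative with integral, covariance with the $dy$-integral, and $\mathbb E$ with $dy$. The covariance is a difference of two posterior integrals, and each term must be dominated separately. I would bound
\[
|\varphi_1(x)\varphi_2(y)|\,\sigma^{-2}(y)\,\mathbb E\bigl[\textstyle\int|\eta(x,u)W^*(y,u)|\,\pi_1^*(du\mid\omega)\bigr]
\]
and its product-of-means analogue. Both would be controlled using $\inf\sigma^2>0$ (Assumption~\ref{assumption: MM's posterior}(iv)), the uniform continuity of $W^*$ on $S$, and the absolute-integrability hypothesis, which I read as exactly the integrability of this dominating function. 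The Gaussian moments of the exponential likelihood are finite, so the outer expectation is also finite.
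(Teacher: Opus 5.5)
Your proposal is correct and follows essentially the same route as the paper. Both arguments write $\overline P(\varphi_1,\cdot\,)$ as the $dx$-pairing of $\varphi_1$ with the Arrow--Debreu kernel, differentiate under the integral, represent the $x$-wise directional derivative in direction $\varphi_2$ as $\int \Lambda_{x,y}\varphi_2(y)\,dy$, and finish with Fubini--Tonelli; the only difference is that the paper simply asserts this kernel representation from \eqref{eqn: lambda x y}, whereas you derive it from the covariance formula in Step~4 of the proof of Theorem~\ref{thm: informed trader 1} by expanding $\Pi_{insider}(\varphi_2,\cdot\,;W^*)$ and moving the covariance and outer expectation inside the $dy$-integral, which fills in that step rather than changing the argument.
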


\begin{proof}
By linearity of $\varphi\mapsto \overline P(\varphi,W;\widetilde W)$ in the payoff weight,
\[
\overline P(\varphi_1,W^*+\varepsilon\varphi_2;W^*)
=
\int_{\underaccent{\bar}{x}}^{\bar{x}}
\varphi_1(x)\,\overline P(x,W^*+\varepsilon\varphi_2;W^*)\,dx.
\]
Under the assumed integrability, we may differentiate under the integral sign, yielding
\[
\Lambda_{\varphi_1,\varphi_2}
=
\int_{\underaccent{\bar}{x}}^{\bar{x}}
\varphi_1(x)\,
\left.\frac{d}{d\varepsilon}\overline P(x,W^*+\varepsilon\varphi_2;W^*)\right|_{\varepsilon=0}\,dx.
\]
By the characterization \eqref{eqn: lambda x y} of the Arrow--Debreu cross-impact kernel $\Lambda_{x,y}$, for each fixed $x$,
\[
\left.\frac{d}{d\varepsilon}\overline P(x,W^*+\varepsilon\varphi_2;W^*)\right|_{\varepsilon=0}
=
\int_{\underaccent{\bar}{x}}^{\bar{x}} \Lambda_{x,y}\,\varphi_2(y)\,dy.
\]
Substituting this identity and applying Fubini--Tonelli gives \eqref{eqn: derivatives cross price impact}.
\end{proof}

\paragraph{Empirical Implications}
Corollary~\ref{cor: price impact between derivatives}, together with the informed-demand characterization~\eqref{eqn: informed demand general S}, yields sharp, testable predictions linking option order flow to cross-strike price discovery and subsequent return moments.
As a simple illustration, in the volatility specification of Example~\ref{example: vol straddle} one can show that the cross-price impact between the put--call pair in a straddle is larger conditional on the high-volatility signal than on the low-volatility signal.
Thus, put--call impact within a straddle should forecast future volatility.
More broadly, our equilibrium characterization implies forecastability of higher return moments---across strikes and horizons---that speaks directly to the large empirical literature on the information content of option trades (e.g., \cite{Chakravarty2004,pan2006information,CremersWeinbaum2010,Muravyev2016}), 
demand pressure and the implied-volatility surface (e.g., \cite{BollenWhaley2004}), and option-implied moments forecasting realized volatility and tail risk (e.g., \cite{Jiang2005,Ni2008,goyal2009cross,BollerslevTodorov2011}).
These implications also connect to the ongoing question about the cross-section of option returns (see, e.g., \cite{bali2013does,cao2013cross,christoffersen2018illiquidity,zhan2022option}).
We develop these implications in detail in the companion paper \cite{kellertseng}.

\begin{figure}[htbp!]

\centering 
\subfloat[$I = 2$]{
\scalebox{1}[0.7]{
\begin{tikzpicture}
% Created by tikzDevice version 0.12.3.1 on 2023-07-25 22:00:16
% !TEX encoding = UTF-8 Unicode
\begin{tikzpicture}[x=1pt,y=1pt]
\definecolor{fillColor}{RGB}{255,255,255}
\path[use as bounding box,fill=fillColor,fill opacity=0.00] (0,0) rectangle (199.47, 93.95);
\begin{scope}
\path[clip] (  0.00,  0.00) rectangle (199.47, 93.95);
\definecolor{drawColor}{RGB}{0,0,0}
\definecolor{fillColor}{RGB}{190,190,190}

\path[draw=drawColor,line width= 0.4pt,line join=round,line cap=round,fill=fillColor] ( 24.79, 21.72) rectangle (100.56, 76.66);

\path[draw=drawColor,line width= 0.4pt,line join=round,line cap=round,fill=fillColor] (115.71, 21.72) rectangle (191.48, 37.69);
\end{scope}
\begin{scope}
\path[clip] (  0.00,  0.00) rectangle (199.47, 93.95);
\definecolor{drawColor}{RGB}{0,0,0}

\node[text=drawColor,anchor=base,inner sep=0pt, outer sep=0pt, scale=  0.70] at ( 62.67,  8.52) {1};

\node[text=drawColor,anchor=base,inner sep=0pt, outer sep=0pt, scale=  0.70] at (153.59,  8.52) {2};
\end{scope}
\begin{scope}
\path[clip] (  0.00,  0.00) rectangle (199.47, 93.95);
\definecolor{drawColor}{RGB}{0,0,0}

\path[draw=drawColor,line width= 0.4pt,line join=round,line cap=round] ( 18.12, 21.72) -- ( 18.12, 92.63);

\path[draw=drawColor,line width= 0.4pt,line join=round,line cap=round] ( 18.12, 21.72) -- ( 12.12, 21.72);

\path[draw=drawColor,line width= 0.4pt,line join=round,line cap=round] ( 18.12, 35.90) -- ( 12.12, 35.90);

\path[draw=drawColor,line width= 0.4pt,line join=round,line cap=round] ( 18.12, 50.08) -- ( 12.12, 50.08);

\path[draw=drawColor,line width= 0.4pt,line join=round,line cap=round] ( 18.12, 64.27) -- ( 12.12, 64.27);

\path[draw=drawColor,line width= 0.4pt,line join=round,line cap=round] ( 18.12, 78.45) -- ( 12.12, 78.45);

\path[draw=drawColor,line width= 0.4pt,line join=round,line cap=round] ( 18.12, 92.63) -- ( 12.12, 92.63);

\node[text=drawColor,rotate= 90.00,anchor=base,inner sep=0pt, outer sep=0pt, scale=  0.70] at ( 12.12, 21.72) {0.0};

\node[text=drawColor,rotate= 90.00,anchor=base,inner sep=0pt, outer sep=0pt, scale=  0.70] at ( 12.12, 50.08) {0.4};

\node[text=drawColor,rotate= 90.00,anchor=base,inner sep=0pt, outer sep=0pt, scale=  0.70] at ( 12.12, 78.45) {0.8};
\end{scope}
\end{tikzpicture} \label{fig: MM post, I = 2}
\end{tikzpicture}
}
}
\hfil
\subfloat[$I = 4$]{
\scalebox{1}[0.7]{
\begin{tikzpicture}
% Created by tikzDevice version 0.12.3.1 on 2023-07-25 22:06:05
% !TEX encoding = UTF-8 Unicode
\begin{tikzpicture}[x=1pt,y=1pt]
\definecolor{fillColor}{RGB}{255,255,255}
\path[use as bounding box,fill=fillColor,fill opacity=0.00] (0,0) rectangle (199.47, 93.95);
\begin{scope}
\path[clip] (  0.00,  0.00) rectangle (199.47, 93.95);
\definecolor{drawColor}{RGB}{0,0,0}
\definecolor{fillColor}{RGB}{190,190,190}

\path[draw=drawColor,line width= 0.4pt,line join=round,line cap=round,fill=fillColor] ( 24.79, 21.72) rectangle ( 61.02, 65.83);

\path[draw=drawColor,line width= 0.4pt,line join=round,line cap=round,fill=fillColor] ( 68.27, 21.72) rectangle (104.51, 30.64);

\path[draw=drawColor,line width= 0.4pt,line join=round,line cap=round,fill=fillColor] (111.76, 21.72) rectangle (147.99, 30.66);

\path[draw=drawColor,line width= 0.4pt,line join=round,line cap=round,fill=fillColor] (155.24, 21.72) rectangle (191.48, 30.66);
\end{scope}
\begin{scope}
\path[clip] (  0.00,  0.00) rectangle (199.47, 93.95);
\definecolor{drawColor}{RGB}{0,0,0}

\node[text=drawColor,anchor=base,inner sep=0pt, outer sep=0pt, scale=  0.70] at ( 42.91,  8.52) {1};

\node[text=drawColor,anchor=base,inner sep=0pt, outer sep=0pt, scale=  0.70] at ( 86.39,  8.52) {2};

\node[text=drawColor,anchor=base,inner sep=0pt, outer sep=0pt, scale=  0.70] at (129.87,  8.52) {3};

\node[text=drawColor,anchor=base,inner sep=0pt, outer sep=0pt, scale=  0.70] at (173.36,  8.52) {4};
\end{scope}
\begin{scope}
\path[clip] (  0.00,  0.00) rectangle (199.47, 93.95);
\definecolor{drawColor}{RGB}{0,0,0}

\path[draw=drawColor,line width= 0.4pt,line join=round,line cap=round] ( 18.12, 21.72) -- ( 18.12, 92.63);

\path[draw=drawColor,line width= 0.4pt,line join=round,line cap=round] ( 18.12, 21.72) -- ( 12.12, 21.72);

\path[draw=drawColor,line width= 0.4pt,line join=round,line cap=round] ( 18.12, 35.90) -- ( 12.12, 35.90);

\path[draw=drawColor,line width= 0.4pt,line join=round,line cap=round] ( 18.12, 50.08) -- ( 12.12, 50.08);

\path[draw=drawColor,line width= 0.4pt,line join=round,line cap=round] ( 18.12, 64.27) -- ( 12.12, 64.27);

\path[draw=drawColor,line width= 0.4pt,line join=round,line cap=round] ( 18.12, 78.45) -- ( 12.12, 78.45);

\path[draw=drawColor,line width= 0.4pt,line join=round,line cap=round] ( 18.12, 92.63) -- ( 12.12, 92.63);

\node[text=drawColor,rotate= 90.00,anchor=base,inner sep=0pt, outer sep=0pt, scale=  0.70] at ( 12.12, 21.72) {0.0};

\node[text=drawColor,rotate= 90.00,anchor=base,inner sep=0pt, outer sep=0pt, scale=  0.70] at ( 12.12, 50.08) {0.4};

\node[text=drawColor,rotate= 90.00,anchor=base,inner sep=0pt, outer sep=0pt, scale=  0.70] at ( 12.12, 78.45) {0.8};
\end{scope}
\end{tikzpicture} \label{fig: MM post, I = 4}
\end{tikzpicture}
}
}

\subfloat[$I = 6$]{
\scalebox{1}[0.7]{
\begin{tikzpicture}
% Created by tikzDevice version 0.12.3.1 on 2023-07-25 23:38:57
% !TEX encoding = UTF-8 Unicode
\begin{tikzpicture}[x=1pt,y=1pt]
\definecolor{fillColor}{RGB}{255,255,255}
\path[use as bounding box,fill=fillColor,fill opacity=0.00] (0,0) rectangle (199.47, 93.95);
\begin{scope}
\path[clip] (  0.00,  0.00) rectangle (199.47, 93.95);
\definecolor{drawColor}{RGB}{0,0,0}
\definecolor{fillColor}{RGB}{190,190,190}

\path[draw=drawColor,line width= 0.4pt,line join=round,line cap=round,fill=fillColor] ( 24.79, 21.72) rectangle ( 48.60, 61.33);

\path[draw=drawColor,line width= 0.4pt,line join=round,line cap=round,fill=fillColor] ( 53.36, 21.72) rectangle ( 77.18, 28.01);

\path[draw=drawColor,line width= 0.4pt,line join=round,line cap=round,fill=fillColor] ( 81.94, 21.72) rectangle (105.75, 27.93);

\path[draw=drawColor,line width= 0.4pt,line join=round,line cap=round,fill=fillColor] (110.51, 21.72) rectangle (134.33, 28.03);

\path[draw=drawColor,line width= 0.4pt,line join=round,line cap=round,fill=fillColor] (139.09, 21.72) rectangle (162.90, 27.92);

\path[draw=drawColor,line width= 0.4pt,line join=round,line cap=round,fill=fillColor] (167.66, 21.72) rectangle (191.48, 28.01);
\end{scope}
\begin{scope}
\path[clip] (  0.00,  0.00) rectangle (199.47, 93.95);
\definecolor{drawColor}{RGB}{0,0,0}

\node[text=drawColor,anchor=base,inner sep=0pt, outer sep=0pt, scale=  0.70] at ( 36.69,  8.52) {1};

\node[text=drawColor,anchor=base,inner sep=0pt, outer sep=0pt, scale=  0.70] at ( 65.27,  8.52) {2};

\node[text=drawColor,anchor=base,inner sep=0pt, outer sep=0pt, scale=  0.70] at ( 93.84,  8.52) {3};

\node[text=drawColor,anchor=base,inner sep=0pt, outer sep=0pt, scale=  0.70] at (122.42,  8.52) {4};

\node[text=drawColor,anchor=base,inner sep=0pt, outer sep=0pt, scale=  0.70] at (151.00,  8.52) {5};

\node[text=drawColor,anchor=base,inner sep=0pt, outer sep=0pt, scale=  0.70] at (179.57,  8.52) {6};
\end{scope}
\begin{scope}
\path[clip] (  0.00,  0.00) rectangle (199.47, 93.95);
\definecolor{drawColor}{RGB}{0,0,0}

\path[draw=drawColor,line width= 0.4pt,line join=round,line cap=round] ( 18.12, 21.72) -- ( 18.12, 92.63);

\path[draw=drawColor,line width= 0.4pt,line join=round,line cap=round] ( 18.12, 21.72) -- ( 12.12, 21.72);

\path[draw=drawColor,line width= 0.4pt,line join=round,line cap=round] ( 18.12, 35.90) -- ( 12.12, 35.90);

\path[draw=drawColor,line width= 0.4pt,line join=round,line cap=round] ( 18.12, 50.08) -- ( 12.12, 50.08);

\path[draw=drawColor,line width= 0.4pt,line join=round,line cap=round] ( 18.12, 64.27) -- ( 12.12, 64.27);

\path[draw=drawColor,line width= 0.4pt,line join=round,line cap=round] ( 18.12, 78.45) -- ( 12.12, 78.45);

\path[draw=drawColor,line width= 0.4pt,line join=round,line cap=round] ( 18.12, 92.63) -- ( 12.12, 92.63);

\node[text=drawColor,rotate= 90.00,anchor=base,inner sep=0pt, outer sep=0pt, scale=  0.70] at ( 12.12, 21.72) {0.0};

\node[text=drawColor,rotate= 90.00,anchor=base,inner sep=0pt, outer sep=0pt, scale=  0.70] at ( 12.12, 50.08) {0.4};

\node[text=drawColor,rotate= 90.00,anchor=base,inner sep=0pt, outer sep=0pt, scale=  0.70] at ( 12.12, 78.45) {0.8};
\end{scope}
\end{tikzpicture} \label{fig: MM post, I = 6}
\end{tikzpicture}
}
}
\hfil
\subfloat[$I = 8$]{
\scalebox{1}[0.7]{
\begin{tikzpicture}
% Created by tikzDevice version 0.12.3.1 on 2023-07-25 23:45:41
% !TEX encoding = UTF-8 Unicode
\begin{tikzpicture}[x=1pt,y=1pt]
\definecolor{fillColor}{RGB}{255,255,255}
\path[use as bounding box,fill=fillColor,fill opacity=0.00] (0,0) rectangle (199.47, 93.95);
\begin{scope}
\path[clip] (  0.00,  0.00) rectangle (199.47, 93.95);
\definecolor{drawColor}{RGB}{0,0,0}
\definecolor{fillColor}{RGB}{190,190,190}

\path[draw=drawColor,line width= 0.4pt,line join=round,line cap=round,fill=fillColor] ( 24.79, 21.72) rectangle ( 42.52, 58.61);

\path[draw=drawColor,line width= 0.4pt,line join=round,line cap=round,fill=fillColor] ( 46.07, 21.72) rectangle ( 63.80, 26.57);

\path[draw=drawColor,line width= 0.4pt,line join=round,line cap=round,fill=fillColor] ( 67.35, 21.72) rectangle ( 85.08, 26.60);

\path[draw=drawColor,line width= 0.4pt,line join=round,line cap=round,fill=fillColor] ( 88.63, 21.72) rectangle (106.36, 26.60);

\path[draw=drawColor,line width= 0.4pt,line join=round,line cap=round,fill=fillColor] (109.91, 21.72) rectangle (127.64, 26.57);

\path[draw=drawColor,line width= 0.4pt,line join=round,line cap=round,fill=fillColor] (131.19, 21.72) rectangle (148.92, 26.57);

\path[draw=drawColor,line width= 0.4pt,line join=round,line cap=round,fill=fillColor] (152.47, 21.72) rectangle (170.20, 26.59);

\path[draw=drawColor,line width= 0.4pt,line join=round,line cap=round,fill=fillColor] (173.74, 21.72) rectangle (191.48, 26.56);
\end{scope}
\begin{scope}
\path[clip] (  0.00,  0.00) rectangle (199.47, 93.95);
\definecolor{drawColor}{RGB}{0,0,0}

\node[text=drawColor,anchor=base,inner sep=0pt, outer sep=0pt, scale=  0.70] at ( 33.65,  8.52) {1};

\node[text=drawColor,anchor=base,inner sep=0pt, outer sep=0pt, scale=  0.70] at ( 54.93,  8.52) {2};

\node[text=drawColor,anchor=base,inner sep=0pt, outer sep=0pt, scale=  0.70] at ( 76.21,  8.52) {3};

\node[text=drawColor,anchor=base,inner sep=0pt, outer sep=0pt, scale=  0.70] at ( 97.49,  8.52) {4};

\node[text=drawColor,anchor=base,inner sep=0pt, outer sep=0pt, scale=  0.70] at (118.77,  8.52) {5};

\node[text=drawColor,anchor=base,inner sep=0pt, outer sep=0pt, scale=  0.70] at (140.05,  8.52) {6};

\node[text=drawColor,anchor=base,inner sep=0pt, outer sep=0pt, scale=  0.70] at (161.33,  8.52) {7};

\node[text=drawColor,anchor=base,inner sep=0pt, outer sep=0pt, scale=  0.70] at (182.61,  8.52) {8};
\end{scope}
\begin{scope}
\path[clip] (  0.00,  0.00) rectangle (199.47, 93.95);
\definecolor{drawColor}{RGB}{0,0,0}

\path[draw=drawColor,line width= 0.4pt,line join=round,line cap=round] ( 18.12, 21.72) -- ( 18.12, 92.63);

\path[draw=drawColor,line width= 0.4pt,line join=round,line cap=round] ( 18.12, 21.72) -- ( 12.12, 21.72);

\path[draw=drawColor,line width= 0.4pt,line join=round,line cap=round] ( 18.12, 35.90) -- ( 12.12, 35.90);

\path[draw=drawColor,line width= 0.4pt,line join=round,line cap=round] ( 18.12, 50.08) -- ( 12.12, 50.08);

\path[draw=drawColor,line width= 0.4pt,line join=round,line cap=round] ( 18.12, 64.27) -- ( 12.12, 64.27);

\path[draw=drawColor,line width= 0.4pt,line join=round,line cap=round] ( 18.12, 78.45) -- ( 12.12, 78.45);

\path[draw=drawColor,line width= 0.4pt,line join=round,line cap=round] ( 18.12, 92.63) -- ( 12.12, 92.63);

\node[text=drawColor,rotate= 90.00,anchor=base,inner sep=0pt, outer sep=0pt, scale=  0.70] at ( 12.12, 21.72) {0.0};

\node[text=drawColor,rotate= 90.00,anchor=base,inner sep=0pt, outer sep=0pt, scale=  0.70] at ( 12.12, 50.08) {0.4};

\node[text=drawColor,rotate= 90.00,anchor=base,inner sep=0pt, outer sep=0pt, scale=  0.70] at ( 12.12, 78.45) {0.8};
\end{scope}
\end{tikzpicture} \label{fig: MM post, I = 8}
\end{tikzpicture}
}
}
\caption{
{\footnotesize
{\bf Sensitivity Analysis - Market Maker's Posterior (finite-$S$ illustration)} ($x$-axis: $\{ s_1, \cdots, s_I \}$, $y$-axis: probability).
For the finite-signal case $S=\{s_1,\dots,s_I\}$ with uniform prior, these graphs show the market maker's expected posterior probabilities over $S$, conditional on the insider observing signal $s_1$. 
} 
}
\label{fig: expected MM posterior}
\end{figure}

\iffalse

\begin{figure}[h!]

\centering 
%\tikzset{every picture/.style={scale=0.8}}%
\subfloat[{Information Efficiency $\mathbb{E}[ q_i^{(i)}]$}]{
%\includegraphics[width = 7cm, height = 2.8cm]{Expected_post_prob_large_range.pdf} \label{fig: expected post prob plot}
\scalebox{1}[0.85]{
\begin{tikzpicture}
\input{info_efficiency.tex} \label{fig: expected post prob plot}
\end{tikzpicture}
}
}
\iffalse
\hfil
\subfloat[Posterior Log-likelihood Ratio of Observed vs.~Other Signals]{
%\includegraphics[width = 7cm, height = 2.8cm]{Expected_post_log_LR_total_unobserved.pdf} \label{fig: expected post prob log LR total unobs}
\scalebox{1}[0.75]{
\begin{tikzpicture}
\input{post_likelihood_ratio.tex} \label{fig: expected post prob log LR total unobs}
\end{tikzpicture}
}
}
\fi
\caption{
{\footnotesize
{\bf Comparative Statics - Information Efficiency of Prices}\\
Figure~\ref{fig: expected post prob plot}: $x$-axis - number of signals $I$, $y$-axis - {$\mathbb{E}[ q_i^{(i)}]$} (information efficiency). %Figure~\ref{fig: expected post prob log LR total unobs}: $x$-axis - number of signals $I$, $y$-axis - $\log ( \frac{\mathbb{E}[q_i^{(i)}]}{(I-1) \mathbb{E}[q_{-i}^{(i)}]})$ (posterior log-likelihood ratio of observed signal $s_i$ vs.~all other signals $s_{-i}$).\\
} 
}
\label{fig: expected post prob}
\end{figure}
\fi

\begin{figure}[htbp!]

\centering 
\scalebox{1}[0.85]{
\begin{tikzpicture}
% Created by tikzDevice version 0.12.3.1 on 2023-07-25 23:53:34
% !TEX encoding = UTF-8 Unicode
\begin{tikzpicture}[x=1pt,y=1pt]
\definecolor{fillColor}{RGB}{255,255,255}
\path[use as bounding box,fill=fillColor,fill opacity=0.00] (0,0) rectangle (199.47, 93.95);
\begin{scope}
\path[clip] ( 18.12, 21.72) rectangle (198.15, 92.63);
\definecolor{drawColor}{RGB}{0,0,0}

\path[draw=drawColor,line width= 1.2pt,line join=round,line cap=round] ( 24.79, 90.00) --
	( 28.19, 59.18) --
	( 31.59, 50.29) --
	( 34.99, 45.40) --
	( 38.39, 42.32) --
	( 41.80, 40.06) --
	( 45.20, 38.31) --
	( 48.60, 37.11) --
	( 52.00, 36.02) --
	( 55.40, 34.97) --
	( 58.81, 34.00) --
	( 62.21, 33.28) --
	( 65.61, 32.70) --
	( 69.01, 32.05) --
	( 72.41, 31.63) --
	( 75.82, 31.08) --
	( 79.22, 30.61) --
	( 82.62, 30.17) --
	( 86.02, 29.78) --
	( 89.42, 29.52) --
	( 92.82, 29.40) --
	( 96.23, 28.95) --
	( 99.63, 28.64) --
	(103.03, 28.45) --
	(106.43, 28.07) --
	(109.83, 27.76) --
	(113.24, 27.88) --
	(116.64, 27.53) --
	(120.04, 27.30) --
	(123.44, 26.94) --
	(126.84, 26.86) --
	(130.24, 26.68) --
	(133.65, 26.63) --
	(137.05, 26.29) --
	(140.45, 26.10) --
	(143.85, 26.06) --
	(147.25, 26.05) --
	(150.66, 25.81) --
	(154.06, 25.73) --
	(157.46, 25.59) --
	(160.86, 25.40) --
	(164.26, 25.24) --
	(167.66, 25.28) --
	(171.07, 25.04) --
	(174.47, 24.94) --
	(177.87, 24.79) --
	(181.27, 24.73) --
	(184.67, 24.80) --
	(188.08, 24.63) --
	(191.48, 24.35);
\end{scope}
\begin{scope}
\path[clip] (  0.00,  0.00) rectangle (199.47, 93.95);
\definecolor{drawColor}{RGB}{0,0,0}

\path[draw=drawColor,line width= 0.4pt,line join=round,line cap=round] ( 18.12, 21.72) --
	(198.15, 21.72) --
	(198.15, 92.63) --
	( 18.12, 92.63) --
	( 18.12, 21.72);
\end{scope}
\begin{scope}
\path[clip] (  0.00,  0.00) rectangle (199.47, 93.95);
\definecolor{drawColor}{RGB}{0,0,0}

\path[draw=drawColor,line width= 0.4pt,line join=round,line cap=round] ( 24.79, 21.72) -- (193.18, 21.72);

\path[draw=drawColor,line width= 0.4pt,line join=round,line cap=round] ( 24.79, 21.72) -- ( 24.79, 18.17);

\path[draw=drawColor,line width= 0.4pt,line join=round,line cap=round] ( 57.11, 21.72) -- ( 57.11, 18.17);

\path[draw=drawColor,line width= 0.4pt,line join=round,line cap=round] ( 91.12, 21.72) -- ( 91.12, 18.17);

\path[draw=drawColor,line width= 0.4pt,line join=round,line cap=round] (125.14, 21.72) -- (125.14, 18.17);

\path[draw=drawColor,line width= 0.4pt,line join=round,line cap=round] (159.16, 21.72) -- (159.16, 18.17);

\path[draw=drawColor,line width= 0.4pt,line join=round,line cap=round] (193.18, 21.72) -- (193.18, 18.17);

\node[text=drawColor,anchor=base,inner sep=0pt, outer sep=0pt, scale=  0.70] at ( 24.79,  8.52) {1};

\node[text=drawColor,anchor=base,inner sep=0pt, outer sep=0pt, scale=  0.70] at ( 57.11,  8.52) {20};

\node[text=drawColor,anchor=base,inner sep=0pt, outer sep=0pt, scale=  0.70] at ( 91.12,  8.52) {40};

\node[text=drawColor,anchor=base,inner sep=0pt, outer sep=0pt, scale=  0.70] at (125.14,  8.52) {60};

\node[text=drawColor,anchor=base,inner sep=0pt, outer sep=0pt, scale=  0.70] at (159.16,  8.52) {80};

\node[text=drawColor,anchor=base,inner sep=0pt, outer sep=0pt, scale=  0.70] at (193.18,  8.52) {100};

\path[draw=drawColor,line width= 0.4pt,line join=round,line cap=round] ( 18.12, 21.72) -- ( 18.12, 90.00);

\path[draw=drawColor,line width= 0.4pt,line join=round,line cap=round] ( 18.12, 32.25) -- ( 14.57, 32.25);

\path[draw=drawColor,line width= 0.4pt,line join=round,line cap=round] ( 18.12, 51.50) -- ( 14.57, 51.50);

\path[draw=drawColor,line width= 0.4pt,line join=round,line cap=round] ( 18.12, 70.75) -- ( 14.57, 70.75);

\path[draw=drawColor,line width= 0.4pt,line join=round,line cap=round] ( 18.12, 90.00) -- ( 14.57, 90.00);

\node[text=drawColor,anchor=base east,inner sep=0pt, outer sep=0pt, scale=  0.70] at ( 14.52, 29.84) {0.4};

\node[text=drawColor,anchor=base east,inner sep=0pt, outer sep=0pt, scale=  0.70] at ( 14.52, 49.09) {0.6};

\node[text=drawColor,anchor=base east,inner sep=0pt, outer sep=0pt, scale=  0.70] at ( 14.52, 68.34) {0.8};

\node[text=drawColor,anchor=base east,inner sep=0pt, outer sep=0pt, scale=  0.70] at ( 14.52, 87.59) {1};
\end{scope}
\end{tikzpicture} \label{fig: expected post prob plot}
\end{tikzpicture}
}

\caption{
{\footnotesize
\textbf{Sensitivity Analysis: Information Efficiency (finite-$S$ illustration).}
In the finite-signal case $S=\{s_1,\dots,s_I\}$ with uniform prior, let
$q^{(i)}=(q^{(i)}_1,\dots,q^{(i)}_I)$ denote the market maker's equilibrium posterior weights conditional on
$S=s_i$, so that $\mathrm{IE}(s_i)=\mathbb{E}[q_i^{(i)}]$.
The figure plots $\mathrm{IE}(s_i)$ as a function of $I$ (by symmetry, independent of $i$); the expectation is taken
under the equilibrium law of the posterior induced by the symmetric equilibrium with intensity
$\alpha^*=\alpha^*(I)$.
}}
\label{fig: expected post prob}
\end{figure}

\subsection{Information Efficiency of Prices}
\label{sec: AD price discovery}

The \emph{information efficiency} of market prices is a foundational question in financial economics (\cite{fama1970efficient,grossman1980impossibility}).
We now address this question in our Arrow--Debreu setting---where the question concerns the full state-contingent price schedule (pricing kernel)---by introducing a formal definition and
deriving an equilibrium characterization.

As a first step, we record the equilibrium probability law of the market maker's posterior over order flow $\omega$ by specializing Lemma~\ref{lemma: claim for MM posterior general S} to $\alpha=\alpha^*$.

\begin{proposition}
\label{prop: MM posterior general S}
Conditional on the insider observing $s\in S$, the market maker's equilibrium posterior
$\hat\pi_1^*(\,\cdot\,)\equiv \hat\pi_1(\,\cdot\,\mid \hat\omega;\Theta^*)$ (evaluated at
$\hat\omega=\theta^*(s)+\widehat X$) is a logistic-normal random probability measure:
it is absolutely continuous with respect to $\mu$, with (random) Radon--Nikodym density given by
\begin{equation}
\label{eqn: equilibrium logistic-normal posterior}
p_s(u)
=
\frac{\exp\!\left(
Z(u)+(\alpha^*)^2\left\langle Qk(\cdot,u),Qk(\cdot,s)\right\rangle_{\mathcal H_k}
-\frac{(\alpha^*)^2}{2}\left\|Qk(\cdot,u)\right\|_{\mathcal H_k}^2
\right)}
{\int_S \exp\!\left(
Z(v)+(\alpha^*)^2\left\langle Qk(\cdot,v),Qk(\cdot,s)\right\rangle_{\mathcal H_k}
-\frac{(\alpha^*)^2}{2}\left\|Qk(\cdot,v)\right\|_{\mathcal H_k}^2
\right)\mu(dv)},
\end{equation}
where $Z(\cdot)$ is the centered Gaussian process $Z(u):=\alpha^*\,\widehat X\!\bigl(Qk(\cdot,u)\bigr)$, so that
\begin{equation}\label{eqn: equilibrium Z covariance}
\Cov\!\bigl(Z(u),Z(v)\bigr)
=
(\alpha^*)^2\left\langle Qk(\cdot,u),Qk(\cdot,v)\right\rangle_{\mathcal H_k}.
\end{equation}
\end{proposition}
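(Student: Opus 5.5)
This is a direct specialization of Lemma~\ref{lemma: claim for MM posterior general S} at $\alpha=\alpha^*$, and the plan is to carry out the substitution explicitly. The first step is to invoke Theorem~\ref{thm: equilibrium of original trading game}. It guarantees that $\alpha^*>0$ exists and that the market maker's equilibrium belief is the operator $\Theta^*=\widehat\Theta_{\alpha^*}$ induced by the strategy map $u\mapsto \hat\theta(u)=\alpha^*Qk(\cdot,u)$. Conditional on $s$, the insider submits $\theta^*(s)=\alpha^*Qk(\cdot,s)$, so by Theorem~\ref{thm: canonical game} the observation is $\hat\omega=\alpha^*Qk(\cdot,s)+\widehat X$, with $\widehat X$ isonormal on $\mathcal H_k$.

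The second step is to insert this observation into the posterior formula \eqref{eqn: pi_1 finite S new}. The exponent there is $\hat\omega(\hat\theta(u))-\tfrac12\|\hat\theta(u)\|^2_{\mathcal H_k}$. Since $\hat\omega$ is not itself an element of $\mathcal H_k$, the pairing has to be understood linearly: the drift part contributes through the inner product and the noise part through the isonormal map,
\[
\hat\omega(\hat\theta(u))=\alpha^*\,\widehat X\bigl(Qk(\cdot,u)\bigr)+(\alpha^*)^2\langle Qk(\cdot,u),Qk(\cdot,s)\rangle_{\mathcal H_k}.
\]
The quadratic term is $\tfrac{(\alpha^*)^2}{2}\|Qk(\cdot,u)\|^2_{\mathcal H_k}$. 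Setting $Z(u):=\alpha^*\widehat X(Qk(\cdot,u))$ and normalizing against $\mu$ then gives the density \eqref{eqn: equilibrium logistic-normal posterior}, which is already written as a Radon--Nikodym derivative with respect to $\mu$. This shows that $\hat\pi_1^*$ is absolutely continuous with respect to $\mu$.

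The covariance \eqref{eqn: equilibrium Z covariance} follows from the isometry property of $\widehat X$: $\E[\widehat X(f)\widehat X(g)]=\langle f,g\rangle_{\mathcal H_k}$. The process $Z$ is centered and Gaussian because the family $\{\widehat X(f)\}$ is jointly centered Gaussian.

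The main obstacle is showing that the random density is well defined, which requires a jointly measurable version of $u\mapsto Z(u)$ and $0<\int_S e^{(\cdots)}\,d\mu<\infty$ almost surely. For this I would use that $u\mapsto Qk(\cdot,u)$ is measurable into the separable space $\mathcal H_k$ and that $\widehat X$ is linear and continuous in $L^2(\mathbb P)$, which yields a measurable modification of $Z$. Finiteness of the normalizer would come from Tonelli, since
\[
\E\int_S e^{Z(v)}\,\mu(dv)=\int_S e^{\frac{(\alpha^*)^2}{2}\|Qk(\cdot,v)\|^2_{\mathcal H_k}}\,\mu(dv),
\]
and the remaining exponents cancel up to bounded terms under the standing integrability assumptions implicitly used for Lemma~\ref{lemma: claim for MM posterior general S}. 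Positivity of the normalizer is immediate because the integrand is strictly positive. With these points settled, the statement is exactly Lemma~\ref{lemma: claim for MM posterior general S} evaluated at $\alpha=\alpha^*$.
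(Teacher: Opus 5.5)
Your proposal is correct and follows the paper's route. The paper obtains this proposition by specializing Lemma~\ref{lemma: claim for MM posterior general S} to $\alpha=\alpha^*$; that lemma's proof substitutes $\hat\omega=\alpha Qk(\cdot,s)+\widehat X$ into \eqref{eqn: pi_1 finite S new}, expands the Cameron--Martin terms, and reads off the covariance of $Z$ from isonormality, exactly as you do. Your well-definedness discussion goes beyond the paper, and its finiteness step does not need the unstated integrability assumption you invoke. Under the noise-only law the expected normalizer is exactly $\int_S 1\,d\mu=1$ by Tonelli. The law of $\alpha^*Qk(\cdot,s)+\widehat X$ is equivalent to that of $\widehat X$ by Cameron--Martin, since the drift lies in $\mathcal H_k$. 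So the normalizer is a.s.\ finite without any bound on $v\mapsto \langle Qk(\cdot,v),Qk(\cdot,s)\rangle_{\mathcal H_k}$.
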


%The equilibrium distribution of AD prices is an immediate corollary. We can similarly characterize the equilibrium $\omega$-by-$\omega$ pricing kernel.

\begin{corollary}[Equilibrium Arrow--Debreu Pricing Kernel]
\label{cor: equil AD prices}
Conditional on the insider observing $s \in S$, and with the market maker's equilibrium posterior
$\widehat\pi_1^*(du)=p_s(u)\,\mu(du)$ as in Proposition~\ref{prop: MM posterior general S}, the equilibrium
Arrow--Debreu pricing kernel satisfies, almost surely,
\begin{equation}
\label{eqn: equil AD price kernel general S}
P^*(\,\cdot\,,\omega\kern0.045em;s)
\;=\;
\int_S \eta(\,\cdot\,,u)\,p_s(u)\,\mu(du),
\end{equation}
so that $P^*(\,\cdot\,,\omega\kern0.045em;s)$ is a random $\mu$--mixture of the payoff densities
$\{\eta(\,\cdot\,,u)\}_{u\in S}$ with mixing measure $\widehat\pi_1^*$.  The expected equilibrium
Arrow--Debreu prices are
\begin{equation}
\label{eqn: equil expected AD price}
\overline P^*(\,\cdot\,;s)
\;:=\;
\mathbb{E}\!\left[P^*(\,\cdot\,,\omega\kern0.045em;s)\,\big|\,S=s\right]
\;=\;
\int_S \mathbb{E}\!\left[p_s(u)\,\big|\,S=s\right]\eta(\,\cdot\,,u)\,\mu(du).
\end{equation}
\end{corollary}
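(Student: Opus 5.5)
The plan is to combine the market maker's zero-profit pricing rule from Theorem~\ref{thm: Market Maker's Posterior} with the isomorphism of Theorem~\ref{thm: canonical game} and the equilibrium posterior law of Proposition~\ref{prop: MM posterior general S}. First I fix $s$ and the equilibrium belief $W^*$. By Theorem~\ref{thm: Market Maker's Posterior}, for every realized $\omega\in\Omega_\gamma$ the price is
\[
P^*(x,\omega;s)=\int_S\eta(x,u)\,\pi_1(du,\omega;W^*).
\]
By that theorem, $\pi_1$ depends on $\omega$ only through the sufficient statistic $\Pi_{mm}(\omega,\cdot\,;W^*)$. Step~2 of the proof of Theorem~\ref{thm: canonical game} identifies this statistic, in law, with the whitened observation $\hat\omega=\theta^*(s)+\widehat X$. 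The posterior~\eqref{eqn: rigorous expression for posterior} then coincides with $\hat\pi_1(\cdot\mid\hat\omega;\Theta^*)$ from~\eqref{eqn: pi_1 finite S new}. The reason is that both are the same exponential tilt of the prior, with exponent $\Pi_{mm}-\tfrac12\|W^*(\cdot,u)\|_\sigma^2$. By~\eqref{eqn: cov_pi_mm} and the whitening, this exponent equals $\hat\omega(\hat\theta(u))-\tfrac12\|\hat\theta(u)\|^2_{\mathcal H_k}$.

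Second, I invoke Proposition~\ref{prop: MM posterior general S}: this posterior is absolutely continuous with respect to $\mu$, with density $p_s$ given by~\eqref{eqn: equilibrium logistic-normal posterior}. Substituting $\pi_1(du)=p_s(u)\mu(du)$ into the pricing integral gives~\eqref{eqn: equil AD price kernel general S}. This identity holds almost surely: it holds for every $\omega$ as a formula in the sufficient statistic, and the identification with the Gaussian representation via $Z$ holds off a null set, by Lemma~\ref{lemma: lemma 1 for example}(iii). Since $p_s\ge0$ integrates to one against $\mu$, the kernel $P^*$ is a random $\mu$-mixture of the $\eta(\cdot,u)$ with mixing measure $\widehat\pi_1^*$.

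Third, for~\eqref{eqn: equil expected AD price} I take the conditional expectation given $S=s$, with $\omega$ distributed under $\mathbb P_{W^*(\cdot,s)}$ (equivalently, $Z$ distributed as in~\eqref{eqn: equilibrium Z covariance}). I then exchange this expectation with the $\mu$-integral by Fubini--Tonelli. This requires joint measurability of $(u,\omega)\mapsto\eta(x,u)p_s(u)$, which follows from Lemma~\ref{lemma: lemma 1 for example}(ii) and Assumption~\ref{assumption: MM's posterior}(iii). It also requires an integrable dominating bound, which I treat as the main obstacle.

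The pointwise bound $p_s(u)\le \sup_v p_s(v)$ is not directly useful. Instead, I would bound $\mathbb E\int_S|\eta(x,u)|p_s(u)\mu(du)$ by observing that this quantity is the expectation of a posterior mean. By the tower property under the joint law $\mathbb Q$ of the proof of Theorem~\ref{thm: Market Maker's Posterior}, its $\mu$-average over $s$ equals $\int_S|\eta(x,u)|\mu(du)<\infty$. This gives finiteness for $\mu$-a.e.\ $s$. Alternatively, for bounded $\eta$, the fact that $p_s$ is a probability density gives the domination trivially.

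With domination in hand, Fubini yields $\mathbb E[P^*(x,\omega;s)\mid S=s]=\int_S\mathbb E[p_s(u)\mid S=s]\,\eta(x,u)\mu(du)$, which completes the proof.
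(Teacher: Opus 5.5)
Your proposal is correct and follows the paper's argument: the zero-profit rule makes the price the posterior mean, you substitute $\widehat\pi_1^*(du)=p_s(u)\,\mu(du)$, and you exchange $\mathbb{E}[\,\cdot\mid S=s]$ with the $\mu$-integral, adding explicit bookkeeping that identifies the original-game posterior with the canonical one. The domination step you flag as the main obstacle is not needed: $\eta(x,\cdot)\ge 0$ (it is a payoff density) and $p_s\ge 0$, so Tonelli's theorem (which is what the paper's ``Fubini--Tonelli'' amounts to here) gives the interchange for every $s$, as an identity in $[0,\infty]$, with no restriction to $\mu$-a.e.\ $s$ or to bounded $\eta$.
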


\begin{proof}
By the market maker's competitive
(zero-profit) pricing rule, for each Arrow--Debreu security $x$ the equilibrium price equals the posterior mean
of the corresponding payoff density:
\[
P^*(x,\omega\kern0.045em;s)
=\int_S \eta(x,u)\,\widehat\pi_1^*(du)
\qquad\text{for $\mathbb{P}(\,\cdot\,|S=s)$-a.e.\ $\omega$.}
\]
Using $\widehat\pi_1^*(du)=p_s(u)\,\mu(du)$ yields \eqref{eqn: equil AD price kernel general S}.  Taking
conditional expectations given $S=s$ and applying Fubini--Tonelli gives \eqref{eqn: equil expected AD price}.
\end{proof}

\paragraph{Information Efficiency}
By Corollary~\ref{cor: equil AD prices}, the conditional mean pricing kernel admits the mixture representation
\[
\overline P^*(\,\cdot\,;s)
\;=\;
\int_S \eta(\,\cdot\,,u)\,\overline\pi_s^*(du),
\]
where
\[
\overline\pi_s^*(du)
\;:=\;
\mathbb{E}\!\left[\widehat\pi_1^*(du\,;\,s)\,\big|\,S=s\right]
\;=\;
\mathbb{E}\!\left[p_s(u)\,\big|\,S=s\right]\mu(du).
\]
The \textbf{information efficiency} of Arrow--Debreu prices conditional on signal $s$ is then defined as the weight that
$\overline\pi_s^*$ assigns to the true payoff distribution $\eta(\,\cdot\,,s)$:
\[
\mathrm{IE}(s)
:=
\begin{cases}
\overline\pi_s^*(\{s\}), & \text{if $\overline\pi_s^*$ has an atom at $s$,}\\[6pt]
\dfrac{d\overline\pi_s^*}{d\mu}(s)
=\mathbb{E}\!\left[p_s(s)\,\big|\,S=s\right], & \text{otherwise.}
\end{cases}
\]

Economically, $\mathrm{IE}(s)$ is the market maker's \emph{expected posterior weight} on the correct
payoff distribution $\eta(\,\cdot\,,s)$ in $\overline P^*(\,\cdot\,;s)$.  It therefore measures how much of the
insider's information is incorporated into Arrow--Debreu prices.
We consider the finite, uniform case as an illustration.

\paragraph{Finite-Uniform Case}
Suppose $S=\{s_1,\dots,s_I\}$ and $\mu$ is uniform.  Conditional on $S=s_i$, the equilibrium posterior
\eqref{eqn: equilibrium logistic-normal posterior} reduces to a logistic-normal (softmax) probability vector
$q^{(i)}=(q^{(i)}_1,\dots,q^{(i)}_I)\in\Delta^{I-1}$ of the form
\begin{equation}
\label{eqn: canonical posterior}
q^{(i)}_j
\;\propto\;
\exp \bigl(Z_j+(\alpha^*)^2\mathbf{1}_{\{j=i\}}\bigr),
\qquad
(Z_1,\dots,Z_I)^T \stackrel{d}{=} \mathcal{N}\!\bigl(0,(\alpha^*)^2{\bf Q}\bigr),
\end{equation}
so that $\sum_{j=1}^I q^{(i)}_j=1$.
The expected equilibrium Arrow--Debreu prices are
\begin{equation}
\label{eqn: equil expected AD price finite S}
\overline P^*(\,\cdot\,;s_i)
=\sum_{j=1}^I \mathbb{E}\!\left[q^{(i)}_j\right]\eta(\,\cdot\,,s_j).
\end{equation}
Thus, the information efficiency index conditional on $s_i$ is exactly the expected posterior weight on the
true payoff density:
\[
\mathrm{IE}(s_i)=\mathbb{E}\!\left[q^{(i)}_i\right].
\]
Figure~\ref{fig: expected MM posterior} plots the conditional mean posterior weights
$\mathbb{E}[q^{(1)}_j]$, $1\le j\le I$, conditional on $s_1$, as the number of signals $I$ increases.  By symmetry,
$\mathrm{IE}(s_1)=\mathbb{E}[q^{(1)}_1]=\mathbb{E}[q^{(i)}_i]$, and the figure shows that information
efficiency declines as $I$ grows.  Figure~\ref{fig: expected post prob} further shows that
$\mathbb{E}[q^{(i)}_i]$ is decreasing and convex in $I$.  Economically, enlarging the signal set makes the
prior more dispersed, so order flow is less diagnostic about the true signal, reducing the
posterior mass assigned to the truth.  At the same time, equilibrium adjusts through the endogenous
intensity $\alpha^*$, which increases the information content of order flow and partially offsets the
loss of identification.

\paragraph{Invariance}
Because the canonical game determines the posterior law, the induced Arrow--Debreu prices, and hence $\mathrm{IE}(s)$,
without reference to $\eta(\,\cdot\,,\cdot)$ or $\sigma(\cdot)$, the following corollary is immediate.

\begin{corollary}
\label{cor: invariance of info efficiency}
For every $s\in S$, $\mathrm{IE}(s)$ is independent of the payoff family $\eta(\,\cdot\,,\cdot)$ and of the
noise-trading intensity $\sigma(\cdot)$.
\end{corollary}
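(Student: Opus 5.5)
The plan is to trace $\mathrm{IE}(s)$ back through the objects that define it and check that each depends on the primitives $(\eta,\sigma)$ only through the canonical game of Theorem~\ref{thm: canonical game}. By definition, $\mathrm{IE}(s)$ is computed from $\overline\pi_s^*(du)=\mathbb{E}[p_s(u)\mid S=s]\,\mu(du)$. By Proposition~\ref{prop: MM posterior general S}, the random density $p_s$ is determined by three ingredients:
\begin{itemize}
\item the prior $\mu$;
\item the scalar $\alpha^*$;
\item the centered Gaussian process $Z$ and the deterministic drift and penalty terms. These are all built from the Gram structure $\langle Qk(\cdot,u),Qk(\cdot,v)\rangle_{\mathcal H_k}$, where $u,v\in S$.
\end{itemize}
It therefore suffices to show that each of these is intrinsic to $(S,\mu)$ and to the canonical game, rather than to the particular $\eta$ or $\sigma$.

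\emph{Step 1: the Gram structure.} By the reproducing property \eqref{eqn: reproducing}, $\langle k(\cdot,u),k(\cdot,v)\rangle_{\mathcal H_k}$ equals evaluation, and $\bar k$ is the representer of integration against $\mu$. In the canonical game, the inner products of the centered sections $Qk(\cdot,u)$ are thus expressed through evaluation at $u,v$ and prior averages. The canonical game is stated without reference to $\eta$ or $\sigma$: pseudo-security $u$ pays $\mathbf 1_{\{u=s\}}$, the noise is isonormal, and the prior is $\mu$. So the law of $Z$ and the drift terms in \eqref{eqn: equilibrium logistic-normal posterior} are determined by the canonical data alone. The finite–uniform case makes this concrete: there \eqref{eqn: canonical posterior} involves only ${\bf Q}$ and $\alpha^*$.

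\emph{Step 2: $\alpha^*$.} By Theorem~\ref{thm: scalar_equilibrium_equation}, $\alpha^*$ is a root of $\Phi$, and by \eqref{eqn:Phi-definition} $\Phi$ is built from $\mathbb{E}[\kappa_p]$, $\mathbb{E}[\mathcal C_p]$ and $\widehat\Theta_\alpha$. Each of these is a functional of the same logistic-normal posterior (Lemma~\ref{lemma: claim for MM posterior general S}) and of the Gram structure from Step 1. Hence $\Phi$, and with it the selected equilibrium root $\alpha^*$, is the same for every pair $(\eta,\sigma)$ that induces the same canonical game.

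\emph{Step 3: conclusion.} Combining Steps 1 and 2, the conditional law of $p_s$ given $S=s$ is invariant. Consequently $\overline\pi_s^*$, and therefore both branches of the definition of $\mathrm{IE}(s)$ (the atom weight and the $\mu$-density at $s$), are independent of $\eta$ and $\sigma$.

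The main obstacle is Step 1. Literally, $k$ is defined through $\eta$ and $\sigma$ in \eqref{eqn: info intensity kernel}, so the invariance claim must be understood via the isomorphism of Theorem~\ref{thm: canonical game}, which models order flow on pseudo-securities with payoff equal to evaluation. I would argue this using that isomorphism: the whitening map ${\bf L}$ carries the original sufficient statistic onto an isonormal observation whose pairing with the belief is evaluation, so the posterior weights on $S$ do not depend on which $(\eta,\sigma)$ produced ${\bf L}$. If this identification requires the Gram matrix itself to be canonical, as in the finite–uniform case where it reduces to ${\bf Q}$, I would state the corollary at that level of generality. A secondary point is that, when $\Phi$ has multiple roots, the same root $\alpha^*$ must be selected across all primitive specifications; this is automatic once $\Phi$ itself is shown to be invariant.
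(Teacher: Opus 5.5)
Your route is the paper's own. The paper justifies the corollary in one sentence: the canonical game fixes the posterior law, and hence $\mathrm{IE}(s)$, without reference to $\eta$ or $\sigma$. However, Step~1 is false as written, and the obstacle you flag at the end is a genuine gap, not a matter of interpretation.

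The reproducing property does not remove $k$ from the Gram structure; it is exactly what puts it there:
\[
\langle Qk(\cdot,u),Qk(\cdot,v)\rangle_{\mathcal H_k}
= k(u,v)-\frac{\bar k(u)\,\bar k(v)}{\|\bar k\|_{\mathcal H_k}^2},
\qquad
\bar k(u)=\int_S k(u,t)\,\mu(dt),\quad
\|\bar k\|_{\mathcal H_k}^2=\int_S\!\int_S k(t,t')\,\mu(dt)\,\mu(dt'),
\]
with $k$ given by \eqref{eqn: info intensity kernel}. The canonical game is free of $(\eta,\sigma)$ only nominally. The order space $\mathcal H_k$ and the inner product defining the isonormal noise $\widehat X$ are built from $k$. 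So are the covariance of $Z$ and the drift and penalty terms in \eqref{eqn: equilibrium logistic-normal posterior}. Step~2 therefore yields invariance only among primitives with the same $k$, which is not the corollary. Nor does \eqref{eqn: canonical posterior} confirm Step~1: identifying $k(\cdot,s_i)$ with the orthonormal basis $e_i$ silently assumes $K:=(k(s_i,s_j))_{i,j}\propto\mathrm{Id}$.

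What your argument does give, once made precise, is scale invariance. Set $G(u,v):=\langle Qk(\cdot,u),Qk(\cdot,v)\rangle_{\mathcal H_k}$ and note $(Qk(\cdot,s))(u)=G(s,u)$. The law of $p_s$ depends on $(\alpha,k)$ only through $\alpha^2G$. The numerator of \eqref{eqn:Phi-definition} equals $G(s,s)-\E[\int_S G(s,u)p(u)\,\mu(du)]-\alpha^2\,\E[\Var_p(G(s,\cdot))]$. Hence $\Phi$ is unchanged under $G\mapsto cG$, $\alpha^2\mapsto\alpha^2/c$, and $\mathrm{IE}(s)$ depends on $(\eta,\sigma)$ only through $G$ up to a positive scalar. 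This gives invariance precisely when $\mu$ pins down $G$ up to scale, for example:
\begin{enumerate}
\item[(a)] $|S|=2$, where $\mu_1Qk(\cdot,s_1)+\mu_2Qk(\cdot,s_2)=Q\bar k=0$;
\item[(b)] uniform finite $\mu$ with exchangeable $K=a\,\mathrm{Id}+b\,\mathbf 1\mathbf 1^{\mathsf T}$, where $G=a{\bf Q}$.
\end{enumerate}

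For $|S|\ge 3$ it fails in general. Take uniform $\mu$ on three signals whose payoff densities have disjoint supports and $\sigma^{-2}$-weighted squared norms $1,1,\lambda$. Then $K=\mathrm{diag}(1,1,\lambda)$, and
\[
\|Qk(\cdot,s_1)-Qk(\cdot,s_2)\|_{\mathcal H_k}^2=2,
\qquad
\|Qk(\cdot,s_1)-Qk(\cdot,s_3)\|_{\mathcal H_k}^2=\frac{1+5\lambda}{2+\lambda}.
\]
For $\lambda\neq 1$, no rescaling of $\alpha^*$ restores the symmetric law, and the equivariance used in Theorem~\ref{thm: scalar_equilibrium_equation} breaks. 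So your fallback is not optional. The statement needs an added hypothesis that $G$ is determined by $\mu$ up to scale, and neither your Step~1 nor the paper's one-line justification supplies it.
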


Corollary~\ref{cor: invariance of info efficiency} is the price-discovery counterpart of the standard
complete-markets invariance result in asset pricing.  In a complete Arrow--Debreu economy, equilibrium
risk sharing is implemented by a state-price kernel: once state prices are pinned down, the value of
any contingent claim is obtained by the same linear functional, and thus depends on payoffs only
through that kernel---not on how particular payoff risks are distributed across agents or repackaged
across traded securities; see, e.g., \cite{bodie2021investments}. Likewise here, the \emph{aggregate}
informativeness of the entire state-price schedule is governed by the equilibrium posterior mixing
kernel generated by the canonical game, and hence depends only on the \emph{aggregate} dimension of private
uncertainty---not on which informed types possess which particular payoff-relevant details.

Empirically, in liquid options markets, informed traders can shift expression across contracts and
synthetics, relocating observed price impact across strikes and tenors without materially changing
the information content of the \emph{entire} option surface.  Accordingly, when surface-level
informativeness \emph{does} move as flow rotates across contracts, the frictionless benchmark of
Corollary~\ref{cor: invariance of info efficiency} suggests the presence of impediments to
cross-contract substitution (e.g., margin or position limits, segmentation, dealer balance-sheet
constraints).

\section{Conclusion}
\label{sec: conclusion}

Unifying the complete-markets viewpoint of \cite{arrow1954existence} with the strategic trading mechanism of
\cite{kyle1985continuous}, we have developed a theory of informed trading across a continuum of contingent-claim
markets. The model accommodates private information about arbitrary aspects of the payoff distribution.
Despite this generality, the model remains parsimonious: a single endogenous scale parameter delivers closed-form
characterizations of price discovery in the infinite-dimensional trading environment.

Our framework systematically characterizes the economic channel through which information is transmitted across
assets---a result that lies beyond single-asset, or more generally finite-dimensional, formulations.
Equilibrium informed demand takes the form of familiar option-trading strategies, thereby bridging market microstructure
theory and empirically observed trading practices.
Our characterization of cross-market price impact maps order flow in one claim into price responses across others and
produces tractable, testable restrictions for derivatives markets.

We hope this framework may serve as a foundation for future work in informed-trading theory.
Natural next steps include dynamic extensions, risk aversion and inventory frictions, imperfect
competition and multiple informed traders, incomplete menus of traded claims, and econometric identification and
estimation of cross-impact objects from high-frequency order flow. More generally, the present analysis suggests
that a general theory of price discovery for contingent claims should be organized around (i) what information is
economically relevant and (ii) how market design and the traded payoff span determine the way that information is
encoded and revealed in prices.

%\clearpage
%\newpage
%\interlinepenalty=10000
\bibliographystyle{chicago}
\bibliography{main}
%\clearpage

%\fi

\end{document}